\theoremstyle{plain}
\newtheorem{theorem}{Theorem}[section]
\newtheorem{corollary}[theorem]{Corollary}
\newtheorem{lemma}[theorem]{Lemma}
\theoremstyle{definition}
\newtheorem{definition}[theorem]{Definition}
\newtheorem{example}[theorem]{Example}
\newtheorem{assumption}[theorem]{Assumption}
\DeclareMathOperator{\cav}{cav}
\title{Bayesian Persuasion with Mediators\footnote{We are grateful to Alexander Bloedel, Alessandro Pavan, and seminar participants at Bocconi University, Stony Brook Workshop on Strategic Communication and Learning, and Tel Aviv University for their comments.}}
\author{ \large Itai Arieli\thanks{Technion. Email:  iarieli@technion.ac.il. Itai Arieli was partially supported by the Israel Science Foundation grant \#2030524.} \ \ \ \ Yakov Babichenko\thanks{Technion. Email:  yakovbab@technion.ac.il. Yakov Babichenko was partially supported by the United States - Israel Binational Science Foundation BSF grants \#2018397 and \#2021680.} \ \ \ \ Fedor Sandomirskiy\thanks{Caltech. Email:  fsandomi@caltech.edu. Fedor Sandomirskiy thanks Linde Institute at Caltech and the National Science Foundation (grant  CNS 1518941)
for their support.}}
\date{}
\newcommand{\argmax}{{\mathrm{argmax}}}
\newcommand{\supp}{{\mathrm{supp}}}
\newcommand{\dd}{{\mathrm{d}}}
\newcommand{\p}{{p}}
\definecolor{ForestGreen}{rgb}{.13,.54,.13}
\definecolor{violet}{cmyk}{0.79,0.88,0,0}
\newcommand{\fed}[1]{{\color{ForestGreen}{(\textbf{Fedor:} #1)}}}
\newcommand{\ed}[1]{{{\color{red} {#1}}}}
\newcommand{\fed}[1]{}
\newcommand{\yakov}[1]{}
\newcommand{\ed}[1]{#1}
\begin{document}

\maketitle

\begin{abstract}
An informed sender communicates with an uninformed receiver through a sequence of uninformed mediators; agents' utilities depend on receiver's action and the state. For any number of mediators, the sender's optimal value is characterized. For one mediator, the characterization has a geometric meaning of constrained concavification of sender's utility, optimal persuasion requires the same number of signals as without mediators, and the presence of the mediator is never profitable for the sender. Surprisingly, the second mediator may improve the value but optimal persuasion may require more signals.
\end{abstract}

\section{Introduction}

Strategic information transmission has been studied
for decades in sender--receiver models including cheap talk (e.g.,   Crawford
and Sobel \cite{crawford1982strategic}), Bayesian persuasion (Aumann et al.~\cite{aumann1995repeated}; Kamenica and
Gentzkow \cite{kamenica2011bayesian}), models with multiple senders (e.g., Gentzkow and Kamenica \cite{gentzkow2016competition}; Bhattacharya and Mukherjee \cite{bhattacharya2013strategic}), and settings with multiple receivers
(e.g., Arieli and Babichenko \cite{arieli2019naive}; Alonso and Camara \cite{alonso2016persuading}; Arieli et al. \cite{arieli2021feasible}). The classic
sender--receiver model focuses on a two-agent interaction: the sender holds some private information and decides what information to communicate to the receiver. The receiver then takes an action based on the information {she} received from the sender and this action affects the utility of both the sender and the receiver.
At this point, the literature has developed a quite complete understanding of strategic information transmission where the sender communicates with the receiver directly.

In many realistic scenarios, an informed party communicates with an uninformed {decision-maker indirectly through mediators.} {For example, the board of directors shares information with the CEO, the CEO transmits it to a team leader, who then disseminates it among a group of engineers; however, the  CEO and the team leader, playing the role of mediators, may pursue their own goals---e.g., shaped by their performance metrics---and so the original transmitted information gets strategically distorted.}   Politicians share information with {mass media, which may have its own political agenda and decide to publish only that part of the information that conforms to its agenda, and then---in authoritarian regimes---this content may be further filtered or even blocked by censorship before reaching the public.} 
{In supply chains,} producers share information with vendors who partially disclose it to buyers, yet the vendors' incentives may be misaligned with the producers' incentives, e.g.,  if they split warranty responsibilities and revenue unequally.

The goal of this work is to study strategic communication in the presence of mediators and hence to go beyond the standard sender--receiver setting, where the information flow is controlled by one agent. {We extend the standard Bayesian persuasion model by adding a sequence of $n$ mediators $M_1,\ldots,M_n$ between the sender $S$ and the receiver $R$:
$$S\to M_1\to M_2\to\ldots \to M_n\to R.$$
The set of states $\Omega$ is finite and {the only agent observing the realized state $\omega\in\Omega$ is the sender;
mediators can screen only the information received from their predecessor and cannot generate new information. 
Each mediator has her own objective, which may differ from the objectives of the sender and the other mediators; it determines which information the mediator decides to pass on to her successor and which information to conceal.
Before the state is realized, the sender commits to an information revelation policy by which she will reveal information to the first mediator $M_1$; then $M_1$, observing the policy chosen by the sender, selects the policy that she will use to transmit the information to $M_2$, and so on; the last mediator $M_n$ selects a policy that will be used to reveal information to the receiver. {Once the sender and all the mediators have chosen their policies,} the state is realized and the information is revealed {sequentially} according to {these policies.} {Finally,} the receiver takes an action, and {each agent receives her utility,} which depends on the action and the realized state.}}
What is the sender's optimal value, i.e., the best payoff to the sender in a subgame perfect equilibrium? How is this {value} affected by the presence of the mediators? 
We propose a general tractable geometric approach to answer these questions.

As is common in literature on Bayesian persuasion, we let the indirect utility function of the sender  be the function $v_S:\Delta(\Omega)\rightarrow\mathbb R$, where $v_S(p_R)$ is the expected payoff to the sender when  the  {receiver holding a posterior belief $p_R\in \Delta(\Omega)$ plays her best reply action. The indirect utility $v_{M_i}$ of a mediator $M_i$ is defined similarly.} 

\ed{Our characterization of the sender's optimal value relies on the following notion. 
We say that beliefs $q_1,\ldots,q_{{|\Omega|}} \in \Delta(\Omega)$ are \emph{affine dominating} with respect to a function $f:\Delta(\Omega)\to \mathbb{R}$ if the hyperplane 
passing through the collection of points $\big(q_k,\,f(q_k)\big)_{k=1,...,{|\Omega|}}$ lies above the graph of $f$ in the region {given by the convex hull of $(q_1,...,q_{{|\Omega|}})$.}}
\smallskip

{For $n=1$ mediator, we demonstrate that the
sender's optimal value is given by the constrained concavification of her indirect utility $v_S$, where the {beliefs are constrained to be} affine dominating with respect to the utility function $v_{M_1}$ of the mediator (Theorem~\ref{theorem:main}); moreover,  $|\Omega|$ signals turn out to be enough for optimal persuasion.
This result shows that the concavification formula derived by Kamenica and Gentzkow~\cite{kamenica2011bayesian} for the no-mediator case can be generalized in the presence of a mediator by adding extra constraints, and that optimal persuasion with a mediator requires the same number of signals as direct persuasion.
Since the constraints can only decrease the optimal value, the presence of one mediator lowers the sender's value compared to direct persuasion.}

{For $n\geq 2$ mediators, there are new phenomena. Similarly to the one-mediator case, the sender can guarantee the constrained concavification of $v_S$, where {the posteriors are required to be} affine dominating with respect to the utility functions of all the mediators. However, the sender can improve upon this guarantee by taking into account the cancellation of mediators' incentives. The presence of successors can eliminate some of the predecessors' profitable deviations; e.g.,  information that a single mediator $M$ would find profitable to garble may be perfectly transmitted by $M$ as her garbling may trigger extra garbling by subsequent mediators, thereby eliminating $M$'s profit.}

{In Theorem~\ref{th_n_mediators}, we characterize the sender's optimal value for any number $n$ of mediators. The characterization remains geometric but becomes more involved compared to the case of one mediator; it requires an analog of the affine domination defined recursively and imposes constraints in the space $\Delta\big(\Delta(\Omega)\big)$ rather than $\Delta(\Omega)$. Despite these complications, the characterization remains tractable and useful for solving concrete problems. We find an explicit solution in an example with two mediators and see that the optimal persuasion may require more than $|\Omega|$ signals and that adding the second mediator may increase the sender's value compared to the one-mediator case.}

\subsection{Related work}\label{sec:related}
\ed{
\paragraph{Mediated persuasion.} Independently and concurrently, Zapechelnyuk~\cite{zapechelnyuk2022sequential} studied a model that is analogous to the single-mediator version of ours. 
In contrast to our paper, his goal was not to characterize the sender's optimal policy but to compare this setting to the one where the mediator --- instead of garbling --- can add more information; see a model by Li and Norman~\cite{li2018sequential} discussed below.
}

\paragraph{Persuasion of a rationally inattentive receiver.} {The one-mediator version of our model is related to the recent literature on Bayesian persuasion of a rationally inattentive receiver (Wei~\cite{wei2020persuasion}, Bloedel and Segal~\cite{bloedel2020persuading}). In this literature, the sender--receiver communication is direct, but the receiver incurs attention costs given by a convex function such as entropy or quadratic loss, which may incentivize her to garble the information obtained from the sender before processing it. The papers most related to ours are those of Lipnowski et al.~\cite{lipnowski2020attention,lipnowski2020optimal}. In their model, the inattentive receiver can be split into a mediator--receiver pair, where the mediator does the garbling, the receiver takes an action, and both get the same utility. \ed{Our analysis of the single-mediator case  relies on arguments similar to those of~\cite{lipnowski2020attention,lipnowski2020optimal}; see the discussion after Theorem~\ref{theorem:main}.} 
 } 

\paragraph{Mediated cheap talk and similar models.} {Kosenko~\cite{kosenko2018mediated} considers a problem of mediated information transmission with one mediator, who, unlike in our model, selects her policy simultaneously with the sender. In other words, the mediator decides on the interpretation of the sender's signal without being sure how this signal is going to be generated and that, in particular, whether the sender is going to trick her. This feature brings the problem closer to the cheap talk model of Crawford and Sobel~\cite{crawford1982strategic} (communication without  commitment) than to the Bayesian persuasion model of Kamenica and Gentzkow~\cite{kamenica2011bayesian} (communication with commitment). Like the cheap talk model, Kosenko's model always admits a non-informative equilibrium, which is not the case in our setting. Consistent with the common wisdom that cheap talk is less tractable than persuasion, Kosenko's model can be analyzed comprehensively only for a binary state under the restrictive assumption of binary signals. By contrast, our model turns out to be tractable for an arbitrary number of states, any number of mediators, and without any restriction on the signaling policies that can be used.

Several papers deal with concrete problems of mediated communication. Kuang et al.~\cite{kuanghierarchical} consider a model of mediated persuasion within an organization, where the receiver's decision is binary and mediators are driven by reputation concerns. Ivanov~\cite{ivanov2010communication} and Ambrus~\cite{ambrus2013hierarchical} introduce mediators to the classic cheap talk model of Crawford and Sobel with quadratic utilities.  Qian~\cite{qian2020jmp} analyzes a one-mediator model of censorship with partial commitment.
Levkun~\cite{levkun2020mediation} models elite--media--public information transmission where public takes a binary action.
In all these papers, unlike in ours, tractability comes at the cost of a particular functional form of utilities and restrictions on the feasible signaling policies and actions.} 

\paragraph{Multi-sender persuasion.} Within the literature on Bayesian persuasion with multiple senders, the model closest to ours is the model of sequential persuasion  of Li and Norman~\cite{li2018sequential} and Wu~\cite{wu2020essays}. Informed senders move sequentially and each can add extra information to the information already transmitted to the receiver by predecessors. 
In our paper, the dynamics and incentives are different since only the first agent (the sender) provides information, while all the successors (the mediators) can only garble it. Despite this difference, both models  share some similarity in recursive formulas for the first mover's value and both satisfy a version of the revelation principle. In Appendix~\ref{sect_partial_order}, we explain this similarity by embedding both models into a general class of sequential games over partially ordered sets. For simplicity, we do not pursue this general perspective in the rest of the paper.

%

\paragraph{Persuasion over communication networks.} Our results can be viewed as one of the first steps toward understanding Bayesian persuasion over directed communication networks, where the edges represent the direction of the information flow and intermediaries aim to affect the actions of decision-makers located at the leaves of the network.
Currently, such models are well understood when they have a single edge (single sender, single receiver; see, e.g., Kamenica and Gentzkow~\cite{kamenica2011bayesian}), a star network with outgoing edges from {the central node} (single sender, multiple receivers; see, e.g., Wang~\cite{wang2013bayesian} and Arieli and Babichenko~\cite{arieli2019private}), or a star network with incoming edges to {the central node} (multiple senders, single receiver; e.g., Gentzkow and Kamenica~\cite{gentzkow2016competition}).  However, {little is known about persuasion} over networks with a diameter greater than one. Our {paper}  solves the Bayesian persuasion problem {on the line graph}. Despite the simplicity of this graph, it naturally captures the communication structure in many realistic examples including those mentioned above.
Persuasion problems on such general networks are unlikely to admit an explicit solution. An example from Appendix~\ref{sect_two_direct_mediators} shows that intuitions developed for the line graph fail even for simple trees. For general directed networks 
even basic questions about feasible distributions of beliefs respecting the network hierarchy become involved, as indicated by Brooks et al.~\cite{brooks2019information}. Exceptions admitting tractable characterizations of feasibility are studied by Galperti and Perego~\cite{galperti2019belief} and Babichenko~et~al.~\cite{babichenko2021multi}.

{In undirected networks, where intermediaries broadcast information to all neighbors and information goes back and forth, the complexity of the network helps the sender. Laclau et al.~\cite{laclau2020robust} show that, if there are two disjoint paths between the sender and the receiver, the sender can implement the outcome of direct persuasion. Since the communication is two-way, the sender can check whether the information has been garbled along one of the paths and resend it using the other one.} 

{Persuasion over communication networks (information flows through a sequence of strategic intermediaries) is not to be confused with direct persuasion of receivers, who reside in a network and are either subject to network externalities as in (Candogan~\cite{candogan2019persuasion},  Candogan and Drakopoulos~\cite{candogan2020optimal}) or have access to neighbors' signals (Egorov and Sonin~\cite{egorov2020persuasion}, Kerman et al.~\cite{kerman2021persuading}). Such models naturally arise in the context of voting or product adoption and, from a technical perspective, are closer to the direct persuasion of one ``aggregated'' receiver than to mediated persuasion.
} 

\paragraph{Constrained persuasion.} Our characterization of the sender's value takes the form of constrained concavification. Constrained concavification with a finite number of linear constraints has appeared in the context of Bayesian persuasion and can be handled via the Lagrangian approach; see, e.g., Le Treust and Tomala~\cite{le2019persuasion}, Doval and Skreta~\cite{doval2018constrained}, and Babichenko et al.~\cite{babichenko2020bayesian}. The type of constraints that 
originate in mediated persuasion is different: there is a continuum of them; they are non-linear and, moreover, non-convex.
As we will see in the examples, the sender may end up maximizing over a set having several connected components. 

\section{Model}
First, we give a high-level description of the model and then discuss it in more detail; the nuanced presentation can be found in Appendix~\ref{app_perfect}.

A sender $S$ communicates with a  receiver~$R$ through a sequence of $n$ mediators $M_1,\ldots,M_n$:
$$    S \to M_1 \to M_2\to\ldots\to M_n \to R.$$
There is a {finite set of states} $\Omega$ and a random state $\omega\in \Omega$ is distributed according to $p\in \Delta(\Omega)$. {The distribution $p$ is the agents' common prior.} The only agent observing the realization of the state is the sender. The sender signals some information about $\omega$ to the first mediator $M_1$, who, in her turn, transmits some information to $M_2$, and so on; the last mediator $M_n$ sends a signal to the receiver. Agents  commit to their signaling policies sequentially  starting from the sender and so the policies selected by predecessors can affect successors' choices.  The commitment is public; hence, each agent, including the receiver, knows  how to interpret the signal that she observes.
Once the receiver gets the signal, she selects an action  $a$ from her set of actions $A$. The chosen action $a$ and the realized state $\omega$ determine the payoffs to all the agents. We denote their utility functions by  $u_S$, $u_{M_1},\ldots, u_{M_n}$, $u_R\,:\,\,A\times \Omega\rightarrow \mathbb{R}$. 
Technical assumptions are  imposed on indirect utilities and are discussed below together with other details of the model.
\medskip

{Consider an agent observing a signal from a certain set of signals $S_{\mathrm{in}}$. Her signaling policy is a map $f:\,S_{\mathrm{in}}\to\Delta(S_{\mathrm{out}}) $ assigning a distribution of a signal to be sent as a function of the observed one.
By definition, a signaling policy can only garble the information contained in the observed signal. The agent is free to choose both $S_{\mathrm{out}}$ and $f$, i.e., what to tell and how, but cannot affect the observed signals $S_{\mathrm{in}}$ and treats this set as given.
We assume that the sets of signals are measurable spaces without specifying their exact nature; e.g., these can be binary messages (the project is worth implementing: yes/no), a finite or countably infinite collection of verbal descriptions, a collection of real numbers (costs and benefits  of implementing the project), or a graph of a function (estimated demand for the final product as a function of its price).
\smallskip

 The sender observes the realization of $\omega$ and so the set $\Omega$ plays the role of her set of observed signals.
Consequently, the sender selects a set of signals $S_S$ and a  map $f_S:\,\Omega\rightarrow\Delta(S_S)$.  Mediator $M_1$ learns the sender's policy choice and selects a set of signals $S_{M_1}$ and a map $f_{M_1}:\, S_S\to \Delta(S_{M_1})$. Inductively, for $i\leq n$, mediator $M_i$ selects $S_{M_i}$ and $f_{M_i}:\, S_{M_{i-1}}\to \Delta(S_{M_{i}})$ depending on the policies $f_S,f_{M_1},\ldots,f_{M_{i-1}}$ of the predecessors. Finally, the receiver, who is aware of all these policy choices, selects a function $f_R:\, S_{M_n}\to A$ specifying how her action depends on the signal sent by the last mediator. 

Once all the agents have decided on their policies $f_S,f_{M_1},\ldots, f_{M_n},f_R$, the state $\omega$ is realized and the signals $s_S, s_{M_1},\ldots, s_{M_n}$ together with the action $a$ are generated sequentially. In other words,
{a profile} of policies combined with the prior $p$ induce the joint distribution of the state $\omega$, the signals $s_S,s_{M_1},\ldots,s_{M_n}$, and the action $a$. The resulting payoffs to  the sender, mediators, and the receiver are given by the expected values of their utilities $u_S(a,\omega)$, $u_{M_1}(a,\omega),\ldots, u_{M_n}(a,\omega)$, and $u_R(a,\omega)$.
\medskip

The above description defines an $(n+2)$-player game. We are interested in its subgame perfect equilibria, which are formally defined in  Appendix~\ref{app_perfect}. The definition is standard apart from a refinement needed to handle degenerate problems with $n\geq 2$ mediators: we assume that no mediator garbles information received from predecessors unless garbling leads to a strictly higher payoff. For example, under this refinement, a mediator whose utility does not depend on the receiver's action never affects the flow of information and can be eliminated.} 

Our goal is to determine the optimal expected payoff that the sender can achieve in a subgame perfect equilibrium. 
{As we will see, {under a mild technical assumption,} the problem is well-defined since the set of subgame perfect equilibria is non-empty and, moreover, there exists a sender's optimal equilibrium.

\subsection{Indirect utilities and technical assumptions}\label{sec_assumpt} As in the standard model of Bayesian persuasion  without mediators \cite{kamenica2011bayesian}, the receiver's posterior belief   incorporates all the information from her signal that is relevant for the action choice. This observation simplifies the analysis: it is enough to keep track of induced beliefs only and  represent the problem via indirect utility functions expressing agents' payoffs as functions of the receiver's belief.

For any belief $q\in\Delta(\Omega)$, the indirect utility of the sender is defined by  $$v_S(q)=\sum_{k\in \Omega} q_k\cdot u_S(a(q),k),\quad\mbox{where}\quad  a(q)\in\argmax_{a\in A} \sum_{k\in \Omega}q_k\cdot u_R(a,k);$$ i.e., $v_S(q)$ is the expected utility of the sender when the receiver picks her best-reply action corresponding to a belief $q$. The indirect utilities $v_{M_1},\ldots,v_{M_n}$ of mediators are defined similarly. If the receiver's best reply is not unique, a particular selection is fixed endogenously.
To avoid technicalities in the body of the paper, we impose the following assumption.
\begin{assumption}\label{asumpt} The receiver has a best-reply selection such that the corresponding  indirect utilities of all the mediators are continuous  in $q\in\Delta(\Omega)$, and the indirect utility of the sender is upper semicontinuous and bounded.
\end{assumption}
The assumption ensures the existence of an equilibrium and allows us to focus on the essence of the problem. On the other hand, this assumption is quite restrictive as it implicitly excludes all the problems where the set of the receiver's actions is discrete and thus the best-reply action cannot change continuously. Assumption~\ref{asumpt} can be dropped at the cost of considering $\varepsilon$-equilibria (see Appendix~\ref{app_perfect}) and replacing maxima by suprema. In Appendices~\ref{app_one_receiver} and~\ref{app_n_mediators}, we formulate and prove extended versions of our results allowing for discontinuities.

Note that Assumption~\ref{asumpt} is satisfied if $A$ is compact, utilities are continuous in $a$, and the receiver's best reply is unique and 
changes continuously in the belief. 
 For instance, $A=\Delta(\Omega)$, the receiver's utility $u_R$ is the quadratic scoring rule, and utilities of all other agents are continuous (the receiver is a market expert aiming to learn the state $\omega$ of a firm based on information released by its PR department, which itself has access to information approved only by the CEO). 
}


\section{One Mediator}\label{sec_one_mediator}
{Consider a problem with one mediator $M_1=M$ and indirect utilities $v_S$ and $v_M$ satisfying Assumption~\ref{asumpt}. The one-mediator case happens to be special as the sender's optimal value admits a simple geometric characterization that does not extend to $n\geq 2$ mediators.}
\ed{This characterization is formulated in terms of affine domination.
\begin{definition}\label{def:see}
 Beliefs $q_1,\ldots,q_{{|\Omega|}} \in \Delta(\Omega)$ are \emph{affine dominating with respect to a function $f:\Delta(\Omega)\to \mathbb{R}$}  if
$$\sum_{k=1}^{{|\Omega|}} \alpha_k\cdot f\big(q_k\big)\geq f\left(\sum_{k=1}^{{|\Omega|}} \alpha_k\cdot q_k\right)$$
for every collection of weights $\alpha=(\alpha_1,\ldots,\alpha_{|\Omega|})\in \Delta(\Omega)$. 
\end{definition}
}

{Let ${D}\subset \Delta(\Omega)^\Omega$ be the set of all {$q_1,\ldots,q_{|\Omega|}$}} that are affine dominating with respect to the indirect utility $v_M$ of the mediator.  
Define the {\emph{constrained concavification}} $\cav_{{D}}[v]:\Delta(\Omega)\rightarrow\mathbb{R}$ of an {upper semicontinuous} function $v:\Delta(\Omega)\rightarrow\mathbb{R} $ 
with respect to a set {${D}\subset \Delta(\Omega)^\Omega$} as follows:  
\begin{align}\label{eq:max}
\cav_{{D}}\big[v\big](p)=\max \left\{ \sum_{k=1}^{{|\Omega|}} \alpha_k \cdot v(q_k) \ \Big| \ 
(q_1,\ldots,q_{{|\Omega|}})\in {D}, \ \alpha\in \Delta(\Omega),\ 
\sum_{k=1}^{{|\Omega|}}\alpha_k q_k=p \right\}.
\end{align}
The maximization is over a non-empty set since  ${D}$ contains  $(p,\ldots,p)$.
As we see, in the constrained concavification, only convex combinations of points from ${D}$ are allowed, whereas in the standard concavification, the maximum is taken over all  $(q_1,\ldots,q_{{|\Omega|}})\in \Delta(\Omega)^\Omega$. 
\begin{theorem}\label{theorem:main}
For any prior $p \in\Delta(\Omega),$ the {sender's optimal payoff is equal to} $\cav_{{D}}\big[v_S\big](p)$.
\end{theorem}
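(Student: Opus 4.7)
The plan is to prove matching lower and upper bounds on the sender's optimal SPE payoff, in both cases reducing the problem to the choice of a distribution $\tau$ over receiver posteriors (a standard reduction in Bayesian persuasion under Assumption~\ref{asumpt}).

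For the lower bound, I would fix maximizers $(q^*_1,\ldots,q^*_{|\Omega|})\in D$ and $\alpha^*\in\Delta(\Omega)$ in~\eqref{eq:max}; these exist because $\Delta(\Omega)^\Omega\times\Delta(\Omega)$ is compact, $D$ is closed (by continuity of $v_M$), and $v_S$ is upper semicontinuous. The proposed equilibrium has the sender use $|\Omega|$ signals to induce mediator posteriors $q^*_k$ with marginal probabilities $\alpha^*_k$ (Bayes-consistent since $\sum_k\alpha^*_k q^*_k=p$), the mediator pass every signal through verbatim, and the receiver play the best reply selected in Assumption~\ref{asumpt}. To check the mediator's incentive, consider any alternative mediator policy; the joint distribution $\pi_{kj}$ of sender signal $k$ and mediator signal $j$ yields receiver posterior $\bar q_j=\sum_k(\pi_{kj}/\beta_j)\,q^*_k$ where $\beta_j=\sum_k\pi_{kj}$. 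Applying the affine-dominating inequality with weights $(\pi_{kj}/\beta_j)_k$ at each $\bar q_j$ and summing over $j$ gives
\begin{equation*}
\sum_j\beta_j v_M(\bar q_j)\ \le\ \sum_{j,k}\pi_{kj}\,v_M(q^*_k)\ =\ \sum_k\alpha^*_k v_M(q^*_k),
\end{equation*}
which is exactly the mediator's pass-through payoff. The refinement then selects pass-through, and the sender obtains $\sum_k\alpha^*_k v_S(q^*_k)=\cav_D[v_S](p)$.

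For the upper bound, take any SPE and let $\tau_S,\tau_R$ be the induced distributions of mediator and receiver posteriors. A revelation-principle step shows we may assume $\tau_S=\tau_R=:\tau$: if the sender were to directly signal $\tau_R$ to the mediator, then since mean-preserving contractions of $\tau_R$ form a subset of those of $\tau_S$, the mediator's maximal continuation payoff would be unchanged and pass-through would be one of her best replies, which the refinement selects. The mediator's weak preference against all her feasible garblings of $\tau$ is then equivalent to $\supp(\tau)$ being affine dominating with respect to $v_M$: the forward direction reuses the computation of the previous paragraph, and for the converse, if the defining inequality fails at some $\bar q=\sum_k\gamma_k q_k\in\conv(\supp(\tau))$, the perturbation $\tau_\epsilon=\tau-\epsilon\sum_k\gamma_k\delta_{q_k}+\epsilon\delta_{\bar q}$ is, for small $\epsilon>0$, a mean-preserving contraction of $\tau$ strictly improving the mediator's payoff.

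It remains to shrink $\supp(\tau)$ to at most $|\Omega|$ beliefs so the result is a valid witness for $\cav_D[v_S](p)$. I would use a Carathéodory-type perturbation: if $|\supp(\tau)|>|\Omega|$, the atoms $\{q_i\}$ are affinely dependent in $\Delta(\Omega)$, so there exist scalars $\lambda_i$ not all zero with $\sum_i\lambda_i=0$ and $\sum_i\lambda_i q_i=0$. Shift the weights of $\tau$ in the direction $\pm\lambda$ with the sign chosen so that $\sum_i\lambda_i v_S(q_i)\ge 0$, and push until some weight reaches zero. The new distribution has mean $p$, weakly higher sender payoff, and a support contained in the original one; such a subset inherits the affine-dominating property, as seen by zero-extending the weights in the defining inequality. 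Iterating produces a distribution supported on at most $|\Omega|$ affine-dominating beliefs with sender payoff $\ge V$, proving $V\le\cav_D[v_S](p)$. The most delicate step is the revelation-principle reduction, which requires ruling out mediator best replies that tie with pass-through's payoff but induce a distribution over actions worse for the sender; this is exactly where the garbling refinement of Appendix~\ref{app_perfect} is invoked.
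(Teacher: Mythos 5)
Your overall architecture coincides with the paper's (the paper proves Theorem~\ref{theorem:main} via the generalized Theorem~\ref{theorem:appendix_one_mediatior} in Appendix~\ref{app_one_receiver}): a lower bound in which the sender splits to affine-dominating posteriors and the Jensen-type computation plus the full-revelation refinement make pass-through a mediator best reply, and an upper bound reducing any equilibrium to a receiver-belief distribution $\tau$ with mean $p$ that the mediator has no incentive to garble, followed by a characterization of such $\tau$ via affine domination of $\supp(\tau)$ and a reduction to $|\Omega|$ support points. However, two of your steps are valid only when $\tau$ is finitely supported, and nothing guarantees this: the sender's signal space is an arbitrary measurable space, so the equilibrium distribution of receiver posteriors can be non-atomic or have infinite support. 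First, in your converse direction (``no profitable garbling $\Rightarrow$ $\supp(\tau)$ affine dominating''), the perturbation $\tau_\epsilon=\tau-\epsilon\sum_k\gamma_k\delta_{q_k}+\epsilon\delta_{\bar q}$ is not a positive measure when $\tau(\{q_k\})=0$, which is exactly the situation for a non-atomic $\tau$ whose support contains the violating points $q_k$. The paper's Lemma~\ref{lm_FR_as_a_restriction_on_support} repairs this by replacing each $\delta_{q_k}$ with the normalized restriction of $\tau$ to a small neighborhood $U_k$ of $q_k$ (using continuity of $v_M$ to preserve the strict violation), which yields a component with bounded density with respect to $\tau$ that can legitimately be condensed to its barycenter.

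Second, your Carath\'eodory weight-shifting to reach at most $|\Omega|$ atoms presupposes that $\tau$ already has finitely many atoms; for infinite or continuous support there is no finite collection of weights to shift and no guarantee the iteration terminates. The paper's Lemma~\ref{lm_two_concavifications_are_the_same} handles the general case by maximizing the upper semicontinuous functional $\mu'\mapsto\mathbb{E}_{\mu'}[v_S(q)]$ over the weakly compact convex set of measures supported on $\supp(\tau)$ with mean $p$, invoking Bauer's principle and the characterization (Winkler) of the extreme points of a moment-constrained set of measures as combinations of at most $|\Omega|$ point masses. Your finite-dimensional argument is the right idea restricted to that extreme-point class, but it cannot by itself bridge from a general $\tau$ to $\cav_{{D}}[v_S](p)$. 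Everything else---the lower-bound construction, the role of the refinement in the revelation step, and the observation that subsets of affine-dominating sets remain affine dominating---is sound and matches the paper.
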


A subgame perfect equilibrium where the sender's payoff is equal to $\cav_{{D}}\big[v_S\big](p)$ can be constructed as follows. 
The sender's policy induces the optimal mediator's beliefs $q_1,\ldots, q_{|\Omega|}$ from~\eqref{eq:max} with the respective probabilities $\alpha_1,\ldots, \alpha_{|\Omega|}$. \ed{By the result of Lipnowski~et al.~\cite[Lemma 1]{lipnowski2020attention}, the affine-domination property implies that the mediator has no incentive to garble the information and so she reveals it fully to the receiver as one of her best replies.}
\begin{corollary}\label{cor:k-signals}
There exists a sender's optimal subgame perfect equilibrium in which the sender uses at most {$|\Omega|$} different signals to persuade the mediator and the mediator fully reveals the sender's signal to the receiver, i.e., {the mediator does not garble the information on the equilibrium path.}
\end{corollary}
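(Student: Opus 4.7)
The plan is to construct an explicit optimal equilibrium from a maximizer of the constrained concavification and verify that it has the required structure. First, I would argue that the maximum in the definition of $\cav_{D}[v_S](p)$ is attained: by Assumption~\ref{asumpt}, $v_M$ is continuous, so the affine-domination condition cuts out a closed subset of $\Delta(\Omega)^{|\Omega|}$; intersecting with the closed Bayes-plausibility constraint $\sum_k \alpha_k q_k = p$ gives a compact feasible set. Since $v_S$ is upper semicontinuous and bounded, the objective $\sum_k \alpha_k v_S(q_k)$ is upper semicontinuous, and a maximizer $(q_1^*,\ldots,q_{|\Omega|}^*,\alpha_1^*,\ldots,\alpha_{|\Omega|}^*)$ exists.

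Next, I would use this maximizer to build an equilibrium. Take $S_S=\{1,\ldots,|\Omega|\}$ and define the sender's policy $f_S:\Omega\to\Delta(S_S)$ by the standard Bayesian-persuasion construction: by Bayes-plausibility $\sum_k \alpha_k^* q_k^* = p$, one can choose $f_S(k\mid\omega)$ so that signal $k$ is sent with total probability $\alpha_k^*$ and the posterior upon observing $k$ is exactly $q_k^*$. Define the mediator's policy to be the identity (full revelation) and the receiver's policy to be a best reply to each incoming signal. Under this profile the sender's expected payoff equals $\sum_k \alpha_k^* v_S(q_k^*)=\cav_{D}[v_S](p)$, which by Theorem~\ref{theorem:main} is the sender's optimal value; consequently no sender deviation is profitable. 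The sender uses at most $|\Omega|$ signals and the mediator fully reveals on the equilibrium path, so the two conclusions of the corollary will follow once we verify that full revelation is indeed a best reply for the mediator.

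The main step is therefore to check the mediator's incentive. Conditional on the signal $k$ she receives, the mediator's belief is $q_k^*$; any garbling policy $\pi(\cdot\mid k)$ she could use pools these signals and induces a distribution of receiver's posteriors $(r_t,\beta_t)$ in which each $r_t=\sum_k \lambda_{t,k}q_k^*$ is a convex combination of $\{q_k^*\}$ with $\sum_t \beta_t\lambda_{t,k}=\alpha_k^*$. Applying Definition~\ref{def:see} to the affine-dominating tuple $(q_1^*,\ldots,q_{|\Omega|}^*)\in D$ with weights $\lambda_t\in\Delta(\Omega)$ gives $v_M(r_t)\le \sum_k \lambda_{t,k} v_M(q_k^*)$; summing over $t$ with weights $\beta_t$ and switching the order of summation yields
\[
\sum_t \beta_t v_M(r_t)\ \le\ \sum_k \alpha_k^*\, v_M(q_k^*),
\]
which is precisely the mediator's payoff under full revelation. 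So full revelation is weakly optimal, and the equilibrium refinement described in the model (no garbling unless strictly profitable) forces the mediator not to garble on the path. This last step is essentially a rephrasing of Lemma~1 of Lipnowski~et~al.~\cite{lipnowski2020attention}, which the paper already invokes.

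The main obstacle I anticipate is bookkeeping for the off-path play required by subgame perfection: after an unexpected sender message, the mediator's (and then the receiver's) play must still be sequentially rational. This can be handled in the usual way by specifying, for every possible sender's policy $f_S$, a best reply of the mediator (e.g., full revelation when it is weakly optimal, otherwise a selection of optimal garbling), followed by the receiver's best reply; this only affects off-path behavior and does not change the payoff computation above.
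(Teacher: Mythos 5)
Your proposal is correct and follows essentially the same route as the paper: construct the equilibrium from a maximizer of $\cav_{D}[v_S](p)$ (which exists by compactness of $D$ and upper semicontinuity of $v_S$), verify via affine domination that full revelation is weakly optimal for the mediator---this is exactly the ``if'' direction of the paper's Lemma~\ref{lm_FR_as_a_restriction_on_support} (equivalently, Lemma~1 of Lipnowski et al.)---and invoke the full-revelation refinement plus off-path best replies. The only cosmetic difference is that your incentive check sums over finitely many garbled signals $t$, whereas a general garbling may have an arbitrary measurable output space; the paper covers this by phrasing the same inequality through the martingale characterization of mean-preserving contractions, and your argument extends verbatim once the sums over $t$ are replaced by integrals.
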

\ed{We will see, for $n\geq 2$ mediators, $|\Omega|$ signals are no longer enough there. It is instructive to compare  Corollary~\ref{cor:k-signals} with the result of  Arieli et al.~\cite{arieli2021feasible} who show that, for two receivers, a binary state, and no mediators, optimal persuasion may require an infinite number of signals.

The equilibrium described above ensures that the sender's optimal payoff is at least $\cav_{{D}}\big[v_S\big](p)$. 
A high-level intuition behind the the upper bound is as follows. By a version of the revelation principle, it is enough to maximize sender's payoff over equilibria in which
the whole garbling is done by the sender and the mediator transmits the information without garbling it. The requirement that the mediator has no incentive to garble boils down to the property of affine domination. For continuous $v_S$ and $v_M$ this intuition can be converted to a formal proof using the technique of Lipnowski~et al.~\cite{lipnowski2020attention}.

 As we mentioned in Section~\ref{sec_assumpt}, discontinuous utilities naturally arise.  Handling discontinuous $v_S$ and $v_M$
 complicates the proof of Theorem~\ref{theorem:main} as equilibria are not guaranteed to exist and we need to deal with $\varepsilon$-equilibria. A version of 
 Theorem~\ref{theorem:main} allowing for discontinuities is formulated and proved  in Appendix~\ref{app_one_receiver}.
 }

{Theorem~\ref{theorem:main} also implies the unconstrained concavification formula of Kamenica and Gentzkow~\cite{kamenica2011bayesian} for the   non-mediated persuasion. This classic setting can be emulated by considering a dummy mediator whose indirect utility is constant.\footnote{We rely on our refinement of subgame perfection that eliminates equilibria where the dummy mediator does not pass on any information to the receiver. In the one-mediator case such equilibria can alternatively be eliminated by the fact that we are interested in the sender's optimal equilibria.} For such a dummy mediator, ${D}=\Delta(\Omega)^\Omega,$ and we obtain that the sender's optimal payoff is equal to the unconstrained concavification $\cav[v_S]=\cav_{\Delta(\Omega)^\Omega}[v_S]$. Thus, adding a non-dummy mediator (${D}\ne\Delta(\Omega)^\Omega$) can be seen as adding constraints to the previously unconstrained concavification.}
\begin{corollary}\label{cor_decrease}
{The sender's optimal value can only decrease after adding a mediator.}
\end{corollary}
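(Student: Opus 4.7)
The plan is to derive the inequality directly from the characterization in Theorem~\ref{theorem:main} and the classical concavification formula of Kamenica and Gentzkow, viewing the no-mediator setting as a degenerate case in which the constraint set is trivial. By Theorem~\ref{theorem:main}, the sender's optimal value in the presence of the mediator is $\cav_{D}[v_S](p)$, where $D\subset\Delta(\Omega)^\Omega$ is the set of tuples $(q_1,\ldots,q_{|\Omega|})$ affine dominating with respect to $v_M$. In the direct persuasion setting (no mediator), the sender's optimal value is $\cav[v_S](p)$, which, as noted in the paragraph preceding the corollary, coincides with $\cav_{\Delta(\Omega)^\Omega}[v_S](p)$ obtained by taking $D=\Delta(\Omega)^\Omega$ (equivalently, by inserting a dummy mediator with constant indirect utility, for which every tuple of beliefs is trivially affine dominating).

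The key step is the monotonicity of constrained concavification in the constraint set: if $D\subseteq D'\subseteq\Delta(\Omega)^\Omega$, then $\cav_{D}[v_S](p)\leq\cav_{D'}[v_S](p)$, because the maximum in \eqref{eq:max} is taken over a smaller feasible set in the first case. Since the constraint set $D$ arising from the mediator's affine-domination requirement is by definition a subset of $\Delta(\Omega)^\Omega$, applying this monotonicity with $D'=\Delta(\Omega)^\Omega$ yields
\[
\cav_{D}[v_S](p)\;\leq\;\cav_{\Delta(\Omega)^\Omega}[v_S](p)\;=\;\cav[v_S](p),
\]
which is precisely the statement of the corollary.

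There is no real obstacle here; the only thing worth flagging is that the argument relies on the equivalence between the direct-persuasion value and the ``dummy-mediator'' value. This equivalence is not entirely automatic because the subgame-perfect refinement in the model could in principle admit equilibria where a dummy mediator blocks all information; however, the refinement discussed in Section~2 (no garbling without strict benefit) together with the fact that we look at the sender's optimal equilibrium rules this out, as already remarked in the paper. Once this point is granted, the corollary reduces to the trivial monotonicity of a maximum in its feasible set.
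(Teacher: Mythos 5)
Your proof is correct and follows essentially the same route as the paper: both derive the corollary from Theorem~\ref{theorem:main} by observing that the no-mediator value equals $\cav_{\Delta(\Omega)^\Omega}[v_S](p)$ (via a dummy mediator with constant indirect utility) and that shrinking the feasible set $D\subseteq\Delta(\Omega)^\Omega$ can only decrease the maximum in~\eqref{eq:max}. Your remark about the tie-breaking refinement for the dummy mediator matches the paper's own footnote on this point, so nothing is missing.
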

\ed{In Section~\ref{sec:more}, we will show that adding one more mediator can benefit the sender.}

\begin{example}
{Consider binary $\Omega=\{0,1\}$ and the sender with indirect utility {depicted} in Figure \ref{fig:sender} (we identify $\Delta (\{0,1\})$ with $[0,1]$).}
  
\begin{figure}[H]
    \centering
    \caption{The indirect utility {$v_S$} of the sender.}
    \label{fig:sender}

\begin{tikzpicture}[line width=2pt, scale=0.8]
    \pgfplotsset{every tick/.append style={semithick, major tick length=5pt, minor tick length=5pt, black}};
    \begin{axis}[
    axis line style = ultra thick,
    axis x line=center,
    axis y line=none,
    no marks]
    \addplot+[smooth,very thick, blue,name path=A,domain=0:0.2] {4-100*(0.2-x)*(0.2-x)};
    \addplot+[smooth,very thick,blue,name path=B,domain=0.2:0.8] {3+cos(deg(10*pi*(x-0.2)/3))};
    \addplot+[smooth,very thick,blue,name path=c,domain=0.8:1] {4-100*(x-0.8)*(x-0.8)};
  \end{axis}
\end{tikzpicture}
\end{figure}
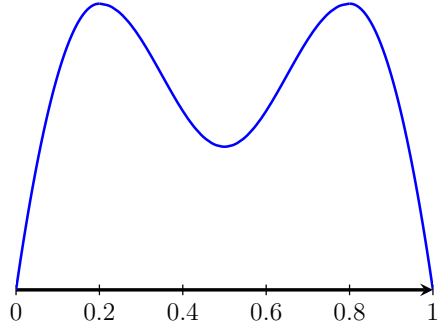

{The unconstrained concavification $\cav[v_S]$ (the sender's optimal value without a mediator) can be constructed as follows.} We connect all pairs of points on the graph of {$v_S$} {by linear segments.} {This results in the shaded region from  Figure \ref{fig:cav}.} The upper {boundary} of this {region is the graph of}~{$\cav[v_S]$.}

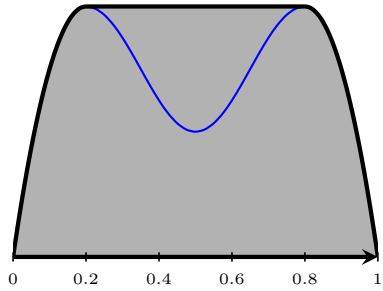
\begin{figure}[H]
    \centering
    \caption{The {region obtained} by connecting all pairs of points {on the graph of $v_S$.}}
    \label{fig:cav}

\begin{tikzpicture}[scale=0.7]
    \pgfplotsset{every tick label/.append style={font=\tiny, very thick}};
    \pgfplotsset{every tick/.append style={semithick, major tick length=5pt, minor tick length=5pt, black}};
    \begin{axis}[axis on top=true,
     axis line style = ultra thick,
     axis x line=center,
    axis y line=none,
    no marks]
    \addplot+[smooth,ultra thick,black,name path=A,domain=0:0.2] {4-100*(0.2-x)*(0.2-x)};
    \addplot+[smooth, thick,  blue,      name path=B,domain=0.2:0.8] {3+cos(deg(10*pi*(x-0.2)/3))};
    \addplot+[smooth,ultra thick,black,name path=C,domain=0.8:1] {4-100*(x-0.8)*(x-0.8)};
    \addplot+[draw=none,         name path=D,domain=0:1] {0};
    \addplot+[ultra thick,black,       name path=E,domain=0.2:0.8] {4};
    
    \addplot+[gray!60] fill between[of=A and D];
    \addplot+[gray!60] fill between[of=E and D];
    \addplot+[gray!60] fill between[of=C and D];
  \end{axis}
\end{tikzpicture}
\end{figure}

{Let us introduce a mediator with the utility function $v_M$ presented in Figure \ref{fig:u3}. }

%
%
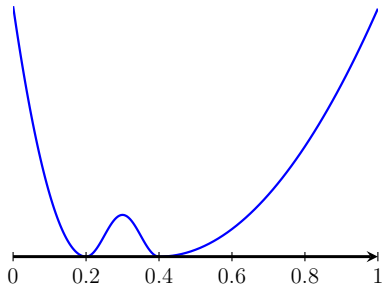
\begin{figure}[H]
    \centering
    \caption{The utility {$v_M$} of the  mediator.}
    \label{fig:u3}

\begin{tikzpicture}[scale=0.7]
\pgfplotsset{every tick/.append style={semithick, major tick length=5pt, minor tick length=5pt, black}};
    \begin{axis}[
     axis on top=true,
     axis line style = ultra thick,
    axis x line=center,
    axis y line=none,
    no marks]
    \addplot+[smooth, very thick, blue,name path=A,domain=0:0.2] {100*(0.2-x)*(0.2-x)};
    \addplot+[smooth,very thick,blue,name path=B,domain=0.2:0.4] {1/3-cos(deg(10*pi*(x-0.2)))/3};
    \addplot+[smooth,very thick,blue,name path=B,domain=0.4:1] {11*(x-0.4)*(x-0.4)};
  \end{axis}
\end{tikzpicture}
\end{figure}
\smallskip

{By Theorem~\ref{theorem:main}, we can find the sender's value as we did in the no-mediator case}, but instead of connecting \emph{all} pairs of points on {the graph of $v_S$}, we {need to} connect only {those pairs that are} affine dominating {with respect to $v_M$.} As a warm-up, assume that the prior is $p=1/2$. The optimal value {without a mediator} is obtained by connecting the pair of points on {the graph of $v_S$} with the beliefs $q_1=0.2$ and $q_2=0.8$. Let us check whether this pair of posteriors is affine dominating with respect to {$v_M$.}

%
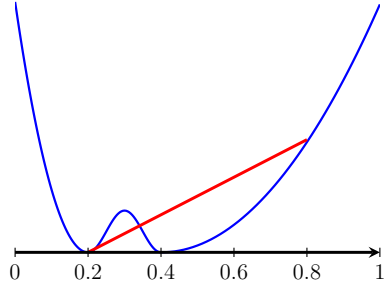
\begin{figure}[H]
    \centering
    \caption{The pair of posteriors $q_1=0.2$ and $q_2=0.8$ on the graph of {$v_M$.}}
    \label{fig:see13}

\begin{tikzpicture}[scale=0.7]
\pgfplotsset{every tick/.append style={semithick, major tick length=5pt, minor tick length=5pt, black}};
    \begin{axis}[
     axis on top=true,
     axis line style = ultra thick,
    axis x line=center,
    axis y line=none,
    no marks]
    \addplot+[smooth,very thick,blue,name path=A,domain=0:0.2] {100*(0.2-x)*(0.2-x)};
    \addplot+[smooth,very thick,blue,name path=B,domain=0.2:0.4] {1/3-cos(deg(10*pi*(x-0.2)))/3};
    \addplot+[smooth,very thick,blue,name path=B,domain=0.4:1] {11*(x-0.4)*(x-0.4)};
    \addplot+[ultra thick,red,domain=0.2:0.8] {3*(x-0.2)};
  \end{axis}
\end{tikzpicture}
\end{figure}
\smallskip

As we see  in Figure \ref{fig:see13}, the posteriors $q_1=0.2 $ {and} $q_2=0.8$ are not affine dominating with respect to {$v_M$} {since} there {are} points {on the graph that lie above the segment.} {We conclude that the optimal utility achieved by the sender without a mediator for $p=1/2$ can no longer be achieved in the presence of the mediator having the utility function $v_M$.} 

{To get more intuition about affine domination,} we {plot} the \emph{set} of points {$q_2$} {such} that a given point {$q_1$} paired with them forms an affine dominating pair with respect to $v_M$. {Figure~\ref{fig:see23} demonstrates this set for ${q_1}=0.15$  (the linear segments between the pairs are traced to visualize the construction).}

\begin{figure}[h]
    \centering
    \caption{{The set of $q_2$ such that the pair $q_1,q_2$ with $q_1=0.15$ is} {affine dominating} with respect to {$v_M$.}}
    \label{fig:see23}

\begin{tikzpicture}[scale=0.8]
\pgfplotsset{every tick/.append style={semithick, major tick length=5pt, minor tick length=5pt, black}};
    \pgfplotsset{every tick label/.append style={font=\tiny}};
    \begin{axis}[
     axis on top=true,
     axis line style = ultra thick,
    axis x line=center,
    xtick={0, 0.15, 0.29, 0.87, 1},
    axis y line=none,
    no marks]
    \addplot+[smooth,very thick,blue,name path=A,domain=0:0.2] {100*(0.2-x)*(0.2-x)};
    \addplot+[smooth,very thick,blue,name path=B,domain=0.2:0.29] {1/3-cos(deg(10*pi*(x-0.2)))/3};
    \addplot+[smooth,very thick,blue,name path=BB,domain=0.29:0.4] {1/3-cos(deg(10*pi*(x-0.2)))/3};
    \addplot+[smooth,very thick,blue,name path=C,domain=0.4:1] {11*(x-0.4)*(x-0.4)};
    \addplot+[draw=none, name path=D, domain=0.15:0.2] {100*0.05*0.05+3*(x-0.15)};
    \addplot+[draw=none, name path=DD, domain=0.2:0.29] {100*0.05*0.05+3*(x-0.15)};
    \addplot+[draw=none, name path=DDD, domain=0.29:0.87] {100*0.05*0.05+3*(x-0.15)};
    \addplot+[draw=none, name path=E, domain=0.15:1] {100*0.05*0.05+4.35*(x-0.15)};
    \addplot+[draw=none, name path=F, domain=0:0.15] {100*0.05*0.05-25*(x-0.15)};
    \addplot+[gray!60] fill between[of=A and F];
    \addplot+[gray!60] fill between[of=A and D];
    \addplot+[gray!60] fill between[of=B and DD];
    \addplot+[gray!60] fill between[of=E and DDD];
  \end{axis}
\end{tikzpicture}
\end{figure}
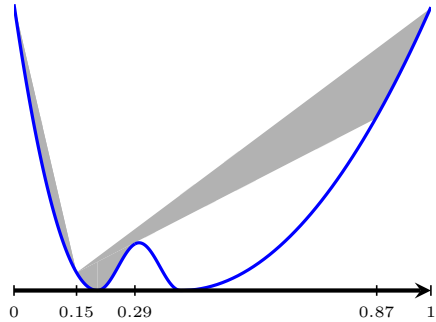
\smallskip

{From Figure \ref{fig:see23},} we deduce that pairing the point $q_1=0.15$ with all the points ${q_2}\in[0,0.29]\cup [0.87,1]$ forms  affine dominating pairs {with respect to~$v_M$.}

{By Theorem~\ref{theorem:main}, to find the sender's optimal value, we need to connect all the pairs of points $q_1,q_2$ on the graph of $v_S$
that are affine dominating with respect to $v_M$; the graph of the value as a function of the prior will then be given by the upper boundary of the resulting region. For example, the point $q_1=0.15$ is to be connected to $q_2\in[0,0.29]\cup [0.87,1]$; see Figure \ref{fig:see3}. By connecting all the affine dominating pairs $q_1,q_2$, we obtain the region depicted in Figure~\ref{fig:sol}.}


\begin{figure}[h]
    \centering
    \caption{{The region obtained by connecting all the pairs of points $q_1,q_2$ that are affine dominating with respect to the mediator's utility with $q_1=0.15$ on the graph of the sender's utility.}}
    \label{fig:see3}

\begin{tikzpicture}[scale=0.8]
\pgfplotsset{every tick/.append style={semithick, major tick length=5pt, minor tick length=5pt, black}};
    \pgfplotsset{every tick label/.append style={font=\tiny}};
    \begin{axis}[
      axis on top=true,
     axis line style = ultra thick,
    axis x line=center,
    xtick={0, 0.15, 0.29, 0.87, 1},
    axis y line=none,
    no marks]
    \addplot+[smooth,very thick, blue,name path=A,domain=0:0.2] {4-100*(0.2-x)*(0.2-x)};
    \addplot+[smooth,very thick,blue,name path=B,domain=0.2:0.29] {3+cos(deg(10*pi*(x-0.2)/3))};
    \addplot+[smooth,very thick,blue,name path=BB,domain=0.29:0.8] {3+cos(deg(10*pi*(x-0.2)/3))};
    \addplot+[smooth,very thick,blue,name path=C,domain=0.8:0.87] {4-100*(x-0.8)*(x-0.8)};
    \addplot+[smooth,very thick,blue,name path=CC,domain=0.87:1] {4-100*(x-0.8)*(x-0.8)};
    \addplot+[draw=none, name path=D,domain=0:0.15] {4-100*0.05*0.05+25*(x-0.15)};
    \addplot+[draw=none, name path=E,domain=0.15:0.29] {4-100*0.05*0.05-1.2*(x-0.15)};
    \addplot+[draw=none, name path=F,domain=0.15:0.87] {4-100*0.05*0.05-0.5*(x-0.15)};
    \addplot+[draw=none, name path=G,domain=0.15:1] {4-100*0.05*0.05-4.4*(x-0.15)};
    
    \addplot+[gray!60] fill between[of=A and D];
    \addplot+[gray!60] fill between[of=A and E];
    \addplot+[gray!60] fill between[of=B and E];
    \addplot+[gray!60] fill between[of=F and G];
    \addplot+[gray!60] fill between[of=CC and G];
  \end{axis}
\end{tikzpicture}
\end{figure}
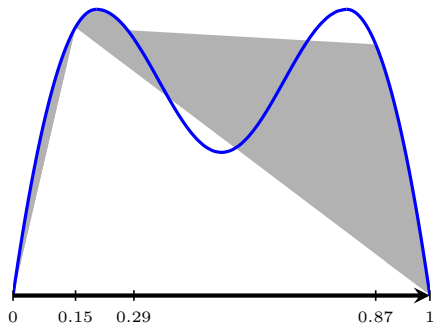


\begin{minipage}[b]{0,45\textwidth}
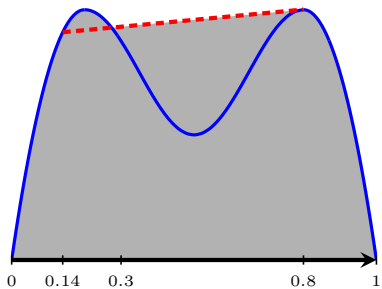
\begin{figure}[H]
    \centering
    \caption{The {region} obtained by connecting all pairs of points {that are affine dominating with respect to the mediator's utility.} {The endpoints of the dashed red segment form the pair of posterior beliefs used by the sender to achieve the corresponding payoff.}}
    \label{fig:sol}
\ifdefined\AEJ
\vskip 0.7cm
\fi
\begin{tikzpicture}[scale=0.7]
\pgfplotsset{every tick/.append style={semithick, major tick length=5pt, minor tick length=5pt, black}};
    \pgfplotsset{every tick label/.append style={font=\tiny}};
    \begin{axis}[
     axis on top=true,
     axis line style = ultra thick,
    axis x line=center,
    xtick={0, 0.14, 0.3, 0.8,1},
    axis y line=none,
    no marks]
    \addplot+[smooth,very thick,blue,name path=A,domain=0:0.2] {4-100*(0.2-x)*(0.2-x)};
    \addplot+[smooth,very thick,blue,name path=B,domain=0.2:0.29] {3+cos(deg(10*pi*(x-0.2)/3))};
    \addplot+[smooth,very thick,blue,name path=BB,domain=0.29:0.8] {3+cos(deg(10*pi*(x-0.2)/3))};
    \addplot+[smooth,very thick,blue,name path=C,domain=0.8:0.87] {4-100*(x-0.8)*(x-0.8)};
    \addplot+[smooth,very thick,blue,name path=CC,domain=0.87:1] {4-100*(x-0.8)*(x-0.8)};
    \addplot+[draw=none, name path=D,domain=0:0.15] {4-100*0.05*0.05+25*(x-0.15)};
    \addplot+[draw=none, name path=E,domain=0.15:0.29] {4-100*0.05*0.05-1.2*(x-0.15)};
    \addplot+[draw=none, name path=F,domain=0.15:0.87] {4-100*0.05*0.05-0.5*(x-0.15)};
    \addplot+[draw=none, name path=G,domain=0.15:1] {4-100*0.05*0.05-4.4*(x-0.15)};
    \addplot+[draw=none, name path=H,domain=0.3:0.79] {3+cos(deg(10*pi*0.1/3))+(x-0.3)};
    \addplot+[draw=none, name path=I,domain=0:1] {0};
    
    \addplot+[gray!60] fill between[of=A and D];
    \addplot+[gray!60] fill between[of=A and E];
    \addplot+[gray!60] fill between[of=B and E];
    \addplot+[gray!60] fill between[of=F and G];
    \addplot+[gray!60] fill between[of=CC and G];
    \addplot+[gray!60] fill between[of=H and I];
    \addplot+[gray!60] fill between[of=C and I];
    \addplot+[gray!60] fill between[of=A and I];
    
    \addplot+[ultra thick, red, name path=J,domain=0.14:0.8]
    {4-100*0.06*0.06+0.55*(x-0.14)};
    
    \addplot+[gray!60] fill between[of=J and I];
    
    
  \end{axis}
\end{tikzpicture}
\end{figure}
\end{minipage}
\hskip 1cm
\begin{minipage}[b]{0,45\textwidth}

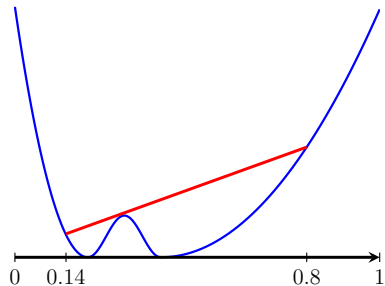
\begin{figure}[H]
    \centering
    \caption{{The pair of beliefs corresponding to the dashed red segment from Figure~\ref{fig:sol} is ``extreme'' with respect to $v_M$; e.g., one cannot increase $0.14$ to a higher belief (which would also increase the sender's utility) without violating the requirement of affine domination.}}
    \label{fig:sol2}
    \ifdefined\AEJ
    \vskip 0.071cm
    \fi
\begin{tikzpicture}[scale=0.7]
\pgfplotsset{every tick/.append style={semithick, major tick length=5pt, minor tick length=5pt, black}};
    \begin{axis}[
     axis on top=true,
     axis line style = ultra thick,
    axis x line=center,
    xtick={0, 0.14, 0.8, 1},
    axis y line=none,
    no marks]
    \addplot+[smooth,very thick,blue,name path=A,domain=0:0.2] {100*(0.2-x)*(0.2-x)};
    \addplot+[smooth,very thick,blue,name path=B,domain=0.2:0.4] {1/3-cos(deg(10*pi*(x-0.2)))/3};
    \addplot+[smooth,very thick,blue,name path=B,domain=0.4:1] {11*(x-0.4)*(x-0.4)};
    \addplot+[red,ultra thick,domain=0.14:0.8] {11*0.4*0.4+2.1*(x-0.8)};
  \end{axis}
\end{tikzpicture}

\end{figure}
\end{minipage}
\smallskip

Similarly to unconstrained concavification {without a mediator,} the {constrained} concavification {depicted in Figure \ref{fig:sol}} tells us the structure of the {sender's} optimal policy. For priors {$p \leq 0.3$}, {it is optimal to reveal no information.} For {$0.3 \leq p \leq 0.8$,} the optimal policy 
{induces the pair of posterior beliefs $q_1=0.14$ and $q_2=0.8$ of the mediator, which correspond to the red linear segments in Figures~\ref{fig:sol} and~\ref{fig:sol2}.} For {$0.8 \leq p$,} the no-information {policy} becomes optimal again.  

\end{example}

\section{More Than One Mediator}\label{sec:more}

The case of $n\geq 2$ mediators turns out to be substantially different from that with one mediator because of the interplay between the mediators' incentives. 

For several mediators, one might conjecture a natural extension of Theorem \ref{theorem:main} by requiring the constrained concavification to satisfy affine domination with respect to the indirect utilities $v_{M_i}$ of all mediators  $M_i$, $i=1,\ldots, n$.
Indeed, this requirement incentivizes all the mediators to reveal information fully to their successors. Despite this intuition, the conjecture turns out to be wrong. 
Such a constrained concavification gives a lower bound on the sender's optimal value, which can be proved via arguments similar to those from the proof of Theorem~\ref{theorem:main}. However, there are examples where the sender can achieve a higher value by exploiting the cancellation of the mediators' incentives. 

Due to the cancellation of incentives, adding the second mediator to a one-mediator problem can benefit the sender. This phenomenon is surprising when compared with Corollary~\ref{cor_decrease} which claims that the first mediator is never profitable. We illustrate the cancellation of incentives and the profitability of the second mediator in the following example.
\begin{example}[Interplay of mediators' incentives matters]\label{ex_interplay}
{Consider a problem with two mediators} and indirect utilities $v_S,v_{M_1},v_{M_2}$  depicted in Figure \ref{fig:util}, where $\varepsilon>0$ is a small fixed parameter.
\begin{figure}[h]
    \centering
    \caption{The utilities of the sender ($v_{S}$) and the mediators ($v_{M_1}$ and $v_{M_2}$).}
    \label{fig:util}
    \vskip 0.3cm
    \begin{tikzpicture}[scale=0.55]
\pgfplotsset{every tick/.append style={semithick, major tick length=5pt, minor tick length=5pt, black}};
\pgfplotsset{every tick label/.append style={font=\Large}};
    \begin{axis}[
    axis on top=true,
    axis line style = ultra thick,
    axis x line=center,
    xtick={1},
    ytick={1},
    axis y line=center,
    ylabel=$v_S$,
    y label style={at={(-0.25,1.1)}, font=\huge},
    no marks]
    \addplot+[smooth,ultra thick,blue,name path=A,domain=0:0.1] {1-10*x};
    \addplot+[smooth,ultra thick,blue,name path=A,domain=0.01:0.9] {0.005};
    \addplot+[smooth,ultra thick,blue,name path=A,domain=0.9:1] {1+10*(x-1)};

    \end{axis}
    \node[] at (0.3,0.3) {$\varepsilon$};
    \node at (6.5,0.3) {$\varepsilon$};
\end{tikzpicture}
\begin{tikzpicture}[scale=0.55]
\pgfplotsset{every tick/.append style={semithick, major tick length=5pt, minor tick length=5pt, black}};
\pgfplotsset{every tick label/.append style={font=\Large}};
    \begin{axis}[
    axis on top=true,
    axis line style = ultra thick,
    axis x line=center,
    xtick={0.25,1},
    ytick={0.5,1},
    axis y line=center,
    ylabel=$v_{M_1}$,
    y label style={at={(-0.35,1.1)}, font=\huge},
    no marks]
    \addplot[ultra thick,blue,name path=A,domain=0:0.1] {0.5-5*x};
    \addplot[ultra thick,blue,name path=A,domain=0.1:0.2] {0.005};
    \addplot[ultra thick,blue,name path=A,domain=0.2:0.25] {20*(x-0.2)};
    \addplot[ultra thick,blue,name path=A,domain=0.25:0.3] {1-20*(x-0.25)};
    \addplot[ultra thick,blue,name path=A,domain=0.3:0.9] {0.005};
    \addplot[ultra thick,blue,name path=A,domain=0.9:1] {0.5+5*(x-1)};

    \end{axis}
    \node[] at (0.2,0.3) {$\varepsilon$};
    \node at (6.5,0.3) {$\varepsilon$};
    \node at (1.7,0.3) {$\varepsilon$};
\end{tikzpicture}
\begin{tikzpicture}[scale=0.55]
\pgfplotsset{every tick/.append style={semithick, major tick length=5pt, minor tick length=5pt, black}};
\pgfplotsset{every tick label/.append style={font=\Large}};
    \begin{axis}[
    axis on top=true,
    axis line style = ultra thick,
    axis x line=center,
    xtick={0.5,1},
    ytick={0.8,1},
    axis y line=center,
    ylabel=$v_{M_2}$,
    y label style={at={(-0.35,1.1)}, font=\huge},
    no marks]
    \addplot[smooth,ultra thick,blue,name path=A,domain=0:0.1] {1-10*x};
    \addplot[smooth,ultra thick,blue,name path=A,domain=0.1:0.45] {0.01};
    \addplot[smooth,ultra thick,blue,name path=A,domain=0.45:0.5] {16*(x-0.45)};
    \addplot[smooth,ultra thick,blue,name path=A,domain=0.5:0.55] {16*(0.55-x)};
    \addplot[smooth,ultra thick,blue,name path=A,domain=0.55:0.9] {0.01};
    \addplot[smooth,ultra thick,blue,name path=A,domain=0.9:1] {1+10*(x-1)};

    \end{axis}
    \node[] at (0.3,0.3) {$\varepsilon$};
    \node at (6.5,0.3) {$\varepsilon$};
    \node at (3.4,0.3) {$\varepsilon$};
\end{tikzpicture}

\end{figure}
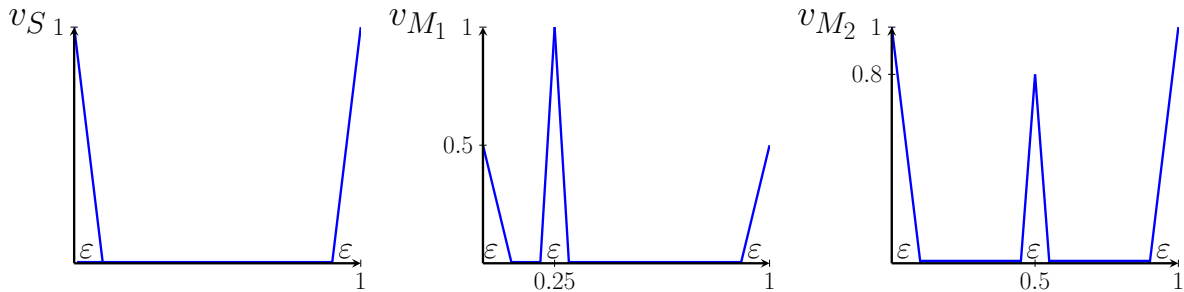

Assume that the prior is $p=\frac{1}{2}$.
Note that the best outcome for the sender is obtained if the state is fully revealed to the receiver; i.e., her posterior beliefs are $0$ or $1$. Can the sender achieve this outcome in an equilibrium? Note that with respect to mediator $M_1$'s utility, the pair of posteriors $0$ and $1$ is {not} affine dominating. Hence, in the absence of mediator $M_2$, the first mediator would prefer to garble the information and place the receiver's posteriors at $0.25$ and $1$ with the corresponding probabilities $\frac{2}{3}$ and $\frac{1}{3}$. In the presence of $M_2$,  such a garbling is no longer profitable since $M_2$ gets zero utility from the posterior $0.25$ and so she has an incentive to garble the information even more. 

Let us check that, in the presence of $M_2$,  the full revelation is a best reply of $M_1$ to the full-revelation policy of the sender. Consider the subgame starting from mediator $M_1$. Since $M_1$ is aware of the realized state, we can treat her as a sender in this subgame; i.e., we are back to the case of a single mediator and can use Theorem~\ref{theorem:main} to determine the optimal payoff to $M_1$.
The set of pairs $q_1<\frac{1}{2}<q_2$ that are affine dominating with respect to $v_{M_2}$ satisfy $q_1\in[0,\,\varepsilon]$ and $q_2\in[1-\varepsilon,\,1].$ The optimal payoff to $M_1$ is achieved at $q_1=0$ and $q_2=1$, i.e., at  the full-revelation policy. Hence, the full-revelation  policy is a best reply of $M_1$ to the full-revelation policy of $S$. For mediator $M_2$, the full revelation is also a best reply as it results in her ideal utility. Thus the full-revelation policies of all the agents form the equilibrium path of a subgame perfect equilibrium.

We conclude that the presence of $M_2$ eliminates some of the profitable deviations of $M_1$ and allows the sender to achieve her ideal utility of~$1$.

\end{example}

\subsection{Characterization of the value for a sequence of mediators}
Example~\ref{ex_interplay} demonstrates that in a problem with several mediators, the sender may be able to achieve a strictly higher payoff than prescribed by the constrained concavification that does not take into account the interplay of mediators' incentives. 

Although the naive concavification approach does not work, we characterize  the  sender's  optimal value and the characterization takes the form of a constrained concavification. To capture the interplay of incentives, the constraints restrict distributions of beliefs, in contrast to the one-mediator case where the constraints restrict the beliefs themselves. 

To formulate the constraints, we  need the following notation. For a pair of belief distributions $\mu,\nu\in \Delta(\Delta(\Omega))$, we write $\mu\succeq \nu$ if $\mu$ is  a mean-preserving spread\footnote{A distribution $\mu$ is a mean-preserving spread of $\nu$ if there is a pair of random variables $q_1$ with  distribution $\nu$ and $q_2$ with distribution $\mu$ defined on the same probability space and forming a martingale, i.e., such that $\mathbb{E}[q_2\mid q_1]=q_1$.} of $\nu$; equivalently, $\nu$ is a mean-preserving contraction of $\mu$. The importance of this notion comes from the fact that, if a signal $s_{\mathrm{in}}$ induces a belief distribution $\mu$ and its garbling $s_{\mathrm{out}}$ obtained via some signaling policy induces a belief distribution $\nu$, then we have $\mu\succeq \nu$. Moreover, if $s_{\mathrm{in}}$ induces the belief distribution $\mu$ and $\nu\preceq \mu$, then one can find a signaling policy that induces $\nu$; see Appendix~\ref{subsec_Blackwell}.

Consider a sender communicating with a receiver through a sequence of $n$ mediators. We assume that the indirect utility $v_{M_i}$ of each mediator $M_i$ is continuous and the indirect utility of sender $v_S$ is a bounded upper semicontinuous function on $\Delta(\Omega)$; i.e., Assumption~\ref{asumpt} holds.

For $\mu\in \Delta(\Delta(\Omega))$ and a function $f$ on $\Delta(\Omega)$, the expected value of $f(q)$ with $q$ distributed according to $\mu$ is denoted by $\mathbb{E}_\mu[f(q)]=\int_{\Delta(\Omega)} f(q)\dd\mu(q)$. Define sets ${\mathcal{M}}_i\subset \Delta(\Delta(\Omega))$ for $i=1,\ldots, n+1$ recursively:  ${\mathcal{M}}_{n+1}=\Delta\big(\Delta(\Omega)\big)$ and ${\mathcal{M}}_i$ with $i=1,\ldots, n$ is expressed through ${\mathcal{M}}_{i+1}$:
\begin{align}\label{eq_FRi_definition}
        {\mathcal{M}}_i&=\Big\{\mu\in {\mathcal{M}}_{i+1} \ \Big|\  \Big(\nu\in {\mathcal{M}}_{i+1}, \ \nu\preceq \mu\Big)\Longrightarrow   \mathbb{E}_\mu\big[v_i(q) \big]\geq \mathbb{E}_\nu\big[v_i(q) \big]\Big\}.
\end{align}
The intuition for this definition is that ${\mathcal{M}}_i$ consists of belief distributions that the mediator $M_i$ and all her successors do not have an incentive to garble. For a given $\mu$, such an incentive can be absent for one of the two reasons captured by the recursive formula: either the outcome $\nu$ of a hypothetical garbling gives lower utility $\mathbb{E}_\nu\big[v_i(q) \big] \leq \mathbb{E}_\mu\big[v_i(q) \big]$ to the mediator $M_i$ or $\nu$ will be garbled even further by her successors, i.e., $\nu\notin {\mathcal{M}}_{i+1}$.

The definition of constrained concavification~\eqref{eq:max} extends to distributional constraints as follows. For a closed subset  $\mathcal{M}\subset \Delta(\Delta(\Omega))$ and an upper semicontinuous function $v$ on $\Delta(\Omega)$, we define the constrained concavification of $v$ with respect to  $\mathcal{M}$  by
\begin{equation}\label{eq_constrained_concavification_extended_body}
\cav_{\mathcal{M}}\big[v_S\big](p)=
\max\left\{ \mathbb{E}_\mu [v_S(q)] \  \Big| \ \mu\in\mathcal{M}, \  \mathbb{E}_\mu[q]=p \right\}.
\end{equation}
\begin{theorem}\label{th_n_mediators}
For any number $n$ of mediators and any prior $p \in\Delta(\Omega),$ the {sender's optimal payoff in a subgame  perfect equilibrium is equal to} $\cav_{{\mathcal{M}}_1}\big[v_S\big](p)$, where the set ${\mathcal{M}}_1$ is defined as in~\eqref{eq_FRi_definition}.
\end{theorem}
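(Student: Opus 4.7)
The plan is to establish the equality by matching bounds, both obtained via a backward induction on $i = n+1, n, \ldots, 1$ over the nested sets $\mathcal{M}_i$. I first reformulate the game in belief space via the Blackwell correspondence of Appendix~\ref{subsec_Blackwell}: the sender chooses a distribution $\mu_1 \in \Delta(\Delta(\Omega))$ with $\mathbb{E}_{\mu_1}[q] = p$, each $M_i$ chooses a mean-preserving contraction $\mu_{i+1} \preceq \mu_i$ of her input, and every agent's expected payoff depends only on the distribution $\mu_{n+1}$ at the receiver via $\mathbb{E}_{\mu_{n+1}}[v(q)]$ for the appropriate indirect utility.

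The core is a joint inductive claim: in every subgame-perfect equilibrium of the subgame starting at $M_i$ with input belief distribution $\mu$, (a)~the resulting distribution $\mu_{\mathrm{out}}$ at the receiver lies in $\mathcal{M}_i$ and satisfies $\mu_{\mathrm{out}} \preceq \mu$; and (b)~if moreover $\mu \in \mathcal{M}_i$, then $\mu_{\mathrm{out}} = \mu$. The base case $i=n+1$ is immediate since $\mathcal{M}_{n+1} = \Delta(\Delta(\Omega))$. For the inductive step, suppose the claim holds at $i+1$ and consider an SPE with input $\mu$ at $M_i$. If $M_i$ induces $\mu_{i+1} \preceq \mu$, then by (a) for $i+1$ the continuation produces $\mu_{\mathrm{out}} \in \mathcal{M}_{i+1}$ with $\mu_{\mathrm{out}} \preceq \mu_{i+1}$. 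For (a) at $i$, take any $\eta \in \mathcal{M}_{i+1}$ with $\eta \preceq \mu_{\mathrm{out}}$; then $\eta \preceq \mu$, so $M_i$ may deviate and induce $\eta$ at $M_{i+1}$, and by (b) for $i+1$ applied to the off-path subgame the continuation produces exactly $\eta$ at the receiver. The SPE best-reply condition for $M_i$ then yields $\mathbb{E}_{\mu_{\mathrm{out}}}[v_i(q)] \geq \mathbb{E}_\eta[v_i(q)]$, which together with $\mu_{\mathrm{out}} \in \mathcal{M}_{i+1}$ gives $\mu_{\mathrm{out}} \in \mathcal{M}_i$ by the defining property of $\mathcal{M}_i$. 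For (b), when $\mu \in \mathcal{M}_i$, any garbling by $M_i$ to some $\mu'_{i+1} \preceq \mu$ yields continuation outcome $\mu'_{\mathrm{out}} \in \mathcal{M}_{i+1}$ with $\mu'_{\mathrm{out}} \preceq \mu$ by (a) for $i+1$, so the defining property of $\mathcal{M}_i$ forces $\mathbb{E}_\mu[v_i(q)] \geq \mathbb{E}_{\mu'_{\mathrm{out}}}[v_i(q)]$; since no garbling is strictly profitable, the refinement of subgame perfection obliges $M_i$ to pass $\mu$ on untouched, and (b) for $i+1$ then yields $\mu_{\mathrm{out}} = \mu$.

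Applying the claim at $i=1$ to any SPE gives $\mu_{\mathrm{out}} \in \mathcal{M}_1$ with $\mathbb{E}_{\mu_{\mathrm{out}}}[q] = p$, whence the sender's payoff $\mathbb{E}_{\mu_{\mathrm{out}}}[v_S(q)] \leq \cav_{\mathcal{M}_1}[v_S](p)$. Conversely, select $\mu^* \in \mathcal{M}_1$ with $\mathbb{E}_{\mu^*}[q]=p$ attaining the maximum in~\eqref{eq_constrained_concavification_extended_body} --- existence follows from compactness of $\Delta(\Delta(\Omega))$, upper semicontinuity of $\mu \mapsto \mathbb{E}_\mu[v_S(q)]$, and closedness of $\mathcal{M}_1$, the last shown inductively using continuity of the $v_i$, compactness of $\{\nu \preceq \mu\}$, and closedness of the mean-preserving order (Strassen coupling). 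The sender induces $\mu^*$ at $M_1$, and part (b) of the claim applied at $i=1$ gives $\mu_{\mathrm{out}} = \mu^*$, attaining the upper bound.

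The main obstacle is the tight interlocking of (a) and (b) at each inductive level: the deviation argument in (a) relies on (b) at $i+1$ to guarantee that an off-path deviation to $\eta \in \mathcal{M}_{i+1}$ really materializes as $\eta$ at the receiver rather than being silently garbled further, while (b) at $i$ in turn relies on (a) at $i+1$ to bound the outcome of hypothetical garblings. The refinement of subgame perfection is indispensable in (b) to rule out payoff-neutral off-path garblings that would otherwise prevent an element of $\mathcal{M}_i$ from being passed through untouched; it is exactly this robust passage, built recursively into the definition of the nested family $\mathcal{M}_1 \subset \mathcal{M}_2 \subset \cdots \subset \mathcal{M}_{n+1}$, that captures the multi-level cancellation of mediators' incentives illustrated in Example~\ref{ex_interplay}.
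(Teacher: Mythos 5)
Your backward induction with the interlocked claims (a) and (b) is, in substance, the paper's own argument (Appendix~\ref{app_n_mediators}): claim (a) at $i=1$ is Lemma~\ref{lm_upper_varepsilon_n_mediators}, and claim (b) is precisely what justifies the key step there, namely that a deviation to some $\nu\in\mathcal{M}_{i+1}$ propagates to the receiver unchanged because the full-revelation refinement binds for every successor. If anything, your explicit statement of (b) as an inductive hypothesis makes that propagation step more transparent than the paper's one-line appeal to the refinement. The reduction to belief distributions via Blackwell's theorem, the role of the martingale property, and the compactness/upper-semicontinuity argument for attainment of the maximum all match the paper.

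There is, however, one genuine gap, in the lower bound. Claims (a) and (b) are universally quantified over subgame perfect equilibria of the continuation subgames; they say nothing unless such equilibria exist. Your final step --- ``the sender induces $\mu^*$ at $M_1$, and part (b) gives $\mu_{\mathrm{out}}=\mu^*$'' --- presupposes a full strategy profile in which the mediators' behavior is specified (and is a best reply) at \emph{every} history, including off-path ones, and in which inducing $\mu^*$ is a best reply for the sender. Nothing in your argument produces such a profile. This is exactly what the paper's Lemma~\ref{lm_lower_varepsilon_n_mediators} supplies: each mediator $M_i$, upon computing her input distribution $\mu_i$, passes it through if $\mu_i\in\mathcal{M}_i$ and otherwise garbles to a maximizer $\nu_i(\mu_i)\in\mathcal{M}_i$ with $\nu_i(\mu_i)\preceq\mu_i$ (which exists by the same compactness and continuity you invoke for $\mu^*$); one then verifies that this profile satisfies subgame perfection and the refinement at all histories, and that the sender cannot profit by deviating (which follows from your claim (a), or the paper's Lemma~\ref{lm_upper_varepsilon_n_mediators}, applied to the fixed continuation). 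Without this construction your argument only shows that \emph{if} an equilibrium in which the sender induces $\mu^*$ exists, then it attains $\cav_{\mathcal{M}_1}[v_S](p)$; it does not establish that the supremum over equilibria reaches that value. The fix is routine given the rest of your setup, but it is a necessary step, not a formality one can wave away.
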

In Appendix~\ref{app_n_mediators}, we formulate and prove an extended version of this theorem characterizing the sender's optimal value over $\varepsilon$-equilibria for given $\varepsilon>0$ and allowing for discontinuous utilities. 

Under the original continuity assumptions of Theorem~\ref{th_n_mediators}, the constrained concavification $\cav_{{\mathcal{M}}_1}\big[v_S\big](p)$ is well-defined  since the maximization is over a closed non-empty set of distributions. Closedness follows from the continuity assumption on mediators' utilities and non-emptiness follows from the fact that ${\mathcal{M}}_i$ contains the point mass $\mu=\delta_p$ concentrated at $p$. 
An equilibrium,  where the sender's payoff is equal to $\cav_{{\mathcal{M}}_1}\big[v_S\big](p),$ can be constructed explicitly: the sender selects a policy such that the induced  distribution of the first mediator's belief is equal to the optimal $\mu\in {\mathcal{M}}_1$ from~\eqref{eq_constrained_concavification_extended_body}. The recursive definition of ${\mathcal{M}}_i$ ensures that no mediator has an incentive to garble the sender's signal and so the resulting sender's payoff is equal to $\mathbb{E}_\mu[v_s(q)]=\cav_{{\mathcal{M}}_1}\big[v_S\big](p)$.
The main difficulty in the proof is to show that the sender cannot do better: this is proved in Appendix~\ref{app_n_mediators}, where it is also shown that the informal description given above results in an equilibrium.

Abstracting from technical details, the proof of Theorem~\ref{th_n_mediators} relies on two insights: (1) a version of the revelation principle allowing us to focus on those equilibria where only the sender garbles information and (2) a recursive representation of those beliefs that propagate through the sequence of all the mediators without further garbling. It turns out that this high-level reasoning can be used to characterize the optimal payoff of the first player in a broad class of sequential games, where players move a token over a partially ordered set in a monotone way and get payoffs determined by the final position of the token (in our model, the set is the set of belief distributions $\Delta(\Delta(\Omega))$ endowed with the partial order $\succeq$). Appendix~\ref{sect_partial_order} explains the details and gives a unifying perspective on our results and those about sequential persuasion with multiple senders.

Theorem~\ref{theorem:main} is a refinement of Theorem~\ref{th_n_mediators} in the case of one mediator $M_1=M$. Indeed, the set of distributions ${\mathcal{M}}_1$ consists of all  $\mu\in \Delta(\Delta(\Omega))$ such that $\mathbb{E}_\nu[v_M(q)]\leq \mathbb{E}_\mu[v_M(q)]$ for any $\nu \preceq \mu$, while Theorem~\ref{theorem:main} shows that it is enough to concavify over a subset of ${\mathcal{M}}_1$ determined by restrictions  on the support of $\mu$ only. Specifically, ${\mathcal{M}}_1$ can be replaced by the  set of all distributions supported on at most $|\Omega|$ points $(q_1,\ldots,q_{{|\Omega|}})$ that are affine dominating with respect to the mediator's indirect utility $v_M$.

The following example illustrates an application of Theorem~\ref{th_n_mediators}. It demonstrates that the case of several mediators does not admit the simplifications used in Theorem~\ref{theorem:main}. 
We will see that optimal persuasion may require more than $|\Omega|$ signals. Moreover, the set of the receiver's belief distributions that the sender can induce is no longer determined solely by the affine-domination property on the support but also depends on the probabilities with which each belief arises. In other words, the constraints in the constrained concavification do not boil down to constraints on beliefs in $\Delta(\Omega)$ and are unavoidably formulated as restrictions on distributions over beliefs, i.e., on elements of $\Delta(\Delta(\Omega))$.

\begin{example}[$|\Omega|$ signals are not enough for persuasion with $n\geq 2$ mediators]
Consider a problem with two mediators, a binary state $\omega\in\Omega=\{0,1\}$, and indirect utilities as depicted in Figure \ref{fig:util2}. We will see that for the prior $p=0.25$, the sender's optimal policy needs three signals, not two as in the case of one mediator. 
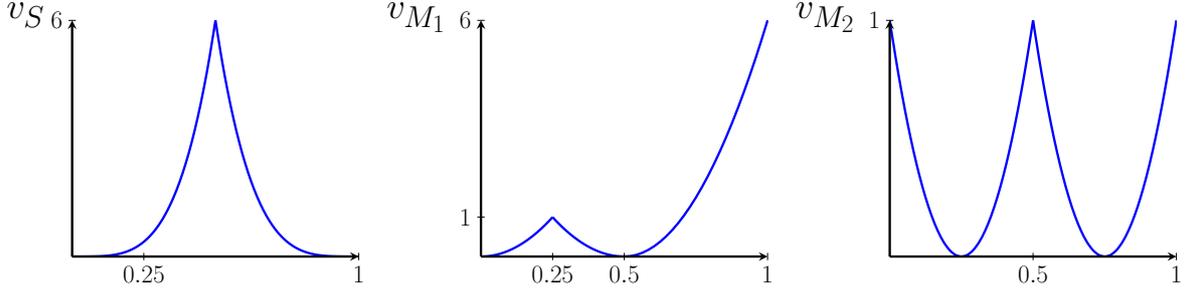
\begin{figure}[h]
    \centering
    \caption{Indirect utilities $v_{S}$, $v_{M_1}$, and $v_{M_2}$  of the sender and the mediators.}
    \label{fig:util2}
    \vskip 0.3cm
    \begin{tikzpicture}[scale=0.55]
\pgfplotsset{every tick/.append style={semithick, major tick length=5pt, minor tick length=5pt, black}};
\pgfplotsset{every tick label/.append style={font=\Large}};
    \begin{axis}[
    axis on top=true,
    axis line style = ultra thick,
    axis x line=center,
    xtick={0.25, 1}, ytick={0,6},
    axis y line=center,
    ylabel=$v_S$,
    y label style={at={(-0.25,1.1)}, font=\huge},
    no marks]
    \addplot[smooth,ultra thick,blue,name path=A,domain=0:0.5] {96*x^4};
    \addplot[smooth,ultra thick,blue,name path=A,domain=0.5:1] {96*(1-x)^4};
    \draw [dashed,gray] (0.5,0) -- (0.5,1);

    \end{axis}
\end{tikzpicture}
\begin{tikzpicture}[scale=0.55]
\pgfplotsset{every tick/.append style={semithick, major tick length=5pt, minor tick length=5pt, black}};
\pgfplotsset{every tick label/.append style={font=\Large}};
    \begin{axis}[
    axis on top=true,
    axis line style = ultra thick,
    axis x line=center,
    xtick={0.25,0.5, 1}, ytick={0,1,6},
    axis y line=center,
    ylabel=${v_{M_1}}$,
    y label style={at={(-0.35,1.1)}, font=\huge},
    no marks]
    \addplot[ultra thick,blue,name path=A,domain=0.0:0.25] {16*(x)^2};
    \addplot[ultra thick,blue,name path=A,domain=0.25:0.5] {16*(x-0.5)^2};
    \addplot[ultra thick,blue,name path=A,domain=0.5:1] {24*(x-0.5)^2)};

    \end{axis}
\end{tikzpicture}
\begin{tikzpicture}[scale=0.55]
\pgfplotsset{every tick/.append style={semithick, major tick length=5pt, minor tick length=5pt, black}};
\pgfplotsset{every tick label/.append style={font=\Large}};
    \begin{axis}[
    axis on top=true,
    axis line style = ultra thick,
    axis x line=center,
    xtick={0.5,1}, ytick={0,1},
    axis y line=center,
    ylabel=$v_{M_2}$,
    y label style={at={(-0.35,1.1)}, font=\huge},
    no marks]
    \addplot[smooth,ultra thick,blue,name path=A,domain=0:0.5] {16*(x-0.25)^2};
    \addplot[smooth,ultra thick,blue,name path=A,domain=0.5:1] {16*(x-0.75)^2};

    \end{axis}
\end{tikzpicture}
\end{figure}

 Our analysis does not depend on the exact functional form of indirect utilities in the intervals of strict convexity; it is, however, important that 
 $6\cdot v_{M_1}(0.25)=v_{M_1}(1)$.

To apply the characterization of the sender's optimal equilibrium from Theorem~\ref{th_n_mediators}, we need to find the set of belief distributions ${\mathcal{M}}_1$ in which none of the mediators has an incentive to garble. Since our goal is to apply the characterization of a particular value of $p=0.25$, it is enough to describe the set of those distributions from ${\mathcal{M}}_1$  that have mean $p$. We denote this set by ${\mathcal{M}}_1(p)=\{\mu\in {\mathcal{M}}_1\,:\, \mathbb{E}_\mu[q]=p\}$. 

To find ${\mathcal{M}}_1(p)$, we need to determine ${\mathcal{M}}_2$ first. By definition, ${\mathcal{M}}_2$ is the set of distributions $\mu\in\Delta([0,1])$ such that the second mediator has no incentive to garble. A distribution $\mu$ has this property if and only if any pair of points from its support are affine dominating with respect to $v_{M_2}$ (see Lemma~\ref{lm_FR_as_a_restriction_on_support}). We conclude that
$${\mathcal{M}}_2=\Delta\big([0,\,0.5]\big)\cup \Delta\big([0.5,\,1]\big)\cup\Delta\big(\{0,\,0.5,\,1\}\big).$$
In other words, $\mu$ is supported either on the interval  $[0,\,0.5]$ or on $[0.5,\,1]$ or on the three points $0$, $0.5$, and $1$. Note that ${\mathcal{M}}_2$ is a union of convex sets but not a convex set itself.

Now, we describe ${\mathcal{M}}_1(p)\subset {\mathcal{M}}_2$. By the definition, it consists of distributions $\mu\in {\mathcal{M}}_2$ with mean $p$ such that for any mean-preserving contraction $\nu$ of $\mu$ with $\nu\in {\mathcal{M}}_2$, the first mediator weakly prefers $\mu$ to $\nu$:
\begin{itemize}
\item Consider $\mu$ from the first component $\Delta([0,\,0.5])$ of ${\mathcal{M}}_2$. Any  $\nu\preceq \mu$ is also supported on $[0,\,0.5]$ and hence belongs to ${\mathcal{M}}_2$. Therefore, $\mu$ belongs to ${\mathcal{M}}_1$ if and only if $M_1$ weakly prefers $\mu$ to any mean-preserving contraction. This happens whenever any pair of points from the support of $\mu$ are affine dominating with respect to $v_{M_1}$; equivalently, $\mu$ is either supported on $[0,\,0.25]$ or on $[0.25,\,0.5]$. Such a restriction on the support is compatible with having mean $p=0.25$ only if $\mu$ is a point mass at $p$, i.e., 
$${\mathcal{M}}_1(p)\cap \Delta\big([0,\,0.5]\big)=\{\delta_{p}\}.$$
\item No $\mu$ from the second component $\Delta([0.5,\,1])$ of ${\mathcal{M}}_2$ can have mean $p=0.25$ and hence 
$${\mathcal{M}}_1(p)\cap \Delta\big([0.5,\,1]\big)=\emptyset.$$
\item Finally, let us consider $\mu$ from the third component of ${\mathcal{M}}_2$; i.e., $\mu$ is supported on the
three points $0,$ $0.5$, and $1$.  By definition, $\mu$ belongs to ${\mathcal{M}}_1(p)$ if it has mean $p=0.25$ and for every mean-preserving contraction $\nu$ of $\mu$, where $\nu\in {\mathcal{M}}_2$, the first mediator weakly prefers $\mu$ to $\nu$. Since $\nu\in {\mathcal{M}}_2$, it is supported either on $[0,\,0.5]$, or on $[0.5,\,1]$, or is obtained from $\mu$ by transferring some mass from $0$ and $1$ to $0.5$. The last two cases are excluded: $\nu$ must have mean $p$ as a contraction of $\mu$ and so cannot be supported on $[0.5,\,1]$, while transferring mass to points where $v_{M_1}$ is zero can never be beneficial for $M_1$. It remains to consider $\nu$ supported on $[0,\,0.5]$. The best such $\nu$ from the first mediator's perspective places as much weight on point $0.25$ as possible; i.e., the best $\nu$ is the point mass at $0.25$. Deviation to such $\nu$ is not profitable if and only if 
$v_{M_1}(1)\cdot \mu(\{1\})\geq v_{M_1}(0.25)\cdot 1 $
or, equivalently,
$\mu(\{1\})\geq \frac{1}{6}$.
We conclude that 
$$ {\mathcal{M}}_1(p)\cap \Delta\big(\{0,\,0.5,\,1\}\big)=\Big\{\mu\in \Delta\big(\{0,\,0.5,\,1\}\big)\,:\, \mathbb{E}_\mu[q]=p \mbox{ and } \mu(\{1\})\geq\frac{1}{6}\Big\}.$$
\end{itemize}
Putting all the pieces together we obtain that, for the prior $p=0.25$,
\begin{equation}\label{eq_FR1_example}
{\mathcal{M}}_1(p)=\{\delta_{p}\}\cup \Big\{\mu\in \Delta\big(\{0,\,0.5,\,1\}\big)\,:\, \mathbb{E}_\mu[q]=p \mbox{ and } \mu(\{1\})\geq\frac{1}{6}\Big\}.
\end{equation}
By Theorem~\ref{th_n_mediators}, the sender's optimal payoff is equal to $\cav_{F_1}[v_S](p)=\max\big\{ \mathbb{E}_\mu[v_S(q)],\ \mu\in {\mathcal{M}}_1(p)\big\}$.
The problem of choosing the optimal $\mu\in {\mathcal{M}}_1(p)$ boils down to an elementary finite-dimensional linear program. The distribution $\mu=\delta_{0.25}$ from the first component of ${\mathcal{M}}_1(p)$ gives a payoff of $v_S(0.25)$ and corresponds to a policy revealing no information. Optimizing over $\mu$ from the second component, we conclude that the optimal $\mu$ is supported on the three points 
\begin{equation}\label{eq_mu_opt}
\mu=\frac{2}{3}\delta_0+\frac{1}{6}\delta_{0.5}+\frac{1}{6}\delta_1
\end{equation}
and the corresponding payoff is $\frac{1}{6}v_S(0.5)$.

Thus the optimal sender's value for the prior $p=0.25$ is equal to
$$\max\left\{v_S(0.25),\ \frac{1}{6}v_S(0.5)\right\}.$$
If $v_S(0.25) \geq \frac{1}{6}v_S(0.5)$, the optimal payoff is achieved by revealing no information. For $v_S(0.25) < \frac{1}{6}v_S(0.5)$, the sender's optimal policy induces three different beliefs of the first mediator and hence requires three signals.
\smallskip

We observe several phenomena specific to  persuasion with $n\geq 2$ mediators:
\begin{itemize}
\item  $|\Omega|$ signals are no longer enough for optimal persuasion. Moreover, by restricting to $|\Omega|$ signals, the sender may not be able to guarantee any positive fraction of the optimal payoff. Indeed, the second component of ${\mathcal{M}}_1(p)$ contains no two-point distributions and, hence, the sender restricted to binary signals can only guarantee $v_S(0.25), $ which can be made arbitrarily low while keeping $v_S(0.5)$ unchanged.
\item The sender may benefit from the presence of the second mediator, as we already saw in Example~\ref{ex_interplay}. Indeed, if the second mediator $M_2$ is absent, the best the sender can do for $p=0.25$ is to reveal no information thereby, achieving a payoff of\,\footnote{To apply Theorem~\ref{theorem:main}, describe the set of affine dominating pairs of beliefs with respect to $v_{M_1}$. If  $q_1,q_2$ are affine dominating, then either both points belong to $[0,\, 0.25]$ or both belong to $[0.25,\, 1]$ or one is in $[0,\, 0.25]$ and the other is in $[0.75,\,1]$ (the exact bounds in the last case depend on how convex $v_{M_1}$ is in its intervals of convexity). Any two-point distribution supported on these intervals and having mean $p=0.25$ cannot improve upon the non-revealing policy.} $v_S(0.25)$. Thus, for $v_S(0.25) < \frac{1}{6}v_S(0.5)$, the sender's payoff improves after one adds the second mediator. Intuitively, the presence of the second mediator is beneficial for the following reason. In the absence of $M_2$, the first mediator would garble the sender's optimal distribution $\mu$ given by~\eqref{eq_mu_opt} by moving some mass from points $0$ and $0.5$ to $0.25$. However, in the presence of $M_2$, such garbling is not profitable for $M_1$ as it induces $M_2$ to garble the information even more by shifting the whole mass to $0.5$, i.e., by sending a completely uninformative signal to the receiver, which is the worst outcome for $M_1$. 

\item 
The constraints faced by the sender and  captured by the set ${\mathcal{M}}_1\subset \Delta(\Delta(\Omega))$ cannot be reduced to constraints in $\Delta(\Omega)$ such as constraints on the support of $\mu$, which were enough in the one-mediator case. Indeed, the set ${\mathcal{M}}_1(p)$ from~\eqref{eq_FR1_example} is defined via the constraint $\mu(\{1\})\geq \frac{1}{6}$. Also, we see that ${\mathcal{M}}_1(p)$ may not be convex and may contain several connected components, none of which can be ignored as the optimum may be attained in each of the components, depending on $v_S$.
\end{itemize}
\end{example}

Our analysis raises  several challenging open problems. What is the minimal number of signals sufficient for optimal persuasion with $n\geq 2$ mediators? We do not know even whether a finite number of signals is sufficient for two mediators and a binary state. Are there efficient algorithms approximating the sender's optimal value? Persuasion problems considered in the literature are linear or convex; our problem is non-convex and the feasible set can have several connected components and so standard methods cannot be applied. Can our results be extended from the line graph corresponding to a sequence of mediators to trees or more general networks? As we show in Appendix~\ref{sect_two_direct_mediators}, such an extension will require new insights as the revelation principle underpinning our analysis fails even for the simplest trees.


\bibliography{main}

\begin{thebibliography}{37}
\providecommand{\natexlab}[1]{#1}
\providecommand{\url}[1]{\texttt{#1}}
\expandafter\ifx\csname urlstyle\endcsname\relax
  \providecommand{\doi}[1]{doi: #1}\else
  \providecommand{\doi}{doi: \begingroup \urlstyle{rm}\Url}\fi

\bibitem[Alonso and C{\^a}mara(2016)]{alonso2016persuading}
R.~Alonso and O.~C{\^a}mara.
\newblock Persuading voters.
\newblock \emph{American Economic Review}, 106\penalty0 (11):\penalty0
  3590--3605, 2016.

\bibitem[Ambrus et~al.(2013)Ambrus, Azevedo, and
  Kamada]{ambrus2013hierarchical}
A.~Ambrus, E.~M. Azevedo, and Y.~Kamada.
\newblock Hierarchical cheap talk.
\newblock \emph{Theoretical Economics}, 8\penalty0 (1):\penalty0 233--261,
  2013.

\bibitem[Arieli and Babichenko(2019)]{arieli2019private}
I.~Arieli and Y.~Babichenko.
\newblock Private bayesian persuasion.
\newblock \emph{Journal of Economic Theory}, 182:\penalty0 185--217, 2019.

\bibitem[Arieli et~al.(2019)Arieli, Babichenko, and
  Mueller-Frank]{arieli2019naive}
I.~Arieli, Y.~Babichenko, and M.~Mueller-Frank.
\newblock Naive learning through probability matching.
\newblock \emph{Available at SSRN 3338015}, 2019.

\bibitem[Arieli et~al.(2021)Arieli, Babichenko, Sandomirskiy, and
  Tamuz]{arieli2021feasible}
I.~Arieli, Y.~Babichenko, F.~Sandomirskiy, and O.~Tamuz.
\newblock Feasible joint posterior beliefs.
\newblock \emph{Journal of Political Economy}, 129\penalty0 (9):\penalty0
  2546--2594, 2021.

\bibitem[Aumann and Maschler(1995)]{aumann1995repeated}
R.~J. Aumann and M.~Maschler.
\newblock \emph{Repeated games with incomplete information}.
\newblock MIT press, 1995.
\newblock In collaboration with {Richard} {E}. {S}tearns.

\bibitem[Babichenko et~al.(2020)Babichenko, Talgam-Cohen, and
  Zabarnyi]{babichenko2020bayesian}
Y.~Babichenko, I.~Talgam-Cohen, and K.~Zabarnyi.
\newblock Bayesian persuasion under ex ante and ex post constraints.
\newblock \emph{arXiv preprint arXiv:2012.03272}, 2020.

\bibitem[Babichenko et~al.(2021)Babichenko, Talgam-Cohen, Xu, and
  Zabarnyi]{babichenko2021multi}
Y.~Babichenko, I.~Talgam-Cohen, H.~Xu, and K.~Zabarnyi.
\newblock Multi-channel bayesian persuasion.
\newblock \emph{arXiv preprint arXiv:2111.09789}, 2021.

\bibitem[Bhattacharya and Mukherjee(2013)]{bhattacharya2013strategic}
S.~Bhattacharya and A.~Mukherjee.
\newblock Strategic information revelation when experts compete to influence.
\newblock \emph{The RAND Journal of Economics}, 44\penalty0 (3):\penalty0
  522--544, 2013.

\bibitem[Blackwell(1950)]{blackwell1950comparison}
D.~Blackwell.
\newblock Comparison of experiments.
\newblock Technical report, Howard University, Washington, United States, 1950.

\bibitem[Bloedel and Segal(2021)]{bloedel2020persuading}
A.~W. Bloedel and I.~Segal.
\newblock Persuading a rationally inattentive agent.
\newblock Technical report, 2021.

\bibitem[Brooks et~al.(2019)Brooks, Frankel, and
  Kamenica]{brooks2019information}
B.~Brooks, A.~P. Frankel, and E.~Kamenica.
\newblock Information hierarchies.
\newblock \emph{Available at SSRN 3448870}, 2019.

\bibitem[Candogan(2019)]{candogan2019persuasion}
O.~Candogan.
\newblock Persuasion in networks: Public signals and k-cores.
\newblock In \emph{Proceedings of the 2019 ACM Conference on Economics and
  Computation}, pages 133--134, 2019.

\bibitem[Candogan and Drakopoulos(2020)]{candogan2020optimal}
O.~Candogan and K.~Drakopoulos.
\newblock Optimal signaling of content accuracy: Engagement vs. misinformation.
\newblock \emph{Operations Research}, 68\penalty0 (2):\penalty0 497--515, 2020.

\bibitem[Crawford and Sobel(1982)]{crawford1982strategic}
V.~P. Crawford and J.~Sobel.
\newblock Strategic information transmission.
\newblock \emph{Econometrica: Journal of the Econometric Society}, pages
  1431--1451, 1982.

\bibitem[Doval and Skreta(2018)]{doval2018constrained}
L.~Doval and V.~Skreta.
\newblock Constrained information design: Toolkit.
\newblock \emph{arXiv preprint arXiv:1811.03588}, 2018.

\bibitem[Dworczak and Martini(2019)]{dworczak2019simple}
P.~Dworczak and G.~Martini.
\newblock The simple economics of optimal persuasion.
\newblock \emph{Journal of Political Economy}, 127\penalty0 (5):\penalty0
  1993--2048, 2019.

\bibitem[Egorov and Sonin(2020)]{egorov2020persuasion}
G.~Egorov and K.~Sonin.
\newblock Persuasion on networks.
\newblock Technical report, National Bureau of Economic Research, 2020.

\bibitem[Galperti and Perego(2020)]{galperti2019belief}
S.~Galperti and J.~Perego.
\newblock Information systems.
\newblock \emph{mimeo}, 2020.

\bibitem[Gentzkow and Kamenica(2016)]{gentzkow2016competition}
M.~Gentzkow and E.~Kamenica.
\newblock Competition in persuasion.
\newblock \emph{The Review of Economic Studies}, 84\penalty0 (1):\penalty0
  300--322, 2016.

\bibitem[Ivanov(2010)]{ivanov2010communication}
M.~Ivanov.
\newblock Communication via a strategic mediator.
\newblock \emph{Journal of Economic Theory}, 145\penalty0 (2):\penalty0
  869--884, 2010.

\bibitem[Kamenica and Gentzkow(2011)]{kamenica2011bayesian}
E.~Kamenica and M.~Gentzkow.
\newblock Bayesian persuasion.
\newblock \emph{American Economic Review}, 101\penalty0 (6):\penalty0
  2590--2615, 2011.

\bibitem[Kerman et~al.(2021)Kerman, Tenev, et~al.]{kerman2021persuading}
T.~Kerman, A.~P. Tenev, et~al.
\newblock Persuading communicating voters.
\newblock Technical report, 2021.

\bibitem[Kosenko(2018)]{kosenko2018mediated}
A.~Kosenko.
\newblock Mediated persuasion.
\newblock \emph{Available at SSRN 3276453}, 2018.

\bibitem[Kuang et~al.(2019)Kuang, Lien, and Zheng]{kuanghierarchical}
Z.~Kuang, J.~W. Lien, and J.~Zheng.
\newblock Hierarchical bayesian persuasion.
\newblock Technical report, 2019.

\bibitem[Laclau et~al.(2020)Laclau, Renou, and Venel]{laclau2020robust}
M.~Laclau, L.~Renou, and X.~Venel.
\newblock Robust communication on networks.
\newblock \emph{arXiv preprint arXiv:2007.00457}, 2020.

\bibitem[Le~Treust and Tomala(2019)]{le2019persuasion}
M.~Le~Treust and T.~Tomala.
\newblock Persuasion with limited communication capacity.
\newblock \emph{Journal of Economic Theory}, 184:\penalty0 104940, 2019.

\bibitem[Levkun(2020)]{levkun2020mediation}
A.~Levkun.
\newblock Strategic mediation of information in autocracies.
\newblock \emph{mimeo}, 2020.

\bibitem[Li and Norman(2020)]{li2018sequential}
F.~Li and P.~Norman.
\newblock Sequential persuasion.
\newblock \emph{Theoretical Economics}, 2020.

\bibitem[Lipnowski et~al.(2020{\natexlab{a}})Lipnowski, Mathevet, and
  Wei]{lipnowski2020attention}
E.~Lipnowski, L.~Mathevet, and D.~Wei.
\newblock Attention management.
\newblock \emph{American Economic Review: Insights}, 2\penalty0 (1):\penalty0
  17--32, 2020{\natexlab{a}}.

\bibitem[Lipnowski et~al.(2020{\natexlab{b}})Lipnowski, Mathevet, and
  Wei]{lipnowski2020optimal}
E.~Lipnowski, L.~Mathevet, and D.~Wei.
\newblock Optimal attention management: A tractable framework.
\newblock \emph{arXiv preprint arXiv:2006.07729}, 2020{\natexlab{b}}.

\bibitem[Qian(2020)]{qian2020jmp}
Y.~Qian.
\newblock Content moderation.
\newblock Technical report, Yale University (job market paper), 2020.

\bibitem[Wang(2013)]{wang2013bayesian}
Y.~Wang.
\newblock Bayesian persuasion with multiple receivers.
\newblock \emph{Available at SSRN 2625399}, 2013.

\bibitem[Wei(2020)]{wei2020persuasion}
D.~Wei.
\newblock Persuasion under costly learning.
\newblock \emph{Journal of Mathematical Economics}, 2020.

\bibitem[Winkler(1988)]{winkler1988extreme}
G.~Winkler.
\newblock Extreme points of moment sets.
\newblock \emph{Mathematics of Operations Research}, 13\penalty0 (4):\penalty0
  581--587, 1988.

\bibitem[Wu(2020)]{wu2020essays}
W.~Wu.
\newblock Essays on information economics.
\newblock 2020.
\newblock URL \url{http://hdl.handle.net/10150/641684}.

\bibitem[Zapechelnyuk(2022)]{zapechelnyuk2022sequential}
A.~Zapechelnyuk.
\newblock Sequential obfuscation and toxic argumentation.
\newblock \emph{Available at SSRN 4142663}, 2022.

\end{thebibliography}

\appendix

\section{Model, Subgame Perfection, and Other Refinements}\label{app_perfect}

We describe the model paying attention to details, define subgame perfect equilibria together with their $\varepsilon$-relaxation and further refinements needed to handle discontinuous utilities, and discuss the role of technical assumptions. 

There are $n+2$ agents: a sender $S$, a sequence of $n\geq 1$ mediators $M_1,\ldots, M_n$, and a receiver~$R$. It will also be convenient to refer to them as agents $0,1,\ldots, n,n+1$. The set of states $\Omega$ is finite and is endowed with a prior $p\in\Delta(\Omega)$.  The receiver's set of actions $A$ is a measurable space. Agents' utilities $u_i$, $i=0,\ldots, n+1,$ are bounded measurable functions on $\Omega\times A$.

 A signaling policy of the sender is a pair $F_0=(S_0, f_0)$ consisting of a set of signals $S_0$ and a map $f_0:\, \Omega\to\Delta(S_0)$ that defines a distribution of the sender's signal $s_0\in S_0$
for each possible realization of the state $\omega\in \Omega$. A mediator $i=1,\ldots, n$ observes the signal $s_{i-1}$ sent by her predecessor and sends a signal $s_i$ to her successor. Hence, mediator $i$'s signaling policy is a pair  $F_i=(S_i, f_i)$, where $f_i\colon S_{i-1}\to\Delta(S_i)$. 
A policy of the receiver maps the last mediator's signal to a randomized action $a\in A$, i.e., $F_{n+1}=(A,f_{n+1})$, where  $f_{n+1}\colon S_{n}\to\Delta(A)$. The sets of signals $S_i$ are assumed to be arbitrary measurable spaces and functions $f_i$ are such that $f_i(s)(B)$ is measurable in $s$ for any measurable set $B$ (such $f_i$ are called Markov kernels). Both $S_i$ and $f_i$ are components of agent $i$'s strategic choice.\footnote{Fixing a rich enough set of signals $S_i$---say, $[0,1]$ or, more generally, any uncountable standard Borel space---leads to an equivalent model.}

Agents select their policies sequentially and so agent $i$'s policy choice can be affected by choices made by agents $0,1,\ldots, i-1$. The history $h_i$ observed by agent $i$ consists of all the policies chosen by predecessors: $h_i=(F_j)_{j=0}^{i-1}$ (for convenience, $h_0=\{\emptyset\}$). A strategy $\sigma_i$ of agent $i$ specifies a policy $F_i$ for each history $h_i,$ i.e., $F_i=\sigma_i(h_i)$.
A profile of all agents' strategies $(\sigma_0,\ldots, \sigma_{n+1})$ determines inductively the profile of policies $F_{i}=\sigma_i\big((F_j)_{j=0}^{i-1}\big)$ referred to as the equilibrium path. 

A profile of policies $F_0,\ldots, F_{n+1}$ and the prior $p$ induce the joint distribution of the state $\omega\in \Omega$, signals $(s_0,\ldots, s_n)\in S_0\times \ldots \times S_n$, and the action $a\in A$. The expectation with respect to this distribution is denoted by $\mathbb{E}=\mathbb{E}_{(F_0,\ldots, F_{n+1})}$. Agent $i$'s expected payoff is given by $\mathbb{E}_{(F_0,\ldots, F_{n+1})}[u_k(\omega,a)]$.

\begin{definition}
Given $\varepsilon\geq0$, some policies $G_0,\ldots, G_{i-1}$ of agent~$i$'s predecessors, and strategies $\sigma_{i+1},\ldots, \sigma_{n+1}$ of $i$'s successors, a policy $F_i$ of agent $i$ is called an $\varepsilon$-best reply to the history $(G_j)_{j=0}^{i-1}$ if no other $F_i'$ can increase $i$'s payoff by more than $\varepsilon$ under the assumption that $i$'s successors follow the policy choice prescribed by their strategies.\footnote{Formally, let $F_{i+1},\ldots, F_{n+1}$ be the policies chosen by agents $i+1,\ldots, n+1$ if the first agents select $G_0,\ldots, G_{i-1}$ and $F_i$ and let $F_{i+1}',\ldots, F_{n+1}'$ be the corresponding policies if the game starts with $G_0,\ldots, G_{i-1}$ and $F_i'$. Then $F_i$ is an $\varepsilon$-best reply at the history $(G_j)_{j=0}^{i-1}$ if for any $F_i'$
  \begin{equation}\label{eq_epsilon_best_reply}
      \mathbb{E}_{(G_0,\ldots, G_{i-1},F_{i}',F_{i+1}',\ldots,F_{n+1}')}\big[u_i(\omega,a)\big]\leq
      \mathbb{E}_{(G_0,\ldots, G_{i-1},F_{i},F_{i+1},\ldots,F_{n+1})}\big[u_i(\omega,a)\big]+\varepsilon.  \end{equation}
      } If $\varepsilon=0$, the policy  $F_i$ is called a best reply.
\end{definition}
As usual in Bayesian persuasion, we assume that agents cannot use non-credible threats to incentivize the desired behavior of the predecessors. This assumption is captured by the concept of a subgame perfect equilibrium  requiring that agents' strategies be best replies to all histories (including  those that never arise on the equilibrium path). 

A subgame perfect equilibrium is  defined formally below. We add an $\varepsilon$-slack in it to deal with discontinuous utilities since for such utilities best replies may fail to exist.  The definition is standard except for two refinements restricting the ways agents can use tie-breaking to punish predecessors. To introduce the refinements, we need the following notation.  

A policy $F_i=(S_i,f_i)$ of a mediator $i=1,\ldots, n$ is the full-revelation policy if the mediator transmits unchanged the signal $s_{i-1}\in S_{i-1}$ received from her predecessor, i.e., $S_i=S_{i-1}$ and $f_i(s_i)=\delta_{s_i}$, where $\delta_x$ denotes the point mass at $x$. We will require that all the mediators use full-revelation policies unless they can strictly improve upon such policies  by more than $\varepsilon$.

Given a joint distribution of the state $\omega$ and signals $s_0,\ldots, s_n$, we denote  agent $i$'s belief induced by the observed signal by $p_i$, $i=1,\ldots, n+1$, i.e., $p_{i,k}=\mathbb{P}(\omega=k\mid s_{i-1})$ for $k\in \Omega$. In other words, $p_i$ is the  distribution of $\omega$ conditional on $s_{i-1}$. The belief $p_i$ is itself a random variable with values in $\Delta(\Omega)$ as it depends on $s_{i-1}$. We note that to compute $p_i$ given $s_{i-1}$, the agent $i$ only needs to know the policies chosen by predecessors $F_0,\ldots, F_{i-1}$. We will require that the receiver's action depend on the policies of other agents inasmuch they determine her belief $p_{n+1}$.
\begin{definition}\label{def_subgame}
  For $\varepsilon\geq 0$, a profile of strategies $(\sigma_0,\ldots, \sigma_{n+1})$ is a \ed{subgame perfect $\varepsilon$-equilibrium with refined tie-breaking ($\varepsilon$-RTSPE, henceforth)} if the following requirements are satisfied:
  \begin{itemize}
  \item \emph{Subgame perfection:} For any agent $i=0,\ldots, n+1$ and any history $h_i=(G_j)_{j=0}^{i-1}$, the policy $F_i=\sigma_i(h_i)$ is an $\varepsilon$-best reply to $h_i$.
  \item \emph{Full-revelation refinement:} For any mediator $i=1,\ldots, n$ and any history $h_i$, if the full-revelation policy is an $\varepsilon$-best reply to $h_i$, then $\sigma_i(h_i)$ is the full-revelation policy.
  \item \emph{Belief-driven receiver refinement:}
  The receiver's action is a function of her belief $p_{n+1}$ induced by the last mediator's signal $s_n$ for any history. Formally, for any $h_{n+1}=(G_j)_{j=0}^{n}$, the receiver's policy $F_{n+1}=(A,f_{n+1})=\sigma_{n+1}(h_{n+1})$ can be factorized as follows: $f_{n+1}(s_n)=\widehat{f}_{n+1}(p_{n+1}(s_n))$, where a function $\widehat{f}_{n+1}\colon \Delta(\Omega)\to\Delta(A)$ does not depend on history $h_{n+1}$.
\end{itemize}
We will refer to the case of $\varepsilon=0$ as \ed{a subgame perfect equilibrium with refined tie-breaking (RTSPE).}
\end{definition}

\subsection{The role of refinements} Let us discuss the intuition for \ed{full-revelation and belief-driven}   refinements. Without them, subgame perfection allows an agent indifferent between several choices to pose threats  by making the tie-breaking dependent on the choices made by predecessors. The following example demonstrates that large sets of indifference may lead to unnatural equilibria if we drop the full-revelation refinement.\footnote{An alternative workaround would be to assume that the game is generic and best replies are unique. However, this assumption is too restrictive for games with large sets of strategies.}
\begin{example}[Punishment through indifference]
Consider a problem with two mediators $M_1$ and $M_2$, where the second mediator's utility is constant in both $\omega$ and $a$.  Any strategy of $M_2$ is compatible with subgame perfection. In particular, $M_2$ may decide not to transmit any information to the receiver (i.e., to send a dummy signal independent of~$\omega$) unless the policies chosen by the sender and $M_1$ fully reveal the state. 

If keeping the receiver completely uninformed is the worst outcome for the sender and the first mediator, such a strategy of $M_2$ induces full revelation even if full revelation does not occur in a problem where $M_2$ is absent. We see that without the full-revelation refinement, the presence of an indifferent mediator may alter the equilibrium even though it would be natural to require the equilibrium not to be sensitive to the presence of completely indifferent dummy agents. The full-revelation refinement enforces the latter natural behavior.
\end{example}

The refinement of the receiver's behavior is needed for a similar reason. To see that this refinement is intuitive, consider the receiver's expected payoff conditional on $s_n$ as a function of the receiver's action $a$. The payoff can be represented as follows
\begin{equation}\label{eq_receivers_payoff}
\mathbb{E}[u_{n+1}(\omega,a)\mid s_n]= \mathbb{E}\Big[\sum_{k\in \Omega} p_{n+1,k}(s_n)
\cdot u_{n+1}(k,a)\mid s_n\Big].
\end{equation}
Given a belief $q\in \Delta(\Omega)$, denote by $\Lambda_\varepsilon(q)\subset \Delta(A)$ the set of distributions $\lambda$ over the set of actions~$A$ such that
$$\int_A \sum_{k\in \Omega} q_k\cdot  u_{n+1}(k,a)\, \dd\lambda(a)\geq \sup_{a\in A} \int_A \sum_{k\in \Omega} q_k u_{n+1}(k,a)\, \dd\lambda(a)-\varepsilon.$$
From~\eqref{eq_receivers_payoff}, we conclude that the
receiver's policy $F_{n+1}=(A,f_{n+1})$ is an $\varepsilon$-best reply to a history $h_{n+1}$ if and only if the distribution $f_{n+1}(s_n)$ of the receiver's actions belongs to $\Lambda_\varepsilon(p_{n+1}(s_n))$. We see that the set of action distributions constituting an $\varepsilon$-best reply is determined by the posterior $p_{n+1}$. For example, if $\varepsilon=0$ and a best reply exists and is unique, i.e., $\Lambda_0(q)$ is a singleton for any\footnote{For example, $\Lambda_0(q)$ is a singleton if $A$ is a compact convex subset of $\mathbb{R}^d$ and $u_{n+1}$ is a strictly convex continuous function of $a\in A$.} $q$, then any receiver's behavior satisfying subgame perfection is automatically belief-driven. In general, the requirement of belief-driven behavior means that the choice of the action distribution from $\Lambda_\varepsilon(p_{n+1})$ cannot depend on the information that is payoff-irrelevant to the receiver.

\subsection{Indirect utilities} 
The assumption that the receiver's behavior is belief-driven makes it possible to define the indirect utilities of all the agents, i.e., to treat their utilities as functions of the  receiver's induced belief.

Recall that a strategy of a belief-driven receiver is captured by a function $\widehat{f}_{n+1}\colon \Delta(\Omega)\to \Delta (A)$ specifying the action distribution for each belief. Given $\widehat{f}_{n+1}$, the indirect utility of agent $i$ for a belief $q\in \Delta(\Omega)$ is defined by $$v_{i}(q)=\int_A\sum_{k\in \Omega} q_k\cdot  u_i(k,a)\dd \lambda(a),$$ where $\lambda=\widehat{f}_{n+1}(q)$. By the definition, the expected utility of agent~$i$ can be represented through her indirect utility as $\mathbb{E}[v_{i}(p_{n+1})]$, where $p_{n+1}$ is the receiver's belief.

In our analysis, we treat the indirect utilities of the sender and the mediators $v_0,v_1,\ldots, v_n$ as  primitives of the model. In other words, we assume that both $\varepsilon\geq 0$ and  a belief-driven strategy $\widehat{f}_{n+1}$ of the receiver are fixed and $\widehat{f}_{n+1}$ is an $\varepsilon$-best reply to any history. We note that such a strategy $\widehat{f}_{n+1}$ exists for any $\varepsilon>0$ as the set $\Lambda_\varepsilon(q)$ is non-empty for any $q\in \Delta(\Omega)$. For $\varepsilon=0$, a best reply may fail to exist unless some additional assumptions are imposed (e.g., the compactness of $A$ and the upper semicontinuity of $u_{n+1}$ in $a$). Hence, for $\varepsilon=0$, consideration of indirect utilities is not without loss of generality and relies on an implicit assumption that the receiver's best reply exists.

\subsection{The sender's problem}

The problem is given by a prior $p\in \Delta(\Omega)$,  a parameter $\varepsilon\geq 0$, and indirect utilities $v_0$ of the sender  and $v_{1},\ldots, v_{n}$ of mediators.

The sender's goal is to maximize her expected payoff $$\mathbb{E}[v_0(p_{n+1})]$$
over all \ed{$\varepsilon$-RTSPE}. The optimal value is denoted by $V_S^\varepsilon(p)$, i.e.,
$$V_S^\varepsilon(p)= \sup_{\footnotesize{\mbox{$\varepsilon$-RTSPE}}} \mathbb{E}[v_0(p_{n+1})].$$
If supremum is taken over an empty set, we put $V_S^\varepsilon(p)=-\infty$.

As we will see, the set of \ed{$\varepsilon$-RTSPE} is non-empty for any $\varepsilon>0$. For $\varepsilon=0$, equilibria exist under continuity assumptions on indirect utilities.


\section{Blackwell's Theory}\label{subsec_Blackwell}
We refer to the sender $S$, the mediators $M_1,\ldots, M_n$, and the receiver $R$ as agents $0,\ldots,n+1$. Recall that a prior $p\in \Delta(\Omega)$ and policies  $F_0,\ldots, F_n$ of the first $n$ agents defines a joint distribution of $\omega$ and signals $s_0,\ldots, s_n$. The belief of agent $i$ about $\omega$ induced by the observed signal $s_{i-1}$ is denoted by $p_i=p_{i}(s_{i-1})\in \Delta(\Omega)$, where $p_{i,k}=\mathbb{P}(\omega=k\mid s_{i-1})$.

Let $\mu_i\in \Delta(\Delta(\Omega))$ be the distribution of agent $i$'s belief. The expected payoff of agent~$i$ can be represented through her indirect utility and the receiver's belief distribution as $\mathbb{E}_{\mu_{n+1}}[v_i(q)]=\int_{\Delta(\Omega)} v_i(q)\dd \mu_{n+1}(q)$.
As we explain below, one can abstract from the details of policies used by agents and keep track of the induced  sequence of belief distributions $\mu_1,\ldots, \mu_{n+1}$ only.

A distribution $\mu\in \Delta(\Delta(\Omega))$ is
a mean-preserving spread of $\nu\in \Delta(\Delta(\Omega))$, denoted by $\mu\succeq \nu$,
if\,\footnote{An equivalent definition is that $\mu\succeq\nu$ if for any convex function $\varphi$ on $\Delta(\Omega)$ we have $\int_{\Delta(\Omega)} \varphi(q)\dd\mu(q)\geq \int_{\Delta(\Omega)}\varphi(q)\dd\nu(q)$.} there exists a pair of random variables $q_1$ distributed according to $\nu$ and $q_2$ distributed according to $\mu$ defined on the same probability space and forming a martingale, i.e., such that $\mathbb{E}[q_2\mid q_1]=q_1$. Analogously, we call $\nu$ a mean-preserving contraction of $\mu$. 

Blackwell~\cite{blackwell1950comparison} proved that for a signal $s_{\mathrm{in}}\in S_{\mathrm{in}}$ inducing some distribution of beliefs $\mu$, there exists a signaling policy $F=(S_{\mathrm{out}},f\colon S_{\mathrm{in}}\to \Delta(S_{\mathrm{out}}))$ such that a signal $s_{\mathrm{out}}$ induces a distribution $\nu$ if and only if $\mu \succeq \nu$. Informally, a mean-preserving contraction of a belief distribution corresponds to garbling the information.

Applying Blackwell's result to our model, we 
conclude that a sequence of belief distributions $\mu_1,\ldots, \mu_{n+1}$ corresponds to some profile of signaling policies $F_0,\ldots, F_n$ if and only if $\mu_0\succeq\mu_1\succeq \ldots\succeq \mu_{n+1}$.
Here $\mu_0$ denotes the distribution of the sender's belief about $\omega$ induced by observing the realization of $\omega$, i.e., $\mu_0=\sum_{k\in\Omega} p_k\delta_k$, where $\delta_k$ is the point mass at state $k$.

The set of $\mu_1$ such that $\mu_0=\sum_{k\in\Omega} p_k\delta_k\succeq \mu_1$ has a simple structure. It is determined by the martingale property: $\mu_0\succeq \mu_1$ if and only if $\mathbb{E}_{\mu_1}[q]=p$, i.e., the mean of $\mu_1$ is equal to the prior (by the splitting lemma  of Aumann and Maschler~\cite{aumann1995repeated}). We conclude that a necessary condition for $\mu_1,\ldots, \mu_{n+1}$ to correspond to some profile of signaling policies is that $\mathbb{E}_{\mu_i}[q]=p$ for all $i=1,\ldots,n$.

\section{Proof of Theorem \ref{theorem:main}}\label{app_one_receiver}
Here we formulate and prove a general version of Theorem \ref{theorem:main} allowing for irregular indirect utilities. 

There is one mediator $M$, and the indirect utilities of the sender and the mediator are $v_S$ and $v_M$, respectively. We do not impose any assumptions on $v_S$ and $v_M$ except for measurability and boundedness. For irregular $v_M$, the mediator may not have a best reply to some policies of the sender. To deal with such irregularities, we use \ed{subgame perfect $\varepsilon$-equilibrium with refined tie-breaking ($\varepsilon$-RTSPE)} introduced in Appendix~\ref{app_perfect}. Recall that, for fixed $\varepsilon\geq 0$,  we defined $V_S^\varepsilon(p)$ as the supremum of the sender's payoff over all \ed{$\varepsilon$-RTSPE} (if the set of such equilibria is empty,  $V_S^\varepsilon(p)=-\infty$). 

Recall the definition~\eqref{eq:max} of constrained concavification for an upper semicontinuous $v_S$ and a continuous  $v_M$: 
\begin{align}
\cav_{{D}}\big[v_S\big](p)=
\notag\max\Big\{ \sum_{k=1}^{{|\Omega|}} \alpha_k v_S(q_k) \ \Big| \ 
(q_1,\ldots,q_{{|\Omega|}})\in {D}, \ \alpha\in \Delta(\Omega), \ \sum_{k=1}^{{|\Omega|}}\alpha_k q_k=p \Big\},
\end{align}
where ${D}$ consists of collections $(q_1,\ldots,q_{{|\Omega|}})\in \Delta(\Omega)^\Omega$ that are affine dominating with respect to the mediator's indirect utility, i.e.,
$\sum_{k=1}^{{|\Omega|}} \alpha_k\cdot v_M\big(q_k\big)\geq v_M\left(\sum_{k=1}^{{|\Omega|}} \alpha_k\cdot q_k\right)$ for all non-negative  $\alpha_1,\ldots, \alpha_{|\Omega|}$ that sum up to one.
To handle discontinuities, we generalize the definition as follows. Let $\mathcal{M}$ be a subset of $\Delta(\Delta(\Omega))$. The constrained concavification of $v_S$ with respect to  $\mathcal{M}$ is given by
\begin{equation}\label{eq_constrained_concavification_extended}
\cav_{\mathcal{M}}\big[v_S\big](p)=
\sup\left\{ \mathbb{E}_\mu [v_S(q)] \  \Big| \ \mu\in\mathcal{M}, \  \mathbb{E}_\mu[q]=p \right\},
\end{equation}
where $\mathbb{E}_\mu[f(q)]$ denotes the expected value of $f(q)$ with $q$ distributed according to $\mu$.
Note that $\cav_{\mathcal{M}}\big[v_S\big]=\cav_{{D}}\big[v_S\big]$ if we define $\mathcal{M}$ as the set of all distributions of the form $\sum_{k=1}^{{|\Omega|}} \alpha_k \delta_{q_k}$ where $(q_1,\ldots,q_{{|\Omega|}})$ belong to ${D}$.

Using the notion of a mean-preserving contraction (see Appendix~\ref{subsec_Blackwell}), consider the set of belief distributions 
such that no garbling can increase the mediator's utility by more than $\varepsilon$:
\begin{equation}\label{eq_FR_epsilon_one_mediator}
{\mathcal{M}}^\varepsilon=\Big\{\mu\in \Delta(\Delta(\Omega)) \ \Big|\  \Big(\nu\preceq \mu\Big)\Longrightarrow  \mathbb{E}_\mu\big[v_M(q) \big]\geq \mathbb{E}_{\nu}\big[v_M(q) \big]-\varepsilon\Big\}.
\end{equation}
This set contains the point mass concentrated at $p$ for any $p\in \Delta(\Omega)$ and, hence, $\cav_{{\mathcal{M}}^\varepsilon}[v_S](p)$ is well defined for any $p$ since the maximization is over a non-empty set. 
\begin{theorem}[Generalized version of Theorem \ref{theorem:main}]\label{theorem:appendix_one_mediatior}
Assuming that indirect utilities $v_S$ and $v_M$ are bounded measurable functions, the sender's optimal payoff in a \ed{subgame perfect $\varepsilon$-equilibrium with refined tie-breaking} satisfies
$$V_S^\varepsilon(p)=\cav_{{\mathcal{M}}^\varepsilon}\big[v_S\big](p)$$ for any $\varepsilon>0$.

If, additionally, $v_S$ is  upper semicontinuous and $v_M$ is continuous, the optimal equilibrium exists and $$V_S^0(p)=\cav_{{\mathcal{M}}^0}\big[v_S\big](p)=\cav_{{D}}\big[v_S\big](p).$$
\end{theorem}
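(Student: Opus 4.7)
The plan is to prove the two matching inequalities $V_S^\varepsilon(p) \geq \cav_{\mathcal{M}^\varepsilon}[v_S](p)$ and $V_S^\varepsilon(p) \leq \cav_{\mathcal{M}^\varepsilon}[v_S](p)$ for arbitrary $\varepsilon > 0$ with merely measurable bounded indirect utilities, and then upgrade these to attainment of the supremum and to the $\cav_D$ formula under the continuity hypothesis.

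\textbf{Lower bound.} Fix $\mu \in \mathcal{M}^\varepsilon$ with $\mathbb{E}_\mu[q]=p$. By the splitting lemma of Aumann--Maschler (Appendix~\ref{subsec_Blackwell}), the sender has a policy whose induced distribution of the mediator's beliefs is exactly $\mu$. I propose the following profile: the sender uses this policy; on the equilibrium path the mediator adopts the full-revelation policy, and off path she adopts an $\varepsilon$-best reply that defaults to full revelation whenever full revelation is itself an $\varepsilon$-best reply; the receiver uses the fixed belief-driven strategy $\widehat{f}_{n+1}$. By Blackwell's theorem, any deviation of the mediator produces a receiver's belief distribution $\nu \preceq \mu$, and the very definition of $\mathcal{M}^\varepsilon$ says $\mathbb{E}_\nu[v_M] \leq \mathbb{E}_\mu[v_M] + \varepsilon$, so full revelation is an $\varepsilon$-best reply and the refinement is honored. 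The induced sender's payoff is $\mathbb{E}_\mu[v_S]$, and taking the supremum over $\mu$ yields the lower bound.

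\textbf{Upper bound.} The heart of the argument is a revelation-style claim: in any \ed{$\varepsilon$-RTSPE}, the receiver's belief distribution $\mu_R$ lies in $\mathcal{M}^\varepsilon$. Indeed, letting $\mu_M$ be the mediator's belief distribution, Blackwell gives $\mu_M \succeq \mu_R$ and $\mathbb{E}_{\mu_R}[q]=p$. For any $\nu \preceq \mu_R$, transitivity gives $\nu \preceq \mu_M$, so $\nu$ was a feasible deviation target for the mediator in the original equilibrium; $\varepsilon$-optimality of her policy forces $\mathbb{E}_\nu[v_M] \leq \mathbb{E}_{\mu_R}[v_M]+\varepsilon$, which is precisely membership in $\mathcal{M}^\varepsilon$. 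Since the sender's payoff equals $\mathbb{E}_{\mu_R}[v_S]$, the upper bound follows.

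\textbf{The continuous, $\varepsilon=0$ case.} Under continuity of $v_M$, the set $\mathcal{M}^0$ is weak-$*$ closed in the weak-$*$ compact space $\Delta(\Delta(\Omega))$, so the slice $\{\mu\in\mathcal{M}^0 : \mathbb{E}_\mu[q]=p\}$ is compact; with $v_S$ bounded and upper semicontinuous, $\mu \mapsto \mathbb{E}_\mu[v_S]$ is upper semicontinuous, hence the supremum in~\eqref{eq_constrained_concavification_extended} is attained. Applying the lower-bound construction to the maximizer produces a genuine \ed{RTSPE}, giving both existence and $V_S^0(p)=\cav_{\mathcal{M}^0}[v_S](p)$. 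For the identity $\cav_{\mathcal{M}^0}[v_S]=\cav_D[v_S]$ I would invoke the characterization (Lemma~\ref{lm_FR_as_a_restriction_on_support}) that $\mu \in \mathcal{M}^0$ iff every finite subset of $\supp\mu$ is affine dominating w.r.t.\ $v_M$, and then run a Carath\'eodory / extreme-point reduction on any optimal $\mu^*$ to replace it by a distribution supported on at most $|\Omega|$ atoms drawn from $\supp\mu^*$, with the same mean and weakly higher sender value; affine domination is inherited by any subset of the support.

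\textbf{Main obstacle.} The most delicate point is the revelation-principle step: one must verify that ``the mediator's feasible deviations'' in the original equilibrium really parametrize every $\nu\preceq\mu_M$, and this rests on the correct use of Blackwell's theorem together with the tie-breaking refinements of Definition~\ref{def_subgame} (without the refinements the dummy-mediator pathology would invalidate the construction in the lower bound). A secondary difficulty is the Carath\'eodory reduction in the $\varepsilon=0$ case: unlike standard Bayesian persuasion, here the trimmed distribution must still lie in $\mathcal{M}^0$, which makes it essential that the reduction selects its atoms from $\supp\mu^*$ rather than from an arbitrary enlargement. Handling discontinuities of $v_S$ in the $\varepsilon>0$ case additionally requires standard approximation arguments to convert supremum bounds into the desired $\varepsilon$-equilibrium value.
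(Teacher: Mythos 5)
Your overall architecture mirrors the paper's proof almost exactly: the upper bound via the observation that the receiver's belief distribution in any $\varepsilon$-RTSPE must lie in $\mathcal{M}^\varepsilon$ (Blackwell parametrizes the mediator's deviations by all $\nu\preceq\mu_M$, and $\varepsilon$-optimality plus transitivity of $\preceq$ gives membership), the lower bound via an explicit profile in which the sender does all the garbling and the mediator's strategy respects the full-revelation refinement, and the $\varepsilon=0$ upgrade via weak-$*$ compactness of $\mathcal{M}^0$, upper semicontinuity of $\mu\mapsto\mathbb{E}_\mu[v_S]$, the support characterization of $\mathcal{M}^0$, and an extreme-point reduction to at most $|\Omega|$ atoms drawn from $\supp\mu^*$ (the paper does this via Bauer's principle and Winkler's theorem on extreme points of moment sets, which is exactly the Carath\'eodory-style reduction you sketch, including your correct emphasis that the atoms must stay inside $\supp\mu^*$).

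There is, however, one concrete gap in your lower bound: you fix an \emph{arbitrary} $\mu\in\mathcal{M}^\varepsilon$ with mean $p$, build the profile, and then ``take the supremum over $\mu$,'' but you never verify the \emph{sender's} $\varepsilon$-incentive constraint. Membership of $\mu$ in $\mathcal{M}^\varepsilon$ only disciplines the mediator; if $\mathbb{E}_\mu[v_S]$ is far below $\cav_{\mathcal{M}^\varepsilon}[v_S](p)$, the sender has a deviation (to a policy inducing a better distribution in $\mathcal{M}^\varepsilon$, which the mediator will pass through unchanged) that improves her payoff by more than $\varepsilon$, so the proposed profile is simply not an $\varepsilon$-RTSPE and contributes nothing to the supremum defining $V_S^\varepsilon(p)$. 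The fix is what the paper does: for each $\delta>0$ choose $\mu^\delta$ with $\mathbb{E}_{\mu^\delta}[v_S]\geq\cav_{\mathcal{M}^\varepsilon}[v_S](p)-\min\{\varepsilon,\delta\}$, and then invoke your own upper-bound lemma to cap the sender's payoff under any deviation (against the mediator's constructed $\varepsilon$-best-reply strategy) by $\cav_{\mathcal{M}^\varepsilon}[v_S](p)$, so that no deviation gains more than $\varepsilon$; letting $\delta\to 0$ gives the lower bound on the supremum. The same care is needed in the $\varepsilon=0$ case, where you should also note explicitly that continuity of $v_M$ guarantees the mediator's exact best replies exist off path (compactness of $\{\nu:\nu\preceq\mu_M\}$), so that the constructed profile is a genuine RTSPE rather than only an approximate one.
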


The proof of Theorem~\ref{theorem:appendix_one_mediatior} is split into several lemmas. \ed{Lemma~\ref{lm_upper_varepsilon} shows that the sender's optimal payoff is upper-bounded by $\cav_{{\mathcal{M}}^\varepsilon}\big[v_S\big]$, Lemma~\ref{lm_lower_varepsilon} gives the matching  lower bound. Essentially, these two lemmas demonstrate that it is enough to look at those equilibria where the only garbling is done by the sender and the mediator has no $\varepsilon$-profitable deviations from  revealing the information fully to the receiver. The set of belief distributions that the sender can induce in this class of equilibria is exactly the set ${\mathcal{M}}^\varepsilon$. Lemma~1 from online Appendix of (Lipnowski~et al.~\cite{lipnowski2020attention}) implies that for upper semicontinuous $v_S$, continuous $v_M$, and $\varepsilon=0$, it is enough
to maximize over those distributions from $\mathcal{M}^0$ that are supported on at most $|\Omega|$ points. For reader's convenience, we give a self-contained proof of this fact in Lemmas~\ref{lm_FR_as_a_restriction_on_support} and~\ref{lm_two_concavifications_are_the_same}. As a result, we obtain that
$\cav_{{\mathcal{M}}^0}\big[v_S\big]$ coincides with $\cav_{{D}}\big[v_S\big]$.}

\begin{lemma}\label{lm_upper_varepsilon}
Let $F_S=(S_S,\,  f_S\colon\Omega\to \Delta(S_S))$ be a policy chosen by the sender and $F_M=(S_M,\, f_M\colon S_S\to \Delta(S_M))$ be a mediator's $\varepsilon$-best reply$\,$\footnote{Recall that $F_M$ is an $\varepsilon$-best reply to $F_S$ if any other policy $F_M'=(S_M',\, f_M'\colon\, S_S\to \Delta(S_M'))$ cannot increase the mediator's payoff by more than $\varepsilon$ provided that the sender keeps her policy $F_S$ unchanged.} for some $\varepsilon\geq 0$. Then the sender's expected payoff cannot exceed $\cav_{{\mathcal{M}}^\varepsilon}\big[v_S\big](p).$
\end{lemma}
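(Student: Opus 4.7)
The plan is to apply a revelation-principle-style argument. Let $\mu \in \Delta(\Delta(\Omega))$ denote the distribution of the receiver's posterior belief $p_R$ induced by the sender's policy $F_S$ and the mediator's policy $F_M$. By the law of iterated expectations, the sender's expected payoff equals $\mathbb{E}_\mu[v_S(q)]$, and the martingale property gives $\mathbb{E}_\mu[q]=p$. The goal is therefore to establish $\mu \in {\mathcal{M}}^\varepsilon$; once this is shown, $\mathbb{E}_\mu[v_S(q)]\leq \cav_{{\mathcal{M}}^\varepsilon}[v_S](p)$ follows directly from the definition~\eqref{eq_constrained_concavification_extended}.

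The key step is to reinterpret the composition $F_M\circ F_S$ as if the sender had chosen it on her own. Formally, introduce the auxiliary problem in which the sender commits to the policy $\widetilde F_S=(S_M, F_M\circ f_S)\colon \Omega\to\Delta(S_M)$ and the mediator starts from the signal $s'\in S_M$ so produced. In this auxiliary scenario the full-revelation policy of the mediator induces exactly the same joint distribution of $(\omega, p_R)$ as $(F_S, F_M)$ does in the original scenario; in particular, the distribution of the mediator's belief coincides with $\mu$. For any alternative mediator policy $G_M\colon S_M\to\Delta(S_M')$ in the auxiliary scenario, the composition $G_M\circ F_M$ is a feasible policy for the mediator in the \emph{original} scenario, and it yields the same joint distribution of $(\omega,p_R)$ as $G_M$ does in the auxiliary scenario. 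Since $F_M$ is an $\varepsilon$-best reply to $F_S$, switching from $F_M$ to $G_M\circ F_M$ cannot increase the mediator's payoff by more than $\varepsilon$. Therefore, in the auxiliary scenario, full revelation is itself an $\varepsilon$-best reply of the mediator.

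To conclude that $\mu \in {\mathcal{M}}^\varepsilon$, pick an arbitrary mean-preserving contraction $\nu\preceq \mu$. By Blackwell's theorem (see Appendix~\ref{subsec_Blackwell}), there exists a Markov kernel $G_M\colon S_M\to\Delta(S_M')$ such that the distribution of the receiver's belief in the auxiliary scenario, when the mediator uses $G_M$, is exactly $\nu$. The mediator's payoff from $G_M$ in the auxiliary scenario equals $\mathbb{E}_\nu[v_M(q)]$, while her payoff from full revelation equals $\mathbb{E}_\mu[v_M(q)]$. The $\varepsilon$-optimality of full revelation established above thus yields $\mathbb{E}_\nu[v_M(q)]\leq \mathbb{E}_\mu[v_M(q)]+\varepsilon$, which is precisely the condition defining membership in ${\mathcal{M}}^\varepsilon$ in~\eqref{eq_FR_epsilon_one_mediator}. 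Combined with $\mathbb{E}_\mu[q]=p$, this gives $\mathbb{E}_\mu[v_S(q)]\leq \cav_{{\mathcal{M}}^\varepsilon}[v_S](p)$, as desired.

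The main obstacle is the careful bookkeeping of the auxiliary scenario: one has to verify that composing Markov kernels preserves the induced joint distributions and that Blackwell's construction produces a legitimate alternative policy for the mediator in both scenarios. Once these measure-theoretic details are in place, the argument reduces to the elementary inequality above.
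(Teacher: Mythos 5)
Your proof is correct and follows essentially the same route as the paper's: identify $\mu$ as the receiver's belief distribution, note $\mathbb{E}_\mu[q]=p$ by the martingale property, and show $\mu\in{\mathcal{M}}^\varepsilon$ by converting any contraction $\nu\preceq\mu$ into a feasible mediator deviation via Blackwell's theorem and invoking $\varepsilon$-optimality of $F_M$. Your auxiliary-scenario construction (composing $G_M\circ F_M$) is just a more explicit way of exhibiting the deviating kernel that the paper obtains directly from $\nu\preceq\mu\preceq\mu_M$; it is a presentational difference, not a different argument.
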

\begin{proof}[Proof of Lemma~\ref{lm_upper_varepsilon}]
The pair of policies $F_S$ and $F_M$ and the prior $p$ induce a joint distribution of $\omega$, $s_S$, and $s_M$. Let $p_R\in \Delta(\Omega)$ be the receiver's  belief after observing the  signal $s_M$ and $\mu\in \Delta(\Delta(\Omega))$ be the distribution of $p_R$. The sender's payoff is equal to $\mathbb{E}[v_S(p_R)]$ or, equivalently, $\mathbb{E}_\mu[v_S(q)]$ with $q$ distributed according to $\mu$. Hence, to prove the lemma, it is enough to show that  $\mu$ satisfies $\mathbb{E}_\mu[q]=p$ and belongs to ${\mathcal{M}}^\varepsilon$. The first requirement is satisfied by the martingale property of posterior beliefs: the expected posterior is equal to the prior and so $\mu$ has the right mean, $\mathbb{E}_\mu[q]=p$. It remains to show that for any $\nu\preceq \mu$, the mediator's payoff for $\nu$ cannot exceed that for $\mu$ by more than $\varepsilon$, i.e., $\mathbb{E}_\mu\big[v_M(q) \big]\geq \mathbb{E}_{\nu}\big[v_M(q) \big]-\varepsilon$. By Blackwell's theorem, for any $\nu\preceq \mu$ there is a policy $F_M'=(S_M',\, f_M'\colon S_S\to \Delta(S_M'))$ whereby the mediator induces the distribution $\nu$ of the receiver's beliefs. Since $F_M$ is an $\varepsilon$-best reply, no policy $F_M'$ can increase the mediator's utility by more than $\varepsilon$. Therefore, $\mu$ belongs to ${\mathcal{M}}^\varepsilon$ and thus the sender's payoff does not exceed~$\cav_{{\mathcal{M}}^\varepsilon}\big[v_S\big](p)$. 
\end{proof}
From Lemma~\ref{lm_upper_varepsilon}, we conclude that the sender's optimal payoff $V_S^\varepsilon(p)$ cannot exceed $\cav_{{\mathcal{M}}^\varepsilon}\big[v_S\big]$.
The next lemma provides a lower bound.
\begin{lemma}\label{lm_lower_varepsilon}
For any $\varepsilon>0$, the sender's optimal payoff in \ed{an $\varepsilon$-RTSPE} satisfies 
\begin{equation}\label{eq_lower_bound}
V_S^\varepsilon(p)\geq \cav_{{\mathcal{M}}^\varepsilon}\big[v_S\big](p).
\end{equation}

If $v_S$ is upper semicontinuous and bounded and $v_M$ is continuous, the bound~\eqref{eq_lower_bound}  also holds for $\varepsilon=0$ and there exists \ed{an $\varepsilon$-RTSPE} such that the sender's payoff is at least $\cav_{{\mathcal{M}}^0}\big[v_S\big](p)$.
\end{lemma}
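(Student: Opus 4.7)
The plan is to construct, for each $\delta>0$, an explicit $\varepsilon$-RTSPE whose sender payoff lies within $\delta$ of $\cav_{{\mathcal{M}}^\varepsilon}[v_S](p)$; taking $\delta\downarrow 0$ will then give the bound~\eqref{eq_lower_bound}. Pick $\mu\in {\mathcal{M}}^\varepsilon$ with mean $p$ satisfying $\mathbb{E}_\mu[v_S(q)]\geq \cav_{{\mathcal{M}}^\varepsilon}[v_S](p)-\delta$; such $\mu$ exists by the definition of the supremum. By the Aumann--Maschler splitting lemma (see Appendix~\ref{subsec_Blackwell}), the sender can choose a policy $F_S$ inducing $\mu$ as the distribution of the mediator's posterior. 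On the equilibrium path, the mediator plays the full-revelation policy, so the receiver's belief distribution coincides with $\mu$ and the sender's realized payoff equals $\mathbb{E}_\mu[v_S(q)]$.

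To turn this into a valid strategy profile I extend the mediator's strategy to every off-path history and verify the three clauses of Definition~\ref{def_subgame}. For an arbitrary sender policy inducing a mediator belief distribution $\mu'$, prescribe full revelation whenever $\mu'\in {\mathcal{M}}^\varepsilon$, which is precisely the condition that full revelation is an $\varepsilon$-best reply, thereby simultaneously satisfying subgame perfection and the full-revelation refinement. Otherwise, prescribe any $\varepsilon$-best reply; its existence for $\varepsilon>0$ follows from boundedness of $v_M$ by approximating the supremum of $\nu\mapsto \mathbb{E}_\nu[v_M(q)]$ over $\{\nu\preceq \mu'\}$ within $\varepsilon$ and using Blackwell's theorem to realize the approximating $\nu$ through a garbling. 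The receiver's action is belief-driven through the primitive $\widehat{f}_{n+1}$. Finally, the sender's $\varepsilon$-best reply at the empty history is certified by Lemma~\ref{lm_upper_varepsilon}: because the mediator's prescribed response is always an $\varepsilon$-best reply, every sender deviation yields payoff at most $\cav_{{\mathcal{M}}^\varepsilon}[v_S](p)\leq \mathbb{E}_\mu[v_S(q)]+\delta$, so choosing $\delta\leq \varepsilon$ makes the profile an $\varepsilon$-RTSPE.

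For the continuous case ($\varepsilon=0$, $v_S$ bounded upper semicontinuous, $v_M$ continuous), the argument is upgraded to an attained maximum. The main obstacle is compactness: I will show that $\{\mu\in {\mathcal{M}}^0\colon \mathbb{E}_\mu[q]=p\}$ is weak-$*$ compact. The set of Borel probabilities on the compact simplex $\Delta(\Omega)$ with prescribed mean is weak-$*$ compact; continuity of $v_M$ makes $\mu\mapsto \mathbb{E}_\mu[v_M(q)]$ continuous, which together with closedness of the mean-preserving contraction order (standard via the convex-function characterization of $\preceq$) implies that ${\mathcal{M}}^0$ is weak-$*$ closed. Boundedness and upper semicontinuity of $v_S$ yield upper semicontinuity of $\mu\mapsto \mathbb{E}_\mu[v_S(q)]$ via approximation of $v_S$ from above by continuous functions, so the supremum is attained at some $\mu^*$. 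Using $\mu^*$ in the construction above and prescribing exact best replies to the mediator off path, whose existence follows from the same compactness argument applied to $\{\nu\preceq\mu'\}$ together with continuity of $v_M$, produces a genuine RTSPE realizing the value $\cav_{{\mathcal{M}}^0}[v_S](p)$.
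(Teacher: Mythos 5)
Your proposal is correct and follows essentially the same route as the paper's proof: pick a near-optimal $\mu\in{\mathcal{M}}^\varepsilon$, have the sender induce it via the splitting lemma, prescribe full revelation for the mediator exactly when the induced belief distribution lies in ${\mathcal{M}}^\varepsilon$ (and an $\varepsilon$-best garbling otherwise), certify the sender's $\varepsilon$-optimality via Lemma~\ref{lm_upper_varepsilon} with $\delta\leq\varepsilon$, and upgrade to exact optima in the continuous case by the same compactness and upper-semicontinuity arguments. No substantive differences.
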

\begin{proof}[Proof of Lemma~\ref{lm_lower_varepsilon}]
To prove~\eqref{eq_lower_bound}, it is enough to show that for any $\delta>0$, there is \ed{an $\varepsilon$-RTSPE} where the sender's payoff is at least $\cav_{{\mathcal{M}}^\varepsilon}\big[v_S\big](p)-\delta$. Pick a distribution $\mu^\delta\in {\mathcal{M}}^\varepsilon$ such that $\mathbb{E}_{\mu^\delta}[q]=p$ and $\mathbb{E}_{\mu^\delta} [v_S(q)]$ is at least  $\cav_{{\mathcal{M}}^\varepsilon}\big[v_S\big](p)-\min\{\varepsilon,\delta\}$. 

We construct the desired equilibrium as follows. The sender's strategy is such that her signal $s_S$ induces the distribution of beliefs $\mu^\delta$ of the mediator. The mediator is playing an $\varepsilon$-best reply to this policy or any other policy that the sender may have chosen. 
An $\varepsilon$-best reply can be constructed explicitly.
If the distribution of the mediator's beliefs $\mu_M$ induced by the sender's signal $s_S$ belongs to ${\mathcal{M}}^\varepsilon$, the mediator transmits the signal unchanged; i.e., $f_M: \, S_S\to \Delta(S_S)$ maps the signal $s_S$ to the point mass at $s_S$. 
If $\mu_M$ is outside of ${\mathcal{M}}^\varepsilon$, the mediator can garble the sender's signal and improve her own payoff by more than $\varepsilon$; i.e., there is $\nu\preceq \mu_M$ such that $\mathbb{E}_{\mu_M} [v_M(q)]<\mathbb{E}_{\nu} [v_M(q)]-\varepsilon$. The mediator selects a policy $f_M: \, S_S\to \Delta(S_M)$ in order to induce a distribution $\nu=\nu^\varepsilon(\mu_M)$ that gives her an $\varepsilon$-optimal payoff; i.e., $\mathbb{E}_{\nu^\varepsilon(\mu_M)} [v_M(q)]+\varepsilon\geq \sup\left\{\mathbb{E}_{\nu} [v_M(q)], \ \ \nu\preceq \mu_M\right \}$.

Now we check that the described strategies form \ed{an $\varepsilon$-RTSPE.} By the definition of ${\mathcal{M}}^\varepsilon$ and the choice of $\nu^\varepsilon(\mu_M)$, no deviation of the mediator can improve her payoff by more than $\varepsilon$. Hence, the mediator's  strategy is an $\varepsilon$-best reply to any strategy of the sender. By Lemma~\ref{lm_upper_varepsilon}, the sender's payoff is bounded from above by $\cav_{{\mathcal{M}}^\varepsilon}\big[v_S\big](p)$.
The mediator does not garble the sender's signal unless garbling is strictly profitable; i.e., the full revelation refinement  holds (see Definition~\ref{def_subgame}). In particular, the mediator does not garble the signal inducing the distribution $\mu^\delta$ as $\mu^\delta$ belongs to ${\mathcal{M}}^\varepsilon$. Thus, by the choice of $\mu_\delta$, the sender's payoff is at least $\cav_{{\mathcal{M}}^\varepsilon}\big[v_S\big](p)-\min\{\varepsilon,\delta\}$. We conclude that the sender has no deviations improving her payoff by more than $\varepsilon$.

To summarize, we constructed \ed{an $\varepsilon$-RTSPE} with the sender's payoff at least $\cav_{{\mathcal{M}}^\varepsilon}\big[v_S\big](p)-\delta$. As $\delta$ was arbitrary, we obtain the desired bound~\eqref{eq_lower_bound}.

Now consider the case of bounded upper semicontinuous $v_S$ and  continuous $v_M$. The continuity of $v_M$ ensures that the set ${\mathcal{M}}^\varepsilon$  as well as the set of distributions $\nu$ are compact in the weak topology and, in particular, the best-reply distribution $\nu^\varepsilon(\mu_M)$ with $\varepsilon=0$ exists  for any $\mu_M$. The upper semicontinuity of $v_S$ implies the upper semicontinuity of 
$\mathbb{E}_\mu [v_S(q)]$ as a function of $\mu$. An upper semicontinuous functional attains its maximum on a compact set and hence the sender's optimal distribution $\mu^\delta$ exists for $\delta=0$. Therefore, we can plug $\varepsilon=\delta=0$ into the above construction and  ensure that all the optima are attained. Thus, for continuous $v_M$ and  upper semicontinuous $v_S$,  we obtain \ed{an RTSPE} with a sender's payoff of at least $\cav_{{\mathcal{M}}^0}\big[v_S\big](p)$. 
\end{proof}
Lemmas~\ref{lm_upper_varepsilon} and~\ref{lm_lower_varepsilon} give matching upper and lower bounds on the sender's best payoff $V_S^\varepsilon(p)$ and imply the first part of Theorem~\ref{theorem:appendix_one_mediatior}.  It remains to show that  for regular indirect utilities, the concavification $\cav_{{\mathcal{M}}^0}\big[v_S\big]$ can be computed as a maximization  over distributions supported on $|\Omega|$ affine dominating points, i.e.,
$\cav_{{\mathcal{M}}^0}\big[v_S\big]=\cav_{{D}}\big[v_S\big]$. This is done in two steps. First, in Lemma~\ref{lm_FR_as_a_restriction_on_support}, we show that whether or not a distribution $\mu$ belongs to ${\mathcal{M}}^0$ is determined by the support of $\mu$. Next, Lemma~\ref{lm_two_concavifications_are_the_same}  leverages this observation to show the desired equality via an extreme-point argument.

Let us extend the notion of affine domination from collections  $q_1,\ldots,q_{|\Omega|}$  of posteriors to arbitrary closed subsets of $\Delta(\Omega)$.
\begin{definition}\label{def:see2}
 A closed subset $\mathcal{D}\subset \Delta(\Omega)$ is \emph{affine dominating with respect to a continuous function $f:\, \Delta(\Omega)\to \mathbb{R}$} if for every measure $\mu\in\Delta(\Delta(\Omega))$ such that $\mu(\mathcal{D})=1$, it holds that $\mathbb{E}_\mu[f(q)]\geq f\big(\mathbb{E}_\mu[q]\big)$.
\end{definition}
Recall that the support $\supp[\mu]$ of a distribution $\mu$ is the minimal closed set of full measure.

\begin{lemma}\label{lm_FR_as_a_restriction_on_support}
For continuous $v_M$, a distribution $\mu\in\Delta(\Delta(\Omega))$ belongs to ${\mathcal{M}}^0$ if and only if the support $\supp[\mu]$ is affine  dominating with respect to $v_M$.
\end{lemma}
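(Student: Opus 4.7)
The strategy is to prove the two implications separately. The forward direction ($\Leftarrow$) is a direct disintegration argument, while the reverse direction ($\Rightarrow$) proceeds by contrapositive and requires an explicit construction of a strictly profitable garbling of $\mu$. For ($\Leftarrow$), fix $\mu$ with $\supp[\mu]$ affine dominating and any $\nu\preceq\mu$. By definition of mean-preserving contraction, there exist random variables $q_1\sim\nu$ and $q_2\sim\mu$ on a common probability space with $\mathbb{E}[q_2\mid q_1]=q_1$. Disintegrating, let $\mu_{q_1}$ denote the regular conditional distribution of $q_2$ given $q_1$; since the outer marginal of $q_2$ is $\mu$, we have $\mu_{q_1}(\supp[\mu])=1$ for $\nu$-a.e.\ $q_1$, while the martingale condition gives $\mathbb{E}_{\mu_{q_1}}[q]=q_1$. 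Applying affine domination of $\supp[\mu]$ to each $\mu_{q_1}$ yields $\mathbb{E}_{\mu_{q_1}}[v_M(q)]\geq v_M(q_1)$, and integrating against $\nu$ gives $\mathbb{E}_\mu[v_M(q)]\geq\mathbb{E}_\nu[v_M(q)]$, so $\mu\in\mathcal{M}^0$.

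For ($\Rightarrow$), suppose $\supp[\mu]$ is not affine dominating, so there exists $\lambda\in\Delta(\Delta(\Omega))$ with $\lambda(\supp[\mu])=1$ and $v_M(p^*)>\mathbb{E}_\lambda[v_M(q)]$, where $p^*=\mathbb{E}_\lambda[q]$. Using weak-$*$ density of finitely supported probability measures in $\Delta(\supp[\mu])$ and continuity of both $v_M$ and $\lambda\mapsto\mathbb{E}_\lambda[q]$, I may assume $\lambda=\sum_{i=1}^k\alpha_i\delta_{q_i}$ with distinct $q_i\in\supp[\mu]$ and $\alpha_i>0$ summing to one. Pick disjoint open neighborhoods $B_i\ni q_i$ of small diameter $\delta>0$; each satisfies $\mu(B_i)>0$ by definition of support. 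Define $\mu_i=\mu|_{B_i}/\mu(B_i)$, $r_i=\mathbb{E}_{\mu_i}[q]$, $r^*=\sum_i\alpha_i r_i$, and, for $0<t\leq \min_i \mu(B_i)/\alpha_i$,
$$\nu_t=\mu-t\sum_{i=1}^k\alpha_i\mu_i+t\,\delta_{r^*}.$$
A direct check shows $\nu_t\in\Delta(\Delta(\Omega))$ with $\mathbb{E}_{\nu_t}[q]=\mathbb{E}_\mu[q]$, and $\nu_t\preceq\mu$ via the coupling that sets $q_2=q_1$ outside the lump and, conditional on $q_1=r^*$, draws $q_2$ from $\sum_i\alpha_i\mu_i$ (which has mean $r^*$). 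Computing,
$$\mathbb{E}_{\nu_t}[v_M(q)]-\mathbb{E}_\mu[v_M(q)]=t\Bigl(v_M(r^*)-\sum_i\alpha_i\mathbb{E}_{\mu_i}[v_M(q)]\Bigr).$$
Continuity of $v_M$ forces $r^*\to p^*$ and $\mathbb{E}_{\mu_i}[v_M(q)]\to v_M(q_i)$ as $\delta\to 0$, so the bracket approaches $v_M(p^*)-\sum_i\alpha_i v_M(q_i)>0$. Choosing $\delta$ small enough and any admissible $t>0$ produces $\nu_t\preceq\mu$ with $\mathbb{E}_{\nu_t}[v_M(q)]>\mathbb{E}_\mu[v_M(q)]$, contradicting $\mu\in\mathcal{M}^0$.

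The main obstacle is the reverse direction: one must justify the reduction to finitely supported $\lambda$ (using closedness of $\supp[\mu]$ and continuity of $v_M$) and, more importantly, produce a mean-preserving contraction that strictly improves $v_M$ while preserving the mean of $\mu$ globally. The delicate step is the simultaneous choice of the lump location $r^*$ and the neighborhoods $B_i$ so that both the martingale coupling and the strict improvement survive in the limit $\delta\to 0$.
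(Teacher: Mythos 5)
Your proposal is correct and follows essentially the same route as the paper: the ``if'' direction is the identical martingale/disintegration argument, and the ``only if'' direction is the same localized condensation --- your $\nu_t$ with $t=\min_i\mu(B_i)/\alpha_i$ is exactly the paper's $\nu=\tfrac1C\delta_{\mathbb{E}_{\overline\tau}[q]}+(1-\tfrac1C)\gamma$ with $C=\max_i\alpha_i/\mu(U_i)$. Your explicit $\delta\to 0$ limit makes the continuity step (that the strict violation of affine domination survives the smearing over the neighborhoods $B_i$) slightly more transparent than the paper's phrasing, but it is the same argument.
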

\begin{proof}  \emph{The ``if'' direction.}
We assume that $\supp[\mu]$ is affine dominating and show that, for any $\nu\preceq \mu$, the mediator prefers $\mu$ to $\nu$, i.e.,  $\mathbb{E}_\mu\big[v_M(q) \big]\geq \mathbb{E}_{\nu}\big[v_M(q)\big]$.
 By the definition of a mean-preserving spread, there exists a martingale $X_1,X_2$ (on a natural filtration) such that $X_1$ is distributed according to $\nu$, and $X_2$ is distributed according to $\mu$. Hence, the mediator's payoff for $\nu$ can be represented as follows:
$$\mathbb{E}_{\nu}\big[v_M(q)\big]=\mathbb{E}\big[v_M(X_1)\big].$$
On the other hand,
$$\mathbb{E}_{\mu}\big[v_M(q)\big]=\mathbb{E}\big[v_M(X_2)\big]\geq \mathbb{E}\Big[v_M\big(\mathbb{E}[X_2\mid X_1]\big)\Big]=\mathbb{E}\big[v_M(X_1)\big],$$
where the inequality holds since  $X_2$ is supported on the affine dominating set and the last equality follows from the fact that $(X_1,X_2)$ is a martingale.
We conclude that $\mathbb{E}_\mu\big[v_M(q) \big]\geq \mathbb{E}_{\nu}\big[v_M(q)\big]$ and thus any distribution supported on an affine dominating set belongs to ${\mathcal{M}}^0$.
\smallskip 

\noindent \emph{The ``only if'' direction.} 
Assume that  $\supp[\mu]$ is not affine dominating and show that there is $\nu\preceq \mu $ preferred by the mediator to  $\mu$, i.e.,  $\mathbb{E}_\mu\big[v_M(q) \big]< \mathbb{E}_{\nu}\big[v_M(q)\big]$.

Since the condition of affine domination is violated, we can find a distribution $\tau$ such that $\supp[\tau]\subset \supp[\mu]$ and $\mathbb{E}_\tau\big[v_M(q)\big] < v_M\big(\mathbb{E}_\tau[q]\big)$.
Let us show that one can find such a distribution with the additional property that it has a bounded density with respect to~$\mu$.

By the continuity of $v_M$, we can find $\tau$ with a finite support (indeed, start from a general $\tau$ and approximate it by a distribution supported on an $\varepsilon$-net of $\Delta(\Omega)$ for small enough $\varepsilon$). Let $q_1,\ldots,q_m$ be the posteriors from the support of $\tau$ and $\alpha_1,\ldots,\alpha_m$ be the respective weights, i.e., $\tau=\sum_{i=1}^m \alpha_i\cdot \delta_{q_i}$. Once again, by the continuity of $v_M$, we can find disjoint open neighborhoods $U_{i}$ of $q_i$, $i=1,\ldots, m$, such that for any $q_i'\in U_i$ the distribution $\tau'=\sum_{i=1}^m \alpha_i\cdot \delta_{q'_i}$ also has the property $\mathbb{E}_{\tau'}\big[v_M(\mu)\big] < v_M\big(\mathbb{E}_{\tau'}[q]\big)$  (points may be different but the weights remain the same). Define a distribution $\overline{\tau}$ as follows:
$$\overline{\tau}=\sum_{i=1}^m \frac{\alpha_i}{\mu(U_i)}\mu\vert_{U_i},$$
where $\mu\vert_{U_i}$ denotes the restriction of $\mu$ to $U_i$, i.e., $\mu\vert_{U_i}(B)=\mu(B\cap U_i)$ for every Borel set $B$. Note that the denominators $\mu(U_i)\ne 0$ since $q_i$ belong the support of $\mu$.

By the construction of $\overline{\tau}$, the inequality
$\mathbb{E}_{\overline{\tau}}\big[v_M(q)\big] < v_M\big(\mathbb{E}_{\overline{\tau}}(q)\big)$ holds  and 
$\overline{\tau}$ has a density with respect to $\mu$ bounded by $C=\max_i \frac{\alpha_i}{\mu(U_i)}\in [1,\infty)$. 
Therefore, $\mu$ can be represented as the convex combination $\mu=\frac{1}{C}\cdot \overline{\tau}+\left(1-\frac{1}{C}\right)\cdot \gamma,$ 
where $\gamma$ is a probability measure on $\Delta(\Omega)$. Let $\nu$ be the distribution that we get by condensing $\overline{\tau}$ to its center of masses $\mathbb{E}_{\overline{\tau}}[q]$ in this convex combination:
$$\nu=\frac{1}{C}\cdot\delta_{\mathbb{E}_{\overline{\tau}}[q]}+\left(1-\frac{1}{C}\right)\cdot \gamma.$$
Thus $\nu\preceq \mu$ and $\mathbb{E}_\mu\big[v_M(q) \big]< \mathbb{E}_{\nu}\big[v_M(q)\big]$. We conclude that $\mu$ is supported on a set that is not affine dominating cannot belong to ${\mathcal{M}}^0$.
\end{proof}

\begin{lemma}\label{lm_two_concavifications_are_the_same}
For  bounded upper semicontinuous $v_S$ and continuous $v_M$,
\begin{equation}\label{eq_concavification_as_maximization_over_FR}
   \cav_{{\mathcal{M}}^0}\big[v_S\big]=\cav_{{D}}\big[v_S\big].
\end{equation}
\end{lemma}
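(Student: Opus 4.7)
The plan is to show the two inequalities $\cav_{{\mathcal{M}}^0}[v_S](p)\geq \cav_{{D}}[v_S](p)$ and $\cav_{{\mathcal{M}}^0}[v_S](p)\leq \cav_{{D}}[v_S](p)$ separately. The first is immediate: given $(q_1,\ldots,q_{|\Omega|})\in {D}$ and weights $\alpha\in\Delta(\Omega)$ with $\sum_k \alpha_k q_k=p$, the discrete distribution $\mu=\sum_k \alpha_k \delta_{q_k}$ has support contained in the affine-dominating finite set $\{q_1,\ldots,q_{|\Omega|}\}$, so Lemma~\ref{lm_FR_as_a_restriction_on_support} places $\mu$ in ${\mathcal{M}}^0$, and it realizes the mean $p$ together with $\mathbb{E}_\mu[v_S]=\sum_k\alpha_k v_S(q_k)$.

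The bulk of the work is the reverse inequality, where I would fix $\mu\in{\mathcal{M}}^0$ with $\mathbb{E}_\mu[q]=p$ and produce a finitely supported replacement with at least the same value. Let $X=\supp[\mu]$, a compact subset of $\Delta(\Omega)$ which, by Lemma~\ref{lm_FR_as_a_restriction_on_support}, is affine dominating with respect to $v_M$. Consider the weak$^*$-compact convex set
$$C=\{\nu\in \Delta(X)\,:\, \mathbb{E}_\nu[q]=p\},$$
which is non-empty as it contains $\mu$. Because $v_S$ is bounded and upper semicontinuous on the compact set $X$, it is a pointwise decreasing limit of continuous functions; hence the linear functional $\nu\mapsto \mathbb{E}_\nu[v_S]$ is weak$^*$-upper semicontinuous on $\Delta(X)$ and attains its supremum on $C$ at a value at least $\mathbb{E}_\mu[v_S]$. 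The maximizer face is a non-empty weak$^*$-compact convex subset of $C$, so by Krein--Milman it contains an extreme point $\nu^\star$, which is automatically an extreme point of $C$ itself.

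The key structural claim is that every extreme point of $C$ is supported on at most $|\Omega|$ points. I would prove it along the classical moment-problem route: $C$ is cut out of the set of probability measures on $X$ by $|\Omega|$ independent linear constraints (the unit-mass condition plus the $|\Omega|-1$ independent coordinates of $\mathbb{E}_\nu[q]=p$). If $\nu^\star$ had a support point set of size at least $|\Omega|+1$, one can extract $|\Omega|+1$ atoms or, in the non-atomic case, $|\Omega|+1$ disjoint Borel pieces of the support with positive $\nu^\star$-mass; in either case the homogeneous linear system for a signed measure $\sigma$ on these $|\Omega|+1$ objects with $\sigma(X)=0$ and $\int q\,d\sigma=0$ is underdetermined and admits a non-trivial solution. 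Then $\nu^\star\pm\varepsilon\sigma\in C$ for sufficiently small $\varepsilon$, contradicting extremality; the non-atomic reduction is the one technical step I anticipate as the main obstacle, and I would either carry out the disintegration argument carefully or invoke the standard extreme-point theorem for moment sets (e.g.\ Winkler).

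Granting the claim, write $\nu^\star=\sum_{k=1}^{|\Omega|}\alpha_k\delta_{q_k}$ with $q_k\in X$, padding with zero-weight repetitions if necessary. Since subsets of affine-dominating sets are affine dominating (every measure on the subset is a measure on $X$), the tuple $(q_1,\ldots,q_{|\Omega|})$ lies in ${D}$, and the chain
$$\cav_{{D}}[v_S](p)\;\geq\;\sum_{k=1}^{|\Omega|}\alpha_k v_S(q_k)\;=\;\mathbb{E}_{\nu^\star}[v_S]\;\geq\;\mathbb{E}_\mu[v_S(q)]$$
closes the argument after taking the supremum over admissible $\mu$.
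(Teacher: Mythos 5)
Your proposal is correct and follows essentially the same route as the paper: the easy inequality via Lemma~\ref{lm_FR_as_a_restriction_on_support} applied to a finitely supported $\mu$, and the reverse inequality by restricting to measures supported on $\supp[\mu]$ with mean $p$, maximizing the upper semicontinuous functional over this compact convex set, and invoking the Winkler-type extreme-point theorem (the paper cites it in a footnote, via Bauer's principle rather than your maximizer-face/Krein--Milman phrasing) to reduce to at most $|\Omega|$ atoms, which then form a tuple in ${D}$ because subsets of affine-dominating sets are affine dominating. The only cosmetic difference is that the paper works with a $\delta$-approximate maximizer $\mu$ and lets $\delta\to 0$, whereas you bound $\mathbb{E}_\mu[v_S]$ for every admissible $\mu$ and take the supremum at the end.
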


\begin{proof}
First, let us show that $\cav_{{\mathcal{M}}^0}\big[v_S\big]\geq \cav_{{D}}\big[v_S\big]$.
For continuous $v_M$, the set $D$ of collections $(q_1,\ldots,q_{|\Omega|})$ that are affine dominating with respect to $v_M$ is a non-empty compact subset of $\Delta(\Omega)^\Omega$. An upper semicontinuous function attains its maximum on a compact set and thus, for any prior $p\in \Delta(\Omega)$, there are $(q_1,\ldots,q_{|\Omega|})\in D$  and non-negative weights  $\alpha_1,\ldots, \alpha_{|\Omega|}$ summing up to one such that $\cav_{{D}}\big[v_S\big](p)=\sum_{k=1}^{|\Omega|}\alpha_k v_S(q_k)$ and $\sum_{k=1}^{|\Omega|}\alpha_k q_k=p$. Define a distribution $\mu\in \Delta(\Delta(\Omega))$ as a lottery over $(q_k)_{k=1,\ldots,|\Omega|}$ with corresponding weights $(\alpha_k)_{k=1,\ldots,|\Omega|}$, i.e., $\mu=\sum_{k=1}^{|\Omega|}\alpha_k \delta_{q_k}.$  By construction,  the support $\supp[\mu]$ coincides with $\{q_k,\ k=1,\ldots,|\Omega|\}$ and, hence, $\mu$ is supported on an affine dominating set. By Lemma~\ref{lm_FR_as_a_restriction_on_support}, the distribution $\mu$ belongs to the set ${\mathcal{M}}^0$. Since $\mu\in {\mathcal{M}}^0$ and $\mathbb{E}_\mu[q]=p$, we conclude that $\cav_{{\mathcal{M}}^0}\big[v_S\big]\geq \mathbb{E}_\mu[v_S(q)]$ by the definition of the constrained concavification~\eqref{eq_constrained_concavification_extended}. On the other hand,  $\mathbb{E}_\mu[v_S(q)]$ is equal to $\sum_{k=1}^{|\Omega|}\alpha_k v_S(q_k)$ and, hence, to $\cav_{{D}}\big[v_S\big](p)$. Thus $\cav_{{\mathcal{M}}^0}\big[v_S\big](p)\geq \cav_{{D}}\big[v_S\big](p)$.

Now let us prove that $\cav_{{\mathcal{M}}^0}\big[v_S\big](p)\leq \cav_{{D}}\big[v_S\big](p)$ for any $p\in \Delta(\Omega)$. It is enough to show that $\cav_{{\mathcal{M}}^0}\big[v_S\big](p)-\delta\leq \cav_{{D}}\big[v_S\big](p)$ for  any $\delta>0$. Fixing $\delta$, we find a distribution $\mu\in {\mathcal{M}}^0$ such that $\cav_{{\mathcal{M}}^0}\big[v_S\big]-\delta\leq \mathbb{E}_\mu[v_S(q)]$ and $\mathbb{E}_\mu[q]=p$. Consider the set $\mathcal{M}_p(\mu)$ of all distributions $\mu'\in \Delta(\Delta(\Omega))$  such that $\supp [\mu']\subset \supp[\mu]$ and $\mathbb{E}_{\mu'}[q]=p$. Then
$$ \max_{\mu'\in \mathcal{M}_p(\mu)} \mathbb{E}_{\mu'}[v_S(q)]\geq \cav_{{\mathcal{M}}^0}\big[v_S\big]-\delta.$$
The maximum is attained as $\mathcal{M}_p(\mu)$ is a compact set in the weak topology and $\mathbb{E}_{\mu'}[v_S(q)]$ is an upper semicontinuous functional of $\mu'$ for upper semicontinuous $v_S$. Moreover, $\mathcal{M}_p(\mu)$ is a convex set and so, by Bauer's principle, the maximum is attained at an extreme point of $\mathcal{M}_p(\mu)$. The extreme points of $\mathcal{M}_p(\mu)$ are\footnote{By Theorem~2.1 of \cite{winkler1988extreme}, the  extreme points of the set of all measures  satisfying $t$ linear constraints and defined on a general measurable space consist of convex combinations of at most $t+1$ point masses. The set $\mathcal{M}_p(\mu)$ can be seen as the set of all measures on $\supp[\mu]$ satisfying $t=|\Omega|-1$ scalar linear constraints: for all states $\omega$ except for one, $\mathbb{E}_{\mu'}[q(\omega)]=p(\omega)$ (the condition for the excluded state follows from other conditions as the total mass assigned by $p$ and $q$ is equal to one).} convex combinations of $|\Omega|$ point masses; i.e., they have the form $\sum_{k=1}^{|\Omega|}\alpha_k \delta_{q_k},$ where $q_k\in \supp[\mu]$ for all $k$ and $\sum_{k=1}^{|\Omega|}\alpha_k q_k=p$. By Lemma~\ref{lm_FR_as_a_restriction_on_support}, the measure $\mu$ is supported on an affine dominating set and, hence, the collection $(q_1,\ldots, q_n)$ is affine dominating;  i.e., it belongs to $D$. Thus 
$$\cav_D[v_S](p)\geq \sum_{k=1}^{|\Omega|}\alpha_k v_S(q_k)\geq \cav_{{\mathcal{M}}^0}\big[v_S\big](p)-\delta.$$
As this equality holds for any positive $\delta$ and $p$, we conclude that $\cav_D[v_S]\geq \cav_{{\mathcal{M}}^0}\big[v_S\big]$.

We checked that $\cav_D[v_S]\leq \cav_{{\mathcal{M}}^0}\big[v_S\big]$ and $\cav_D[v_S]\geq \cav_{{\mathcal{M}}^0}\big[v_S\big]$. Thus $\cav_D[v_S]$ is equal to $\cav_{{\mathcal{M}}^0}\big[v_S\big]$.
\end{proof}

As we show below, Theorem~\ref{theorem:appendix_one_mediatior} becomes a straightforward combination of the lemmas proved above.
\begin{proof}[Proof of Theorem~\ref{theorem:appendix_one_mediatior}]
For $\varepsilon>0$, Lemma~\ref{lm_upper_varepsilon} implies that the sender's optimal payoff $V_S^\varepsilon(p)$ in \ed{an $\varepsilon$-RTSPE} is at most $\cav_{{\mathcal{M}}^\varepsilon}\big[v_S\big](p)$ and Lemma~\ref{lm_lower_varepsilon} shows that it is at least this value. Thus $V_S^\varepsilon(p)=\cav_{{\mathcal{M}}^\varepsilon}\big[v_S\big](p)$ for any bounded measurable utilities and $\varepsilon>0$. We obtained the first statement of the theorem.

If $v_S$ is upper semicontinuous and $v_M$ is continuous, both lemmas allow us to set $\varepsilon=0$ and thus $V_S^0(p)=\cav_{{\mathcal{M}}^0}\big[v_S\big](p)$. Moreover, by Lemma~\ref{lm_lower_varepsilon}, the optimal equilibrium exists, i.e., the supremum over equilibria in the definition of $V_S^0(p)$ can be replaced by the maximum. By Lemma~\ref{lm_two_concavifications_are_the_same}, $\cav_{{\mathcal{M}}^0}\big[v_S\big]=\cav_{{D}}\big[v_S\big]$ and we get the second statement of the theorem which completes the proof.
\end{proof}

\section{Proof of Theorem~\ref{th_n_mediators}}\label{app_n_mediators}
We formulate and prove a generalization of Theorem~\ref{th_n_mediators} that does not impose any regularity assumption on indirect utilities. 
Recall that $v_S$, $v_{M_1},\ldots, v_{M_n}$ are indirect utilities of the sender and $n$ mediators. They are assumed to be bounded measurable functions on $\Delta(\Omega)$. Recall that, for $\varepsilon\geq 0$, the supremum of the sender's payoff  over all \ed{subgame perfect $\varepsilon$-equilibria with refined tie-breaking ($\varepsilon$-RTSPE) is denoted by $V_S^\varepsilon(p)$; see Appendix~\ref{app_perfect}.}

To characterize $V_S^\varepsilon(p)$,  define sets ${\mathcal{M}}_{i}^\varepsilon\subset \Delta(\Delta(\Omega))$ recursively. This definition extends the definition of ${\mathcal{M}}_{i}$ from~\eqref{eq_FRi_definition} that corresponds to $\varepsilon=0$ and the definition of ${\mathcal{M}}^\varepsilon$ from~\eqref{eq_FR_epsilon_one_mediator} that corresponds to  $n=1$. We define ${\mathcal{M}}_{n+1}^\varepsilon=\Delta(\Delta(\Omega))$ and
$${\mathcal{M}}_i^\varepsilon=\Big\{\mu\in {\mathcal{M}}_{i+1}^\varepsilon \ \Big|\  \Big(\nu\in {\mathcal{M}}_{i+1}^\varepsilon, \ \nu\preceq \mu\Big)\Longrightarrow   \mathbb{E}_\mu\big[v_{M_i}(q) \big]\geq \mathbb{E}_\nu\big[v_{M_i}(q) \big]-\varepsilon\Big\}.$$
To accommodate discontinuities, we replace maximum by supremum in the definition of constrained concavification~\eqref{eq_constrained_concavification_extended_body}: for $\mathcal{M}\subset \Delta(\Delta(\Omega))$, $$\cav_{\mathcal{M}}[v_S](p)=\sup\left\{ \mathbb{E}_\mu [v_S(q)] \  \Big| \ \mu\in\mathcal{M}, \  \mathbb{E}_\mu[q]=p \right\}.$$
\begin{theorem}[Generalized version of Theorem \ref{th_n_mediators}]\label{theorem:appendix_n_mediators}
Assuming that indirect utilities $v_S$, and $v_{M_1},\ldots, v_{M_n}$ are bounded measurable functions, we have that the sender's optimal payoff in \ed{a subgame perfect $\varepsilon$-equilibrium with refined tie-breaking} satisfies
$$V_S^\varepsilon(p)=\cav_{{\mathcal{M}}_1^\varepsilon}\big[v_S\big](p)$$ for any $\varepsilon>0$.

If, additionally, $v_S$ is upper semicontinuous and $v_{M_1},\ldots, v_{M_n}$ are continuous, the optimal equilibrium exists and $$V_S^0(p)=\cav_{{\mathcal{M}}^0}\big[v_S\big](p).$$
\end{theorem}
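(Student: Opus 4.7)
The plan is to extend the argument used for Theorem~\ref{theorem:appendix_one_mediatior} by propagating the ``no further garbling'' property recursively through the chain of mediators. I would prove the upper bound $V_S^\varepsilon(p) \leq \cav_{{\mathcal{M}}_1^\varepsilon}[v_S](p)$ and the matching lower bound separately, both resting on the same pair of companion claims about an arbitrary $\varepsilon$-RTSPE.

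The heart of the proof is a joint backward induction on $j=n+1,n,\ldots,1$ establishing, at any history where $M_j$'s received signal induces belief distribution $\nu$: \emph{(i)} the final receiver belief distribution $\tilde{\nu}$ produced by the equilibrium strategies from this history satisfies $\tilde{\nu}\in {\mathcal{M}}_j^\varepsilon$ and $\tilde{\nu}\preceq \nu$; and \emph{(ii)} if additionally $\nu\in {\mathcal{M}}_j^\varepsilon$, then $\tilde{\nu}=\nu$, i.e., all of $M_j,\ldots,M_n$ fully reveal. The base $j=n+1$ is immediate from ${\mathcal{M}}_{n+1}^\varepsilon=\Delta(\Delta(\Omega))$ and the fact that the receiver does not generate a new signal. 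In the step, combining \emph{(i)} and \emph{(ii)} at $j+1$ shows that the set of final belief distributions $M_j$ can induce via her choice of signaling policy is exactly $C_j(\nu)=\{\tilde{\nu}\in {\mathcal{M}}_{j+1}^\varepsilon:\,\tilde{\nu}\preceq \nu\}$: she can reach any element of $C_j(\nu)$ by inducing it directly and invoking \emph{(ii)} at $j+1$, while \emph{(i)} at $j+1$ rules out reaching anything outside ${\mathcal{M}}_{j+1}^\varepsilon$. Subgame perfection forces $M_j$ to $\varepsilon$-maximize $\mathbb{E}_{\tilde{\nu}}[v_{M_j}(q)]$ over $C_j(\nu)$, and the defining inequalities of ${\mathcal{M}}_j^\varepsilon$ show that any such maximizer lies in ${\mathcal{M}}_j^\varepsilon$, giving \emph{(i)} at $j$. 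For \emph{(ii)} at $j$, the definition of ${\mathcal{M}}_j^\varepsilon$ applied to $\nu$ itself implies that full revelation is an $\varepsilon$-best reply for $M_j$, so the full-revelation refinement forces it, after which \emph{(ii)} at $j+1$ propagates the unchanged distribution through the remaining mediators.

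With the companion claims in hand, the upper bound follows by applying \emph{(i)} at $j=1$ to the on-path belief distribution of $M_1$ in any $\varepsilon$-RTSPE: one obtains $\mu_{n+1}\in {\mathcal{M}}_1^\varepsilon$ with $\mathbb{E}_{\mu_{n+1}}[q]=p$ from the martingale property. For the lower bound, given $\delta>0$, I would choose $\mu^\delta\in {\mathcal{M}}_1^\varepsilon$ with mean $p$ and $\mathbb{E}_{\mu^\delta}[v_S(q)]\geq \cav_{{\mathcal{M}}_1^\varepsilon}[v_S](p)-\delta$, and construct an explicit $\varepsilon$-RTSPE in which the sender induces $\mu^\delta$ as $M_1$'s belief distribution. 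On path, every $M_j$ plays full revelation; off path, upon receiving any $\nu$, $M_j$ garbles to some $\tilde{\nu}\in {\mathcal{M}}_{j+1}^\varepsilon$ with $\tilde{\nu}\preceq \nu$ that is within $\varepsilon$ of $\sup\{\mathbb{E}_{\tilde{\nu}}[v_{M_j}(q)]:\,\tilde{\nu}\in {\mathcal{M}}_{j+1}^\varepsilon,\,\tilde{\nu}\preceq \nu\}$. Since $\mu^\delta\in {\mathcal{M}}_1^\varepsilon\subset {\mathcal{M}}_j^\varepsilon$ for every $j$, claim \emph{(ii)} confirms that the receiver's distribution is $\mu^\delta$, so the sender's payoff is at least $\cav_{{\mathcal{M}}_1^\varepsilon}[v_S](p)-\delta$; letting $\delta\to 0$ closes the gap.

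For the second statement, with $\varepsilon=0$ under continuous $v_{M_i}$ and upper semicontinuous $v_S$, I would additionally establish by induction that each ${\mathcal{M}}_i^0$ is closed in the weak topology on $\Delta(\Delta(\Omega))$. Closedness uses continuity of $\mu\mapsto \mathbb{E}_\mu[v_{M_i}(q)]$ under continuity of $v_{M_i}$, closedness of $\succeq$ under joint weak limits on the compact set $\Delta(\Omega)$, and the inductive closedness of ${\mathcal{M}}_{i+1}^0$. Since $\Delta(\Delta(\Omega))$ is compact and $\mu\mapsto\mathbb{E}_\mu[v_S(q)]$ is upper semicontinuous, the supremum defining $\cav_{{\mathcal{M}}_1^0}[v_S](p)$ is attained, as are the off-path suprema in the lower-bound construction, producing a genuine RTSPE. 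The main technical obstacle I anticipate is the joint induction itself: neither \emph{(i)} nor \emph{(ii)} can be proved in isolation, because \emph{(i)} at $j$ requires the pass-through behavior provided by \emph{(ii)} at $j+1$, while \emph{(ii)} at $j$ requires the characterization of reachable distributions provided by \emph{(i)} at $j+1$. A secondary subtlety is the recursive closedness of ${\mathcal{M}}_i^0$, where the universally quantified condition ``for all $\nu\in {\mathcal{M}}_{i+1}^0$ with $\nu\preceq \mu$, $\mathbb{E}_\mu[v_{M_i}(q)]\geq \mathbb{E}_\nu[v_{M_i}(q)]$'' must be shown stable under weak convergence of $\mu$.
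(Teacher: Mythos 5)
Your proposal is correct and follows essentially the same route as the paper: an upper bound showing that the receiver's on-path belief distribution must lie in ${\mathcal{M}}_1^\varepsilon$ (the paper argues this by locating the first index $i$ with $\mu\notin{\mathcal{M}}_i^\varepsilon$ and exhibiting a deviation for $M_i$ that successors pass through by the full-revelation refinement), and a matching lower bound via the same explicit construction in which the sender induces a near-optimal $\mu^\delta\in{\mathcal{M}}_1^\varepsilon$ and each mediator either passes through or garbles to an $\varepsilon$-optimal element of ${\mathcal{M}}_i^\varepsilon$, with compactness and upper semicontinuity handling $\varepsilon=\delta=0$. Your joint backward induction (claims \emph{(i)} and \emph{(ii)}) is just a more explicit packaging of the pass-through argument the paper invokes, and your flagged subtlety about the closedness of ${\mathcal{M}}_i^0$ is a point the paper itself asserts without detailed proof.
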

The proof generalizes the ideas presented in the proofs of Lemmas~\ref{lm_upper_varepsilon} and~\ref{lm_lower_varepsilon} to $n\geq 1$ mediators.
\begin{lemma}\label{lm_upper_varepsilon_n_mediators}
For $\varepsilon\geq 0$ and any \ed{$\varepsilon$-RTSPE}, the sender's payoff is upper-bounded by $\cav_{{\mathcal{M}}_1^\varepsilon}\big[v_S\big](p)$. 
\end{lemma}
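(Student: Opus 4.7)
The plan is to generalize the single-mediator argument from Lemma~\ref{lm_upper_varepsilon} by tracking, at each level $j$ in the chain, the receiver's belief distribution that is eventually induced by the equilibrium continuation. For any history that causes $M_j$'s belief about $\omega$ to have distribution $\tau$, let $\Phi_j(\tau) \in \Delta(\Delta(\Omega))$ denote the distribution of the receiver's posterior after $M_j, M_{j+1}, \ldots, M_n$, and $R$ play according to their strategies; by convention set $\Phi_{n+1}(\tau) = \tau$. Writing $\mu_1$ for the distribution of $M_1$'s belief induced by the sender's chosen policy, the sender's payoff equals $\mathbb{E}_{\Phi_1(\mu_1)}[v_S(q)]$ and satisfies $\mathbb{E}_{\Phi_1(\mu_1)}[q]=p$ by the martingale property. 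Thus it suffices to show $\Phi_1(\mu_1) \in {\mathcal{M}}_1^\varepsilon$, since the definition of $\cav_{{\mathcal{M}}_1^\varepsilon}[v_S](p)$ then caps the payoff as claimed.

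The core of the argument is a pair of statements proved jointly by backward induction on $j$ from $j=n+1$ down to $j=1$: (a) $\Phi_j(\tau) \in {\mathcal{M}}_j^\varepsilon$ for every $\tau$, and (b) if $\tau \in {\mathcal{M}}_j^\varepsilon$, then $\Phi_j(\tau) = \tau$. The base case $j=n+1$ is immediate since ${\mathcal{M}}_{n+1}^\varepsilon = \Delta(\Delta(\Omega))$ and $\Phi_{n+1}$ is the identity. For the inductive step at level $j \leq n$, write $\Phi_j(\tau) = \Phi_{j+1}(\sigma^*)$ with $\sigma^* \preceq \tau$ being the distribution $M_j$ actually induces at $M_{j+1}$ under her strategy.

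To establish (a), pick any $\nu \in {\mathcal{M}}_{j+1}^\varepsilon$ with $\nu \preceq \Phi_j(\tau)$. By the chain $\nu \preceq \Phi_j(\tau) \preceq \sigma^* \preceq \tau$ (the middle inequality holds because successors can only garble) and Blackwell's theorem (Appendix~\ref{subsec_Blackwell}), $M_j$ has a policy deviation that induces $\nu$ at $M_{j+1}$; by (b) at level $j+1$, the continuation passes $\nu$ through unaltered, so the deviation gives $M_j$ payoff $\mathbb{E}_\nu[v_{M_j}(q)]$. The $\varepsilon$-best-reply property then yields $\mathbb{E}_{\Phi_j(\tau)}[v_{M_j}(q)] \geq \mathbb{E}_\nu[v_{M_j}(q)] - \varepsilon$, and together with $\Phi_j(\tau) \in {\mathcal{M}}_{j+1}^\varepsilon$ from (a) at level $j+1$, this places $\Phi_j(\tau)$ in ${\mathcal{M}}_j^\varepsilon$.

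For (b), suppose $\tau \in {\mathcal{M}}_j^\varepsilon$. Full revelation by $M_j$ sends $\tau \in {\mathcal{M}}_{j+1}^\varepsilon$ onward, and by (b) at level $j+1$ delivers payoff $\mathbb{E}_\tau[v_{M_j}(q)]$. Any alternative $\sigma \preceq \tau$ produces receiver's distribution $\Phi_{j+1}(\sigma) \in {\mathcal{M}}_{j+1}^\varepsilon$ by (a) at level $j+1$, with $\Phi_{j+1}(\sigma) \preceq \tau$, so the defining inequality of ${\mathcal{M}}_j^\varepsilon$ bounds the alternative payoff by $\mathbb{E}_\tau[v_{M_j}(q)] + \varepsilon$. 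Hence full revelation is an $\varepsilon$-best reply at $\tau$, and the full-revelation refinement in Definition~\ref{def_subgame} forces $M_j$ to play it, giving $\Phi_j(\tau) = \tau$. The main obstacle is the bootstrapping between (a) and (b): (a) at level $j$ relies on (b) at level $j+1$ to guarantee that the targeted deviation $\nu$ is actually realized at the receiver, while (b) at level $j$ requires (a) at level $j+1$ to cap the value of any alternative continuation. The full-revelation refinement is indispensable for (b); without it, $M_j$ could punish her predecessor via gratuitous garbling even when full revelation is an $\varepsilon$-best reply, derailing the induction.
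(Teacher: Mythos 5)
Your argument is correct and follows the same route as the paper's: both proofs reduce the claim to showing that the on-path receiver distribution lies in ${\mathcal{M}}_1^\varepsilon$, and both use Blackwell's theorem, the recursive definition of the sets ${\mathcal{M}}_j^\varepsilon$, and the full-revelation refinement. The organizational difference is that the paper argues by contradiction at the minimal index $i$ with $\mu\notin{\mathcal{M}}_i^\varepsilon$, whereas you run a direct backward induction with the two coupled invariants (a) and (b). Your version is in fact somewhat more careful: the paper asserts that, because $\nu\in{\mathcal{M}}_{i+1}^\varepsilon$, no successor can gain more than $\varepsilon$ from ``any extra garbling'' and hence passes $\nu$ on unchanged, but the definition of ${\mathcal{M}}_{i+1}^\varepsilon$ only controls deviations to contractions lying in ${\mathcal{M}}_{i+2}^\varepsilon$; closing that gap requires knowing that every continuation outputs a distribution in ${\mathcal{M}}_{j+1}^\varepsilon$, which is exactly your statement (a), so your bootstrapping of (a) and (b) is the right way to make that step airtight. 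One caveat: $\Phi_j(\tau)$ is not well-defined as a function of $\tau$ alone, since strategies are maps from histories of \emph{policies} and two distinct histories inducing the same belief distribution for $M_j$ may trigger different continuation play. This is harmless --- restate (a) as ``$\Phi_j(h_j)\in{\mathcal{M}}_j^\varepsilon$ for every history $h_j$'' and (b) as ``if the belief distribution induced by $h_j$ lies in ${\mathcal{M}}_j^\varepsilon$ then $\Phi_j(h_j)$ equals it,'' and the induction goes through verbatim --- but as written the notation presupposes an invariance you have not established.
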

\begin{proof}
Let $\mu$ be the distribution of the receiver's beliefs induced in this $\varepsilon$-equilibrium. The sender's payoff  equals $\mathbb{E}_{\mu}\big[v_S(q) \big]$. By the martingale property, $\mu$ has the right mean: $\mathbb{E}_{\mu}\big[q\big]=p$. Thus, to prove that $\mathbb{E}_{\mu}\big[v_S(q) \big]\leq \cav_{{\mathcal{M}}_1^\varepsilon}\big[v_S\big](p)$, it is enough to verify that $\mu$ belongs to ${\mathcal{M}}_1^\varepsilon$. By way of contradiction, assume that $\mu\notin {\mathcal{M}}_1^\varepsilon$. Since ${\mathcal{M}}_i^\varepsilon\subset {\mathcal{M}}_{i+1}^\varepsilon$ and ${\mathcal{M}}_{n+1}^\varepsilon=\Delta(\Delta(\Omega))$, we can find a mediator $M_i$ such that $\mu\notin {\mathcal{M}}_{i}^\varepsilon$ but $p\in {\mathcal{M}}_{i+1}^\varepsilon$. Let us construct a profitable deviation of mediator $M_i$.  By the definition of ${\mathcal{M}}_i^\varepsilon$, we see that there is $\nu\in {\mathcal{M}}_{i+1}^\varepsilon$ such that $\nu\preceq \mu$ and 
$\mathbb{E}_{\mu}\big[v_{M_i}(q) \big]<\mathbb{E}_{\nu}\big[v_{M_i}(q)] \big]-\varepsilon$. By Blackwell's theorem (Appendix~\ref{subsec_Blackwell}), the distribution $\mu_{M_i}$ of mediator $i$'s beliefs satisfies $\mu_{M_i} \succeq \mu$ and, hence, $\mu_{M_i}\succeq \nu$. Applying Blackwell's result again, we see that the mediator has a deviation that induces the distribution $\nu$ of $M_{i+1}$'s beliefs. Since $\nu\in {\mathcal{M}}_{i+1}^\varepsilon$, for any successor  $M_j$ of $M_{i+1}$ and any garbling $\nu'$ of $\nu$, we have 
$\mathbb{E}_{\nu}\big[v_{M_j}(q) \big]\geq\mathbb{E}_{\nu'}\big[v_{M_j}(q)] \big]-\varepsilon$; i.e., $M_j$ cannot benefit by more than $\varepsilon$ from any extra garbling. By the full-revelation refinement (Definition~\ref{def_subgame}), we conclude that all the mediators $M_j$ with $j\geq i+1$ will transmit $\nu$ as is.
 Thus $M_i$'s deviation improves her payoff from $\mathbb{E}_{\mu}\big[v_{M_i}(q) \big]$ to $\mathbb{E}_{\nu}\big[v_{M_i}(q) \big]$, i.e., by more than $\varepsilon$. This contradicts the assumption that we started from \ed{an $\varepsilon$-RTSPE}. This contradiction implies that $\mu$ necessarily belongs to ${\mathcal{M}}_1^\varepsilon$ and completes the proof.
\end{proof}
Note that the proof of Lemma~\ref{lm_upper_varepsilon_n_mediators} does not use the fact that the sender's signaling policy is an $\varepsilon$-best reply to the strategies of the other agents. Hence, a more general statement  holds: the sender's payoff is upper-bounded by $\cav_{{\mathcal{M}}_1^\varepsilon}\big[v_S\big](p)$ for any strategy of the sender and any collection of strategies of other agents forming \ed{an $\varepsilon$-RTSPE} in a subgame starting from the first mediator. This observation is needed to prove the next lemma.
\begin{lemma}\label{lm_lower_varepsilon_n_mediators}
For any $\varepsilon>0$ and any $\delta>0$, there exists \ed{an $\varepsilon$-RTSPE} such that the sender's payoff is at least $\cav_{{\mathcal{M}}_1^\varepsilon}\big[v_S\big](p)-\delta$.

If $v_S$ is  upper semicontinuous and bounded and $v_{M_1},\ldots, v_{M_n}$ are continuous, we can plug $\varepsilon=\delta=0$ into the above statement. 
\end{lemma}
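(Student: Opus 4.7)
The plan is to generalize the one-mediator construction of Lemma~\ref{lm_lower_varepsilon} to $n$ mediators, the main new subtlety being that each mediator must anticipate further garbling by her successors. First, since $\cav_{\mathcal{M}_1^\varepsilon}[v_S](p)$ is defined as a supremum, fix $\mu^\delta\in\mathcal{M}_1^\varepsilon$ with $\mathbb{E}_{\mu^\delta}[q]=p$ and $\mathbb{E}_{\mu^\delta}[v_S(q)]\geq\cav_{\mathcal{M}_1^\varepsilon}[v_S](p)-\min\{\varepsilon,\delta\}$. The sender's strategy is to emit a signal inducing the distribution $\mu^\delta$ of $M_1$'s belief (via the splitting lemma of Aumann and Maschler). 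For each mediator $M_i$ at a history inducing incoming distribution $\mu$, the strategy is: if $\mu\in\mathcal{M}_i^\varepsilon$, use the full-revelation policy; otherwise, select some $\nu^*_i(\mu)\in\mathcal{M}_{i+1}^\varepsilon$ with $\nu^*_i(\mu)\preceq\mu$ that $\varepsilon$-maximizes $\mathbb{E}_\nu[v_{M_i}(q)]$ over such contractions, and garble the incoming signal to induce $\nu^*_i(\mu)$ (feasible by Blackwell's theorem; the feasible set is non-empty since it contains $\delta_{\mathbb{E}_\mu[q]}$). The two clauses are consistent because $\mu\in\mathcal{M}_i^\varepsilon$ means exactly that $\mu$ itself is $\varepsilon$-optimal in the maximization.

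The central technical observation---which drives the entire verification---is that the chosen $\nu^*_i(\mu)$ automatically belongs to the smaller set $\mathcal{M}_i^\varepsilon$. Indeed, any $\eta\in\mathcal{M}_{i+1}^\varepsilon$ with $\eta\preceq\nu^*_i(\mu)$ satisfies $\eta\preceq\mu$ by transitivity of $\preceq$, hence is a contender in the maximization defining $\nu^*_i(\mu)$; the $\varepsilon$-optimality then gives $\mathbb{E}_{\nu^*_i(\mu)}[v_{M_i}(q)]\geq\mathbb{E}_\eta[v_{M_i}(q)]-\varepsilon$, which, together with $\nu^*_i(\mu)\in\mathcal{M}_{i+1}^\varepsilon$, is precisely the condition for $\nu^*_i(\mu)\in\mathcal{M}_i^\varepsilon$. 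Combined with the nested inclusions $\mathcal{M}_1^\varepsilon\subset\cdots\subset\mathcal{M}_{n+1}^\varepsilon$, this yields by induction that every mediator's output lies in her $\mathcal{M}_i^\varepsilon$, so that the next mediator receives a distribution in her own $\mathcal{M}_j^\varepsilon$ and uses full revelation. Therefore, along the equilibrium path---where $M_1$ receives $\mu^\delta\in\mathcal{M}_1^\varepsilon$---the sender's distribution $\mu^\delta$ propagates unchanged, and the receiver's belief distribution equals $\mu^\delta$, giving the sender a payoff of at least $\cav_{\mathcal{M}_1^\varepsilon}[v_S](p)-\min\{\varepsilon,\delta\}$.

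To verify $\varepsilon$-best-reply for $M_i$: any deviation $\nu'\preceq\mu$ is propagated through successors to produce a final distribution $\tilde\nu\in\mathcal{M}_{i+1}^\varepsilon$ with $\tilde\nu\preceq\mu$ (by iterating the central observation at indices $\geq i+1$), so the $\varepsilon$-optimality of $\nu^*_i(\mu)$ bounds the gain by $\varepsilon$. For the sender, any deviation leads to a receiver distribution in $\mathcal{M}_1^\varepsilon$, so Lemma~\ref{lm_upper_varepsilon_n_mediators}'s upper bound $\cav_{\mathcal{M}_1^\varepsilon}[v_S](p)$ limits the gain to $\min\{\varepsilon,\delta\}\leq\varepsilon$. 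The full-revelation and belief-driven receiver refinements hold by construction (with an extra check for the former in the off-path case $\mu\notin\mathcal{M}_i^\varepsilon$, using that the nested structure makes full revelation strictly sub-$\varepsilon$-optimal exactly when garbling is required).

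The main additional obstacle for $\varepsilon=\delta=0$ under the continuity assumptions is to ensure all relevant suprema are attained, which reduces to showing each $\mathcal{M}_i^0$ is weakly closed. This will follow by downward induction starting from $\mathcal{M}_{n+1}^0=\Delta(\Delta(\Omega))$: given closed $\mathcal{M}_{i+1}^0$ and continuous $v_{M_i}$, a Berge-type argument applied to the correspondence $\mu\mapsto\{\nu\in\mathcal{M}_{i+1}^0:\nu\preceq\mu\}$ yields continuity of $F_i(\mu):=\sup\{\mathbb{E}_\nu[v_{M_i}(q)]:\nu\in\mathcal{M}_{i+1}^0,\nu\preceq\mu\}$, and since $F_i\geq\mathbb{E}_\mu[v_{M_i}(q)]$ always, closedness of $\mathcal{M}_i^0=\{\mu:F_i(\mu)=\mathbb{E}_\mu[v_{M_i}(q)]\}$ follows. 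I expect the subtlest step to be the lower hemicontinuity of the correspondence (needed for lower semicontinuity of $F_i$), which can be established by a standard coupling approximation of mean-preserving contractions along weakly convergent sequences.
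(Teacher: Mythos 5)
Your construction is essentially the paper's own proof: the sender induces a near-optimal $\mu^\delta\in\mathcal{M}_1^\varepsilon$ via the splitting lemma, each mediator fully reveals when her incoming distribution lies in $\mathcal{M}_i^\varepsilon$ and otherwise garbles to an $\varepsilon$-optimal contraction that propagates unchanged, and deviations are bounded using the recursive definition of $\mathcal{M}_i^\varepsilon$ together with the upper bound of Lemma~\ref{lm_upper_varepsilon_n_mediators}. The only (harmless) variations are that you $\varepsilon$-optimize over contractions in $\mathcal{M}_{i+1}^\varepsilon$ and then verify the optimizer automatically lands in $\mathcal{M}_i^\varepsilon$ --- whereas the paper selects the garbling target directly from $\mathcal{M}_i^\varepsilon$ --- and that for $\varepsilon=\delta=0$ you sketch a Berge-type argument for the closedness of $\mathcal{M}_i^0$, a fact the paper simply asserts from the continuity of the mediators' utilities.
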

\begin{proof}
We begin with proving the first statement: $\varepsilon>0$ and $\delta>0$ are fixed, the indirect utilities may be discontinuous, and our aim is to construct an $\varepsilon$-equilibrium with a sender's payoff of at least $\cav_{{\mathcal{M}}_1^\varepsilon}\big[v_S\big](p)-\delta$. Without loss of generality, we can assume that $\delta\leq \varepsilon.$

 By the definition of $\cav_{{\mathcal{M}}_1^\varepsilon}\big[v_S\big]$, we can find a distribution $\mu^\delta$  from ${\mathcal{M}}_1^\varepsilon$
with mean $p$ such that
$$\mathbb{E}_{\mu^\delta}[v_S(q)]\geq \cav_{{\mathcal{M}}_1^\varepsilon}\big[v_S\big](p)-\delta.$$
Consider the following profile of strategies. The sender selects a policy such that the belief of the first mediator is distributed according to $\mu^\delta$. Our intention is to define the strategies of the mediators such that the sender's signal is not garbled on the equilibrium path. Each mediator $M_i$ 
computes the distribution $\mu_{i}$ of her belief $p_{i}$ induced by the strategies of her predecessors. If $\mu_{i}$ belongs to ${\mathcal{M}}_i^\varepsilon$, the mediator transmits the signal unchanged to the next agent in the line. If $\mu_{i}\notin {\mathcal{M}}_i^\varepsilon$, the mediator can find a garbling improving her payoff by more than $\varepsilon$. Namely, there exists $\nu_i^\varepsilon(\mu_i)\in {\mathcal{M}}_{i}^\varepsilon$ such that $\nu_i^\varepsilon(\mu_i)\preceq \mu_i$ and the utility of the mediator $\mathbb{E}_{\nu_i^\varepsilon(\mu_i)}[v_{M_i}(q)]$ is within $\varepsilon$  from $\sup\{\mathbb{E}_{\nu}[v_{M_i}(q)],\ \nu\in {\mathcal{M}}_i^\varepsilon,\ \nu \preceq \mu_i\}$. For $\mu_i\notin {\mathcal{M}}_i^\varepsilon$, mediator $M_i$ selects a policy inducing the distribution of beliefs $\nu_i^\varepsilon(\mu_i)$ of the next agent.

Since $\mu^\delta$ belongs to all ${\mathcal{M}}_i^\varepsilon$, no mediator garbles the signal of the sender. Thus the sender's payoff is  $\mathbb{E}_{\mu^\delta}[v_S(q)]r \geq \cav_{{\mathcal{M}}_1^\varepsilon}\big[v_S\big](p)-\delta$. It remains to check that the constructed profile of strategies is indeed  \ed{ an  $\varepsilon$-RTSPE.} By the definition of the mediators' strategies, it is immediate that the profile satisfies the full-revelation refinement. Hence, only the absence of $\varepsilon$-profitable deviations needs to be checked.

Let us show that no agent has a deviation improving her payoff by more than $\varepsilon$. First, consider a deviation of a mediator $M_i$ with $i<n$. There are two cases depending on whether the distribution of the next mediator's beliefs $\mu_{i+1}$ induced by the deviation belongs to ${\mathcal{M}}_{i+1}^\varepsilon$ or not. If $\mu_{i+1}\in {\mathcal{M}}_{i+1}^\varepsilon$, then
all the subsequent mediators will transmit the signal unchanged and so $\mu_{i+1}$ propagates to the receiver. Hence, the deviator's payoff equals  $\mathbb{E}_{\mu_{i+1}}\big[v_{M_i}(q) \big]$. By the definition of ${\mathcal{M}}_i^\varepsilon$ and the fact that $\mu_{i+1}\preceq \mu^\delta\in {\mathcal{M}}_{i}^\varepsilon$, we conclude that
$\mathbb{E}_{\mu_{i+1}}\big[v_{M_i}(q) \big] \leq \mathbb{E}_{\mu^\delta}\big[v_{M_i}(q) \big]+\varepsilon$; i.e., the deviation cannot increase $M_i$'s  payoff by more than $\varepsilon$. The second possibility is that $\mu_{i+1}\notin {\mathcal{M}}_{i+1}^\varepsilon$. The strategy of $M_{i+1}$  prescribes that she garble, thereby inducing the distribution $\mu_{i+2}=\nu_{i+1}^\varepsilon(\mu_{i+1})$ of the next agent's beliefs. Such $\mu_{i+2}$ is contained in ${\mathcal{M}}_{i+1}^\varepsilon$ and so is not garbled by the successors. Thus the payoff of the deviator is $\mathbb{E}_{\mu_{i+2}}\big[v_{M_i}(q) \big]$, where $\mu_{i+2}\in {\mathcal{M}}_{i+1}^\varepsilon$ and $\mu_{i+2}\preceq \mu^\delta$. Once again, the definition of ${\mathcal{M}}_i^\varepsilon$ implies that $M_i$ cannot benefit by more than $\varepsilon$ from the deviation. The argument showing that the last mediator $M_n$ cannot improve her payoff by more than $\varepsilon$ is similar but simpler and, therefore, omitted.  

We conclude that, in a subgame starting from the first mediator, the mediators' strategies form \ed{an $\varepsilon$-RTSPE.} By Lemma~\ref{lm_upper_varepsilon_n_mediators} and the discussion after its proof, the sender's payoff is upper-bounded by $\cav_{{\mathcal{M}}_1^\varepsilon}\big[v_S\big](p)$ for any her deviation. Since the sender's original strategy gives her a payoff of at least $\cav_{{\mathcal{M}}_1^\varepsilon}\big[v_S\big](p)-\delta$ and $\delta\leq \varepsilon$, the sender has no deviations improving  her payoff by more than $\varepsilon$. Thus the constructed profile of strategies is \ed{an $\varepsilon$-RTSPE} with a sender's payoff of at least $\cav_{{\mathcal{M}}_1^\varepsilon}\big[v_S\big](p)-\delta$, which proves the first statement of the lemma.

Let us prove the second statement of the lemma, which allows us to set $\varepsilon=\delta=0$ for bounded upper semicontinuous $v_S$ and continuous $v_{M_1},\ldots, v_{M_n}$. By the continuity of the mediators' utilities, the sets ${\mathcal{M}}_i^\varepsilon$ are compact subsets of $\Delta(\Delta(\Omega))$ endowed with the weak topology for any $\varepsilon\geq 0$. By the upper semicontinuity of $v_S$, the integral $\mathbb{E}_\mu[v_S(q)]$ is an upper semicontinuous functional of $\mu\in \Delta(\Delta(\Omega))$. Hence, there exists the sender's optimal distribution  $\mu^\delta$ with $\delta=0$ since an upper semicontinuous functional attains its maximum on a compact set. For the same reason, the mediators' replies $\nu_i^\varepsilon$ exist for $\varepsilon=0$. We conclude that we can plug $\varepsilon=\delta=0$, resulting in \ed{an RTSPE} with a sender's payoff of $\cav_{{\mathcal{M}}_1^0}\big[v_S\big](p)$.
\end{proof}
Theorem~\ref{theorem:appendix_n_mediators}  follows from these two lemmas.
\begin{proof}[Proof of Theorem~\ref{theorem:appendix_n_mediators}.]
From Lemma~\ref{lm_upper_varepsilon_n_mediators}, $V_S^\varepsilon(p)\leq \cav_{{\mathcal{M}}_1^\varepsilon}\big[v_S\big](p)$ for any $\varepsilon\geq 0$. The first part of Lemma~\ref{lm_lower_varepsilon_n_mediators} gives the opposite inequality for $\varepsilon>0$. We obtain $V_S^\varepsilon(p)= \cav_{{\mathcal{M}}_1^\varepsilon}\big[v_S\big](p)$ for $\varepsilon>0$, which completes the proof of the first statement of the theorem.

If the utilities are regular, a \ed{an RTSPE} with a sender's payoff of $\cav_{{\mathcal{M}}_1^0}\big[v_S\big](p)$ exists by the second part of  Lemma~\ref{lm_lower_varepsilon_n_mediators}. Combining this with the upper bound from Lemma~\ref{lm_upper_varepsilon_n_mediators}, we obtain $V_S^0(p)= \cav_{{\mathcal{M}}_1^0}\big[v_S\big](p)$. This completes the proof of the  second statement of the theorem.
\end{proof}

\ifdefined\AEJ
\newpage

\begin{center}
    {\Huge Online Appendix }
\end{center}
\fi

\section{Sequential Games over Partially Ordered Sets}\label{sect_partial_order}
An analog of the recursive representation of the sender's optimal payoff obtained in Theorem~\ref{th_n_mediators} can be proved for a broad class of games, where agents move a token sequentially over a partially ordered set and the payoffs are determined by the final position of the token. This provides a unifying perspective on our results and the results for multiple-sender models of Li and Norman~\cite{li2018sequential} and Wu~\cite{wu2020essays}.

Let $X$ be a compact set endowed with a continuous partial order $\succeq$. 
A token is originally placed at a point $x_0\in X$. Agents $i=0,1,\ldots, n$ move the token sequentially. Agent~$i$ can move it from $x_i$ to any point $x_{i+1}=F_i(x_i)$ such that $x_i\succeq x_{i+1}$. The map $F_i$ is agent $i$'s policy. The payoffs are determined by the final position of the token $x_{n+1}$. The payoff to agent $i$ is given by $w_i(x_{n+1}),$ where $w_i$ is a continuous utility function $w_i\colon X\to\mathbb{R}$.

Agents select their policies sequentially and so the choice of agent $i$'s policy $F_i$ can be affected by the history of choices $h_i=(F_0,\ldots,F_{i-1})$. A subgame perfect equilibrium is defined in the standard way \ed{and the following property is an analog of the full-revelation refinement: if the identity map $\mathrm{id}(x)=x$ is a best reply to a history $h_i$, then agent $i$ selects a policy $F_i=\mathrm{id}$ at this history. We will refer to such equilibria as subgame perfect equilibria with refined tie-breaking.}\footnote{\ed{An analog of the belief-driven receiver refinement is hardwired in the assumption that agents' actions are determined by the current position of the token.}}

We define sets $X_i$ recursively so  that an agent $i$ has no incentive to move the token whenever $x_i\in X_i$. Let $X_{n+1}=X$ and
$$X_i=\Big\{x\in X_{i+1}\, \big| \, \big(x'\in X_{i+1},\, x'\preceq x\big)\Longrightarrow w_i(x)\geq w_i(x')\Big\}.$$
\begin{theorem}
The maximal payoff that agent $0$ can achieve in a \ed{subgame perfect equilibrium with refined tie-breaking} is equal to
$$\max\Big\{w_0(x)\mid x\in X_1, \ x\preceq x_0\Big\}.$$
\end{theorem}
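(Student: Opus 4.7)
The plan is to mirror the argument used for Theorem~\ref{th_n_mediators}, establishing matching upper and lower bounds on the first agent's payoff by a backward induction that simultaneously tracks where the token can end up and which positions the refined tie-breaking forces later agents to hold fixed.

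Concretely, for any subgame perfect equilibrium with refined tie-breaking, I would introduce the map $\varphi_i\colon X\to X$ sending a hypothetical starting position $x$ at agent~$i$'s turn to the resulting final position after agents $i,i+1,\ldots,n$ play their equilibrium policies; by construction $\varphi_{n+1}=\mathrm{id}$ and $\varphi_i(x)=\varphi_{i+1}(F_i(x))$, so agent~$0$'s realized payoff equals $w_0\bigl(\varphi_1(F_0(x_0))\bigr)$. The core of the proof is to establish simultaneously, by backward induction on $i$ from $n+1$ down to $1$, two claims: (A) $\varphi_i(x)\in X_i$ for every $x\in X$, and (B) if $x\in X_i$ then $\varphi_i(x)=x$.

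The base case $i=n+1$ is immediate because $X_{n+1}=X$ and $\varphi_{n+1}$ is the identity. In the inductive step, claim (B) at level $i$ follows from (A) and (B) at level $i+1$ together with refined tie-breaking: for $x\in X_i\subseteq X_{i+1}$, the identity policy yields agent~$i$ the payoff $w_i(\varphi_{i+1}(x))=w_i(x)$ by (B) at level $i+1$, while any alternative $F_i(x)=y\preceq x$ yields $w_i(\varphi_{i+1}(y))$ with $\varphi_{i+1}(y)\in X_{i+1}$ and $\varphi_{i+1}(y)\preceq x$ by (A) at level $i+1$ and monotonicity of the token trajectory; the definition of $X_i$ then makes the identity a best reply, and refined tie-breaking forces agent~$i$ to select it. Claim (A) at level $i$ is a contrapositive: $\varphi_i(x)$ lies in $X_{i+1}$ by (A) at level $i+1$, and if it failed to lie in $X_i$ there would exist $x'\in X_{i+1}$ with $x'\preceq\varphi_i(x)$ and $w_i(x')>w_i(\varphi_i(x))$; by (B) at level $i+1$, the deviation $F_i(x)=x'$ would produce the final position $x'$ and payoff $w_i(x')$, contradicting the best-reply property of $F_i$.

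With (A) and (B) at level $1$ in hand, the theorem follows at once: (A) applied to $\varphi_1(F_0(x_0))$ shows that in any equilibrium the final position lies in $\{x\in X_1: x\preceq x_0\}$, bounding agent~$0$'s payoff above by $\max\{w_0(x)\mid x\in X_1,\ x\preceq x_0\}$, while for any maximizer $x^{*}$ of this set agent~$0$ can play $F_0(x_0)=x^{*}$ and (B) guarantees $\varphi_1(x^{*})=x^{*}$, realizing payoff $w_0(x^{*})$. The main obstacle I anticipate is the tightly intertwined character of (A) and (B): both must be carried through every step of the induction, and (B) genuinely needs the refined tie-breaking, since without it a later agent indifferent between the identity and a ``punishing'' move could deflate agent~$0$'s payoff below the claimed maximum; compactness of $X$ together with continuity of $\succeq$ and of the utilities then ensure that the relevant maxima and equilibrium strategies are attained.
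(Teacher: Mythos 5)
Your argument is correct and is essentially the paper's own: the paper omits the proof, saying it mimics Theorem~\ref{th_n_mediators}, and your claims (A) and (B) are exactly the order-theoretic abstractions of Lemmas~\ref{lm_upper_varepsilon_n_mediators} and~\ref{lm_lower_varepsilon_n_mediators} (the final position lies in $X_1$, and positions already in $X_i$ are left untouched by agents $i,\ldots,n$ thanks to refined tie-breaking). The one piece you elide is existence: your lower bound deduces agent $0$'s payoff from the best-reply property \emph{within a given equilibrium}, so you still need to exhibit at least one subgame perfect equilibrium with refined tie-breaking; the paper's route does this by explicitly constructing the strategies (agent $i$ plays the identity when the token is in $X_i$ and otherwise moves it to an optimal point of $\{x'\in X_i : x'\preceq x\}$, which exists by compactness and continuity) and then verifying, via your same (A)/(B) reasoning, that no agent has a profitable deviation on or off the path. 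Adding that construction would make your proof complete.
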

The proof mimics that of Theorem~\ref{th_n_mediators} and is, therefore, omitted. Similarly, to that theorem, the optimum is achieved in an equilibrium where agent~$0$ moves the token to $x_1=\argmax\big\{w_0(x)\mid x\in X_1, \ x\preceq x_0\big\}$ and the other agents do not move it anymore, i.e., a version of the revelation principle holds.
\medskip

Our persuasion model with mediators can be reduced to a version of this game where $X=\Delta(\Delta(\Omega))$, the comparison $\mu\succeq\nu$ means that $\nu$ is a mean-preserving contraction of $\mu$, utilities are given by 
$w_i(\mu)=\int_{\Delta(\Omega)}v_i(q)\dd\mu(q)$, and the initial point $x_0$ corresponds to the belief $\mu_0$ about $\omega$ induced by the realization of $\omega$, i.e., $\mu_0=\sum_{k\in \Omega} p_k\cdot \delta_{\delta_k}$.

Similarly, the models of  Li and Norman~\cite{li2018sequential} and Wu~\cite{wu2020essays}, where the senders move sequentially adding more and more information, correspond to reversing the partial order defined above; i.e.,  $\mu\succeq\nu$ if $\mu$ is a mean-preserving contraction of $\nu$. The starting point $x_0$ represents having no information about $\omega$, i.e., $\mu_0=\delta_p$.

\section{Persuasion on Networks}\label{sect_two_direct_mediators}

Our model of mediated persuasion with a sequence of mediators can be seen as an example of persuasion over networks, where the network is just the line graph. A natural next step would be to understand persuasion over rooted tree graphs, where the sender is at the root, the receivers taking actions are located  at the leaves, and the rest of the nodes are mediators transmitting the information received from predecessors to successors, while possibly garbling it. 

The main obstacle arising for general networks
is the failure of the revelation principle that underpins our analysis. We illustrate this obstacle in an example and leave the analysis of persuasion with general networks for future research.

Consider a persuasion problem with the simple tree network depicted in Figure \ref{fig:net}, where the sender $S$ communicates directly and publicly with two mediators $M_1$ and $M_2$, who in turn communicate the information to receivers $R_1$ and $R_2$, respectively.
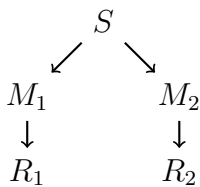
\begin{figure}[H]
    \centering
    \caption{The network}
    \label{fig:net}
    \begin{tikzpicture}
        \node(S) at (0,2) {$S$};
        \node(M1) at (-1,1) {$M_1$};
        \node(M2) at (1,1) {$M_2$};
        \node(R1) at (-1,0) {$R_1$};
        \node(R2) at (1,0) {$R_2$};
        \draw[thick, ->] (S)--(M1);
        \draw[thick,->] (S)--(M2);
        \draw[thick,->] (M1)--(R1);
        \draw[thick,->] (M2)--(R2);
    \end{tikzpicture}
\end{figure}

We will see that this example exhibits two phenomena. First, the sender's optimal payoff cannot be reached in the class of equilibria, where none of the mediators garbles the received signal; i.e., the revelation principle fails. Second, the number of signals used by the sender in the optimal equilibrium exceeds $|\Omega|$. In Example~\ref{eq_FR1_example}, we already observed that optimal persuasion with two or more mediators in a line may require more than $|\Omega|$ signals; now we see that this is the case even if there is just one mediator between the sender and each receiver and the two receivers are persuaded simultaneously.


For simplicity, we assume that the prior $p=\frac{1}{2}$, the utility of mediator $M_i$ depends only on the action of receiver $R_i$, and the sender's utility is an additively separable function of
the actions of $R_1$ and $R_2$. 
 The corresponding indirect utilities are  $v_{M_1}(p_1)$, $v_{M_2}(p_2)$ and $v^1_S(p_1)+v^2_S(p_2)$, where 
 $p_1$ and $p_2$ are the beliefs  of receivers $R_1$ and $R_2$, respectively. The indirect 
utilities are given in Figure \ref{fig:ut}. 
\begin{figure}[h]
    \centering
    \caption{The indirect utilities.
    \label{fig:ut}}

\begin{tikzpicture}[scale=0.6]
\pgfplotsset{every tick/.append style={semithick, major tick length=5pt, minor tick length=5pt, black}};
\pgfplotsset{every tick label/.append style={font=\Large}};
    \begin{axis}[
    axis on top=true,
    axis line style = ultra thick,
    axis x line=center,
    xtick={0, 0.2, 0.55, 0.75, 1},
    xlabel=${p_1}$,
    x label style={at={(1.03,0.02)}, font=\huge},
    axis y line=center,
    ylabel=$v^1_S$,
    y label style={at={(-0.35,1.1)}, font=\huge},
    no marks]
    \addplot+[smooth,ultra thick,blue,name path=A,domain=0:0.2] {1-5*x};
    \addplot+[smooth,ultra thick,blue,name path=A,domain=0.2:0.55] {0.005};
    \addplot+[smooth,ultra thick,blue,name path=A,domain=0.55:0.75] {5*(x-0.55)};
    \addplot+[smooth,ultra thick,blue,name path=A,domain=0.75:1] {1-0.5*(x-0.75)};
    \end{axis}
\end{tikzpicture}
\hskip 1cm
\begin{tikzpicture}[scale=0.6]
\pgfplotsset{every tick/.append style={semithick, major tick length=5pt, minor tick length=5pt, black}};
\pgfplotsset{every tick label/.append style={font=\Large}};
    \begin{axis}[
     axis on top=true,
    axis line style = ultra thick,
    axis x line=center,
    xtick={0, 0.25, 0.45, 0.8, 1},
    xlabel=$p_2$,
    x label style={at={(1.03,0.02)}, font=\huge},
    axis y line=center,
    ylabel=$v^2_S$,
    y label style={at={(-0.35,1.1)}, font=\huge},
    no marks]
    \addplot+[smooth,ultra thick,blue,name path=A,domain=0:0.25] {1+0.5*(x-0.25)};
    \addplot+[smooth,ultra thick,blue,name path=A,domain=0.25:0.45] {1-5*(x-0.25)};
    \addplot+[smooth,ultra thick,blue,name path=A,domain=0.45:0.8] {0.005};
    \addplot+[smooth,ultra thick,blue,name path=A,domain=0.8:1] {5*(x-0.8)};
    \end{axis}
\end{tikzpicture}
\vskip 0.5cm
\begin{tikzpicture}[scale=0.6]
\pgfplotsset{every tick/.append style={semithick, major tick length=5pt, minor tick length=5pt, black}};
\pgfplotsset{every tick label/.append style={font=\Large}};
    \begin{axis}[
     axis on top=true,
    axis line style = ultra thick,
    axis x line=center,
    xtick={0, 0.2, 0.55, 0.75, 1},
    xlabel=$p_1$,
    x label style={at={(1.03,0.02)}, font=\huge},
    axis y line=center,
    ylabel=$v_{M_1}$,
    y label style={at={(-0.35,1.1)}, font=\huge},
    no marks]
    \addplot+[smooth,ultra thick,blue,name path=A,domain=0:0.2] {1-5*x};
    \addplot+[smooth,ultra thick,blue,name path=A,domain=0.2:0.55] {0.005};
    \addplot+[smooth,ultra thick,blue,name path=A,domain=0.55:0.75] {5*(x-0.55)};
    \addplot+[smooth,ultra thick,blue,name path=A,domain=0.75:1] {1+0.5*(x-0.75)};
    \end{axis}
\end{tikzpicture}
\hskip 1cm
\begin{tikzpicture}[scale=0.6]
\pgfplotsset{every tick/.append style={semithick, major tick length=5pt, minor tick length=5pt, black}};
\pgfplotsset{every tick label/.append style={font=\Large}};
    \begin{axis}[
     axis on top=true,
    axis line style = ultra thick,
    axis x line=center,
    xtick={0, 0.25, 0.45, 0.8, 1},
    xlabel=$p_2$,
    x label style={at={(1.03,0.02)}, font=\huge},
    axis y line=center,
    ylabel=$v_{M_2}$,
    y label style={at={(-0.35,1.1)}, font=\huge},
    no marks]
    \addplot+[smooth,ultra thick,blue,name path=A,domain=0:0.25] {1-0.5*(x-0.25)};
    \addplot+[smooth,ultra thick,blue,name path=A,domain=0.25:0.45] {1-5*(x-0.25)};
    \addplot+[smooth,ultra thick,blue,name path=A,domain=0.45:0.8] {0.005};
    \addplot+[smooth,ultra thick,blue,name path=A,domain=0.8:1] {5*(x-0.8)};
    \end{axis}
\end{tikzpicture}

\end{figure}
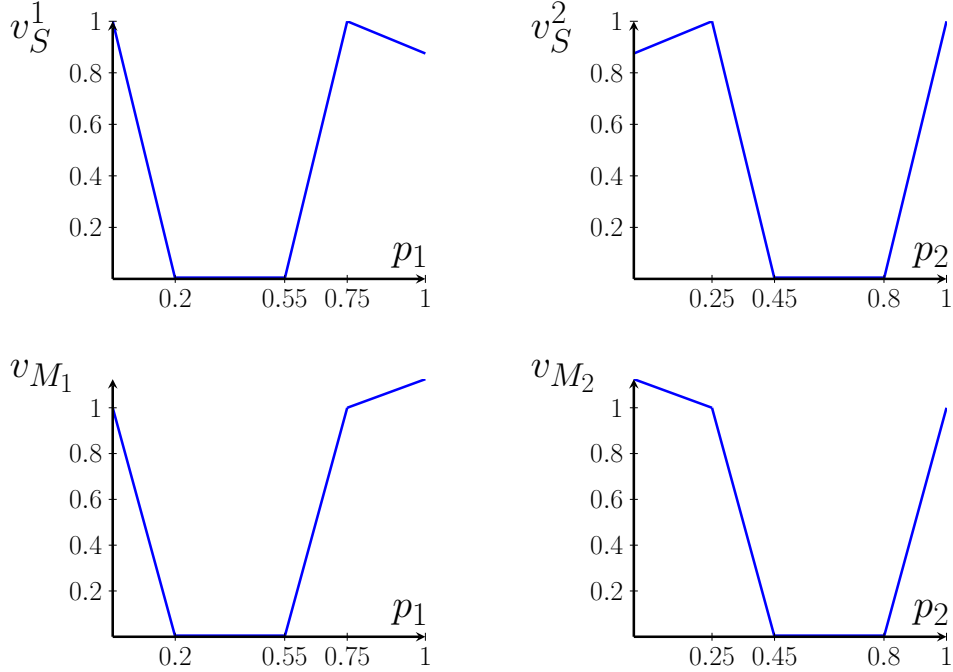

The sender's communication with {mediators $M_1$ and $M_2$} is public---i.e., both mediators observe the same signal sent by the sender---which does not allow us to split the problem into two independent persuasion problems.\footnote{If {we considered} private instead of public communication, then, due to the additive separability of sender's utility, the problem would reduce to a pair of one-mediator persuasion problems: the problem with mediator $i$ and receiver $i$ for $i=1,2$.
We enforce the interconnectedness of the two problems by assuming public signals since the alternative assumption of non-separable utility would make the problem intractable. Indeed, in the private-communication setting with non-separable utility, persuasion problems are extremely involved even without mediators; see \cite{arieli2021feasible}.} 

Let us show first {that} the sender can achieve  her ideal utility of $2$; i.e., {she can induce the belief $\p_1\in \big\{0,\,\frac{3}{4}\big\}$ of receiver $R_1$ and the belief $p_2\in \big\{\frac{1}{4},\,1\big\}$ of receiver $R_2$} with probability $1$. To do this, the sender uses a {ternary-signal} policy {that} induces one of the three belief $\big\{0,\frac{1}{2},1\big\}$ {of the mediators} with equal probabilities of $\frac{1}{3}$. {The best reply of mediator $M_1$ to this policy}  is to garble the signal by pooling together the posteriors $\frac{1}{2}$ and $1$ into the posterior of $\frac{3}{4}$ and {keep} the posterior $0$ {unchanged}. {This policy of mediator $M_1$ induces the posterior  $p_1=\frac{3}{4}$ of receiver $R_1$ with probability $\frac{2}{3}$ and the posterior~$p_1=0$, with probability~$\frac{1}{3}$. Similarly, mediator~$M_2$ pools together the posteriors $0$ and $\frac{1}{2}$ and reveals the posterior $1$ as it is; by this policy, she induces beliefs $p_2=\frac{1}{4}$ and $p_2=1$ of receiver~$2$ with probabilities $\frac{2}{3}$ and $\frac{1}{3}$.}

{One can check that mediators' policies are  best replies using the technique of price functions introduced by {Dworczak and Martini}~\cite{dworczak2019simple}. Let us sketch the argument for mediator~$M_2$; the argument for the second mediator is symmetric and, therefore, omitted. Let $\mu_1$ be the distribution of the beliefs of~$M_1$, induced by the sender's policy; i.e, $\mu_{M_1}$ is the uniform distribution over $\big\{0,\frac{1}{2},1\big\}$. For any policy of $M_1$, the induced distribution of beliefs $\mu_{R_1}$ of receiver $R_1$ is a mean-preserving contraction of $\mu_{M_1}$, i.e., $\mu_{M_1}\succeq \mu_{R_1}$; see Appendix~\ref{subsec_Blackwell}. 
 Therefore, for any convex function $\pi_{M_1}\geq v_{M_1}$ and any policy of $M_1$, her payoff cannot exceed the expectation of $\pi_{M_1}$ with respect to $\mu_{M_1}$; indeed, a mean-preserving contraction can only decrease the expected value of a convex function. Consequently, if for a given policy of the mediator we can find a function $\pi_{M_1}$ such that this upper bound coincides with the payoff guaranteed by the policy (the expected value of $v_{M_1}$), then this policy is a best reply. One can verify that the piecewise linear function 
\begin{align*}
    \pi_{M_1}(q)=\begin{cases}
    1-q, &\text{ for } q\in [0,\frac{1}{2}] \\
    2q-\frac{1}{2}, &\text{ for } q\in [\frac{1}{2},1]
    \end{cases}
\end{align*}
satisfies these requirements for the policy inducing the pair of posteriors $p_1\in\big\{0,\frac{3}{4}\big\}$ with probabilities $\frac{1}{3}$ and $\frac{2}{3}$ respectively; thus this policy is a best reply for mediator~$M_1$.}


{We stress that the described equilibrium exhibits a peculiar} phenomenon: the sender  
provides the mediators with {partial information specially tailored to their incentives} and each mediator garbles this partial information in a way that is ideal for the sender. Interestingly, this {partial} information uses {three} different signals, in contrast to persuasion with one mediator, or no mediators where binary signals are sufficient.

We can now demonstrate that binary-signal policies {are not enough for the} sender {to extract her ideal} utility of~$2$. {Consider a binary-signal policy and denote by $q\leq q'$ the pair of the mediators' posteriors induced by this policy.} We consider several cases. If $q>0$, then receiver $R_1$ {cannot get} a posterior of $0$, and hence the sender {does not} obtain her optimal payoff because of  $R_1$. Similarly, if $q'<1$,  receiver $R_2$ {cannot have} a posterior of $1$ {and so the sender again gets a suboptimal payoff.} The only remaining policy is the full-revelation one (i.e., $q=0$ and $q'=1$). {For this} policy, both mediators fully reveal the information to the receivers; this {again} results in {a suboptimal payoff to} the sender. 
\end{document}